\documentclass{article}

\usepackage{amsmath}
\usepackage{amssymb}
\usepackage{amsthm}
\usepackage{caption}



\captionsetup[figure]{font={footnotesize,bf}}

\DeclareSymbolFont{rsfscript}{OMS}{rsfs}{m}{n}
\DeclareSymbolFontAlphabet{\mathrsfs}{rsfscript}

\catcode64=11
\newcounter{global}
\theoremstyle{definition}
\newtheorem{definition}{Definition}
\theoremstyle{plain}
\newtheorem{theorem}[global]{Theorem}

\newtheorem{lemma}[global]{Lemma}
\newtheorem{corollary}[global]{Corollary}
\newtheoremstyle{note}{}{}{}{}{\itshape}{.}{.5em}{}
\theoremstyle{note}
\newtheorem{remark}{Remark}%
\newtheorem{example}{Example}%
\renewcommand\section{%
  \@startsection {section}{1}{\z@}%
  {-3.5ex \@plus -1ex \@minus -.2ex}%
  {2.3ex \@plus.2ex}%
  {\normalfont\large\bfseries}}
\def\itm#1{{\rm(\textit{\romannumeral#1})}}

\catcode64=12

\makeatletter

\def\itm#1{{\rm(\textit{\romannumeral#1})}}

\def\SD{\ensuremath{\mathrm{S}}}
\def\itm#1{{\rm(\textit{\romannumeral#1})}}

\def\@leftpar{(}
\def\@rightpar{)}
\def\bigarg{\def\@leftpar{\bigl(}\def\@rightpar{\bigr)}}
\def\mul{\@ifnextchar[{\@withmul}{\@withoutmul}}
\def\@withmul[#1]{\ensuremath{\boldsymbol{f}_{\!#1}\@ifnextchar\bgroup{\@witharg}{\relax}}}
\def\@withoutmul{\ensuremath{\boldsymbol{f}\@ifnextchar\bgroup{\@witharg}{\relax}}}
\def\shf{\@ifnextchar[{\@withshf}{\@withoutshf}}
\def\@withshf[#1]{\ensuremath{\boldsymbol{g}_{#1}\@ifnextchar\bgroup{\@witharg}{\relax}}}
\def\@withoutshf{\ensuremath{\boldsymbol{g}\@ifnextchar\bgroup{\@witharg}{\relax}}}
\def\one{\@ifnextchar[{\@withone}{\@withoutone}}
\def\@withone[#1]{\ensuremath{\boldsymbol{1}_{#1}\@ifnextchar\bgroup{\@witharg}{\relax}}}
\def\@withoutone{\ensuremath{\boldsymbol{1}\@ifnextchar\bgroup{\@witharg}{\relax}}}
\def\C{\@ifnextchar[{\@withC}{\@withoutC}}
\def\@withC[#1]{\ensuremath{\boldsymbol{c}_{#1}\@ifnextchar\bgroup{\@witharg}{\relax}}}
\def\@withoutC{\ensuremath{\boldsymbol{c}\@ifnextchar\bgroup{\@witharg}{\relax}}}
\def\@witharg#1{\@leftpar #1\@rightpar}
\def\D{\@ifnextchar[{\@withD}{\@withoutD}}
\def\@withD[#1]{\ensuremath{\mathcal{C}_{#1}\@ifnextchar\bgroup{\@witharg}{\relax}}}
\def\@withoutD{\ensuremath{\mathcal{C}\@ifnextchar\bgroup{\@witharg}{\relax}}}
\def\@witharg#1{\@leftpar #1\@rightpar}

\makeatother


\begin{document}

\title{Closure structures parameterized by systems of isotone Galois connections}

\date{\normalsize%
  Dept. Computer Science, Palacky University Olomouc}

\author{Vilem Vychodil\footnote{%
    e-mail: \texttt{vychodil@binghamton.edu},
    phone: +420 585 634 705,
    fax: +420 585 411 643}}

\maketitle

\begin{abstract}
  We study properties of classes of closure operators and closure systems
  parameterized by systems of isotone Galois connections. The
  parameterizations express stronger requirements on idempotency and monotony
  conditions of closure operators. The present approach extends previous
  approaches to fuzzy closure operators which appeared in analysis of
  object-attribute data with graded attributes and reasoning with if-then
  rules in graded setting and is also related to analogous results developed
  in linear temporal logic. In the paper, we present foundations of the
  operators and include examples of general problems in data analysis where
  such operators appear.
\end{abstract}

\section{Introduction}
In this paper we deal with closure structures which emerge in data-analytical
applications such as formal concept analysis~\cite{GaWi:FCA} of data with
fuzzy attributes~\cite{BeVy:Fcalh,KlYu:FSFL} and approximate reasoning such
as inference of fuzzy if-then rules from data~\cite{BeVy:ICFCA,BeVy:DASFAA}.
In particular, our paper generalizes and extends observations on fuzzy
closure operators and related structures. Since their inception, fuzzy
closure operators have been the subject of extensive research, the most
influential early papers on the topic
include~\cite{BaKo:Spcicfr,BiGe:CsLs,BiGe:Epco,BoDeCoKe:Ocfptbafrbs,Ch:Gcfs,Ge:Gcrfco,MaGh:Fcs,RoEsGaGo:Icoar}. As
it is usual with graded (fuzzy) generalizations of classic notions, there are
several sound ways to introduce closure operators in fuzzy setting. While
most authors agree on the conditions of extensivity and idempotency, which
take the same form as in the classic setting, the approaches differ in the
treatment of the monotony condition. There are two major approaches:
\begin{enumerate}
\item Using the bivalent notion of inclusion of fuzzy sets, where the
  monotony condition can be written as ``$A \subseteq B$ implies
  $\C{A} \subseteq \C{B}$'' and means that ``the closure of $A$ is
  \emph{fully contained} in the closure of $B$ whenever $A$ is \emph{fully
    contained} in $B$.'' The full containment of fuzzy sets (here
  denoted~``$\subseteq$'') is defined as a bivalent relation on fuzzy sets so
  that for any fuzzy sets $C$ and $D$ in the universe $X$, we put
  $C \subseteq D$ whenever for each element~$x \in X$, the degree to which
  $x$ belongs to $D$ is at least as high as the degree to which it belongs to
  $C$.
\item The second option uses a graded notion of inclusion of fuzzy sets. In
  this case, the monotony condition can be written as
  $\SD(A,B) \leq \SD(\C{A},\C{B})$, where $\leq$ is the order on the set of
  truth degrees (the usual order of reals if the scale of degrees is the real
  unit interval) and $\SD$ is a suitable graded subsethood. Both $\SD(A,B)$
  and $\SD(\C{A},\C{B})$ are general degrees of inclusion, i.e., $\SD(A,B)$
  is the degree to which $A$ is included in $B$ and analogously for
  $\SD(\C{A},\C{B})$. Hence, the monotony condition can be read ``the
  \emph{degree} to which the closure of $A$ is included in the closure of $B$
  is \emph{at least as high} as the \emph{degree} to which $A$ is included in
  $B$.'' In other words, $\SD(A,B)$ gives a lower bound of the inclusion
  degree of the closure of $A$ in the closure of~$B$. In the context of
  approximate inference, this is a desirable property because one can obtain
  a lower approximation of the degree $\SD(\C{A},\C{B})$ without the need to
  actually compute either of $\C{A}$ and $\C{B}$.
\end{enumerate}
These two basic approaches can be seen as two borderline requirements on the
monotony condition and for reasonable choices of $\SD$, which includes the
residuum-based fuzzy set inclusion proposed by Goguen~\cite{Gog:Lic}, the
second approach constitutes a stronger requirement than the first
one. Interestingly, both the approaches can be handled by a single theory
which leaves the approaches as special cases. In fact, there are several
results on fuzzy closure operators where both the approaches result as
special cases. The initial paper~\cite{Be:Fco} by Belohlavek uses a general
monotony condition which is parameterized by an order-filter on the set of
truth degrees. Conceptually different approach has been introduced
in~\cite{BeFuVy:Fcots} where the authors employ linguistic hedges, again, as
parameters of the monotony condition. In both the approaches, the two basic
notions of monotony result by chosen parameterizations---either a special
filter in case of~\cite{Be:Fco} or a special hedge in case
of~\cite{BeFuVy:Fcots}.

In our recent paper~\cite{Vy:Pasefai}, we have developed a theory of graded
if-then rules with general semantics parameterized by systems of isotone
Galois connections. In this setting, general fuzzy closure operators with
parameterized idempotency conditions appeared. Interestingly, in the approach
to attribute implications with temporal semantics introduced first
in~\cite{TrVy:TAsisaits} and developed further in~\cite{TrVy:Ltai}, we have
utilized conceptually similar structures which utilize the notion of being
closed under ``time shifts.'' In this paper, we present results showing that
most results related to closure structures in~\cite{Vy:Pasefai}
and~\cite{TrVy:Ltai} can be handled by a single theory of closure structures
defined on complete lattices which are parameterized by systems of isotone
Galois connections. In addition, we also show that the approaches
in~\cite{Be:Fco,BeFuVy:Fcots} result as special cases of the presented
formalism. Therefore, the present paper studies closure structures from the
perspective of a general class of parameterizations, makes conclusions on a
general level, and particular results like those
in~\cite{Be:Fco,BeFuVy:Fcots,TrVy:Ltai,Vy:Pasefai} can be obtained by
selecting concrete parameterizations on complete lattices.

Our paper is organized as follows. In Section~\ref{sec:prelim}, we present a
survey of notions related to closure operators and closure systems and
introduce notation which is used further in the paper. In
Section~\ref{sec:def}, we present the notions of closure operators and
closure systems parameterized by systems of isotone Galois connections and
show their relationship to parameterized closure structures studied in the
past. In Section~\ref{sec:examples}, we present details on two important
fields related to data analysis where the closure structures parameterized by
systems of isotone Galois connections appear either in the general setting or
as special cases. In Section~\ref{sec:props}, we investigate general
properties of the closure structures and their parameterizations. We give
conclusion and final remarks in Section~\ref{sec:conclusion}.

\section{Preliminaries}\label{sec:prelim}
In the paper, we use the usual notions from the theory of ordered sets
and lattices~\cite{Bir:LT,DaPr}. Recall that a partial order $\leq$ on
an non-empty set $L$ is a binary relation which is reflexive,
antisymmetric, and transitive; the pair $\langle L,\leq\rangle$ is
called a partially ordered set. Furthermore, $\langle L,\leq\rangle$
is called a complete lattice and denoted $\mathbf{L} = \langle
L,\leq\rangle$ whenever each $K \subseteq L$ has its supremum and
infimum in $L$ which are denoted by $\bigvee K$ and $\bigwedge K$,
respectively. Each complete lattice $\mathbf{L}$ has its greatest and
least elements $1 = \bigvee L = \bigwedge \emptyset$ and
$0 = \bigwedge L = \bigvee \emptyset$.

A non-empty subset $K \subseteq L$ is called an $\leq$-filter in
$\mathbf{L}$ if for every $a,b \in L$ such that $a \leq b$ we have
$b \in K$ whenever $a \in K$.

A lattice element $a \in L$ is called compact whenever $a \leq \bigvee J$
for $J \subseteq L$ implies there is a finite $J' \subseteq J$ such that
$a \leq \bigvee J'$. A complete lattice $\mathbf{L}$ is called algebraic (or
compactly generated) whenever each $a \in L$ can be expressed as
$a = \bigvee K$ where $K$ is some subset of $L$ consisting solely of compact
elements.

\subsection{Isotone Galois connections}
Let $\mathbf{L} = \langle L,\leq\rangle$ be a complete lattice. A~pair
$\langle \mul,\shf\rangle$ of operators $\mul\!: L \to L$ and $\shf\!:
L \to L$ is called an \emph{isotone Galois connection} in $\mathbf{L}$
whenever
\begin{align}
  \mul{a} \leq b \text{ if{}f } a \leq \shf{b}
  \label{eqn:gal}
\end{align}
for all $a,b \in L$; $\mul$ is called the \emph{lower adjoint} of~$\shf$ and,
dually, $\shf$ is called the \emph{upper adjoint} of~$\mul$. In an isotone
Galois connection $\langle \mul,\shf\rangle$, $\mul$ uniquely determines
$\shf$ and \emph{vice versa}. In particular,
\begin{align}
  \mul{a} &= \textstyle\bigwedge\{b \in L;\, a \leq \shf{b}\},
  \\
  \shf{b} &= \textstyle\bigvee\{a \in L;\, \mul{a} \leq b\}.
\end{align}
In our paper, we utilize the following properties which are consequences
of~\eqref{eqn:gal}. For any $a,b \in L$ and $a_i \in L$ ($i \in I$), we have:
\begin{align}
  &a \leq \shf{\mul{a}},
  \label{eqn:shfmul} \\
  &\mul{\shf{b}} \leq b
  \label{eqn:mulshf} \\
  &a \leq b \text{ implies } \mul{a} \leq \mul{b},
  \label{eqn:mon_mul} \\
  &a \leq b \text{ implies } \shf{a} \leq \shf{b},
  \label{eqn:mon_shf} \\
  &\bigarg \mul{\textstyle\bigvee\{a_i;\, i \in I\}} =
  \textstyle\bigvee\bigl\{\mul{a_i};\, i \in I\bigr\},
  \label{eqn:mul_distr} \\
  &\bigarg \shf{\textstyle\bigwedge\{a_i;\, i \in I\}} =
  \textstyle\bigwedge\bigl\{\shf{a_i};\, i \in I\bigr\}.
\end{align}
A \emph{composition} of isotone Galois connections is defined in terms
of the ordinary composition of maps. That is, for isotone Galois
connections $\langle\mul[1],\shf[1]\rangle$ and
$\langle\mul[2],\shf[2]\rangle$ in $\mathbf{L}$, we put
\begin{align}
  \langle \mul[1],\shf[1]\rangle \circ \langle\mul[2],\shf[2]\rangle
  = \langle \mul[1]\mul[2],\shf[2]\shf[1]\rangle,
  \label{eqn:compose}
\end{align}
where $\mul[1]\mul[2]$ is a composed operator such that
$\mul[1]\mul[2](a) = \mul[1](\mul[2](a))$ for all $a \in L$ and
analogously for $\shf[2]\shf[1]$. It is easy to see that the
composition is again an isotone Galois connection in $\mathbf{L}$.

We denote by $\one[L]$ the identity operator in $L$, i.e., $\one[L]{a}
= a$ for all $a \in L$. Obviously, $\langle \one[L],\one[L]\rangle$ is
an isotone Galois connection in $\mathbf{L}$. If $\mathbf{L}$ is clear
from the context, we write just $\one$ to denote $\one[L]$. As a
consequence of the fact that $\circ$ is associative and
$\langle\one,\one\rangle$ is neutral with respect to $\circ$, it
follows that all isotone Galois connections in $\mathbf{L}$ together
with binary operation $\circ$ defined by~\eqref{eqn:compose} and
$\langle\one,\one\rangle$ form a monoid.

\subsection{Residuated lattices and related structures}
Residuated lattices~\cite{DiWa} are structures based on lattices which are
enriched by a couple of binary operations satisfying an additional condition.
The structures are widely used in fuzzy and substructural
logics~\cite{CiHaNo1,CiHaNo2,GaJiKoOn:RL} and include structures of degrees
based on left-continuous triangular norms~\cite{KMP:TN} which are popular in
applications~\cite{KlYu:FSFL}. From the point of view of isotonne Galois
connections, such structures can be seen as lattices endowed by particular
systems of isotone Galois connections.

An ordered structure
$\mathbf{L} = \langle L,\leq,\otimes,\rightarrow,0,1\rangle$ is called a
(\emph{commutative integral}) \emph{complete residuated lattice} whenever
$\langle L,\leq\rangle$ is a complete lattice with $0$ and $1$ being the least
and greatest elements, respectively, $\otimes$ is a binary operation in $L$
(called a \emph{multiplication}) which is associative and commutative with $1$
being its neutral element, and $\rightarrow$ is a binary operation in $L$
(called a \emph{residuum}) such that
\begin{align}
  a \otimes b \leq c \text{ if{}f } b \leq a \rightarrow c
  \label{eqn:adj}
\end{align}
holds for all $a,b,c \in L$. In the terminology of residuated lattices,
\eqref{eqn:adj} is called the \emph{adjointness property.} In fuzzy
logics~\cite{EsGoNo:Fotfltc,Haj:MFL}, $\otimes$ and $\rightarrow$ are used as
truth functions of (fuzzy) logical connectives ``conjunction'' and
``implication'', respectively. Alternatively, residuated lattices can be
introduced in terms of isotonne Galois connections as follows. Let
$\mathbf{L} = \langle L,\leq\rangle$ be a complete lattice and let $\otimes$
and $\rightarrow$ be binary operations in $L$ such that $\otimes$ is
associative, commutative, and neutral with respect to $1$. In this setting,
for any $a \in L$, we define maps $\mul[a\otimes]\!: L \to L$ and
$\shf[a\rightarrow]\!: L \to L$ by
\begin{align}
  \mul[a\otimes](b) &= a \otimes b,
  \label{eqn:mul_L}
  \\
  \shf[a\rightarrow](b) &= a \rightarrow b
  \label{eqn:shf_L}
\end{align}
for all $b \in L$. Now,
$\mathbf{L} = \langle L,\leq,\otimes,\rightarrow,0,1\rangle$ is a complete
residuated lattice if and only if
$\langle\mul[a\otimes],\shf[a\rightarrow]\rangle$ is an isotone Galois
connection for any $a \in L$. Indeed, $a \otimes b \leq c$ if{}f
$\mul[a\otimes]{b} \leq c$ if{}f $b \leq \shf[a\rightarrow]{c}$ if{}f
$b \leq a \rightarrow c$ provided that
$\langle\mul[a\otimes],\shf[a\rightarrow]\rangle$ is an isotone Galois
connection and, conversely, $\mul[a\otimes]{b} \leq c$ if{}f
$a \otimes b \leq c$ if{}f $b \leq a \rightarrow c$ if{}f
$b \leq \shf[a\rightarrow]{c}$ provided that~\eqref{eqn:adj} holds.

In the paper, we are going to use general complete as well as concrete
residuated lattices that are used in problem domains related to data analysis
and approximate inference. Namely, we are going to use residuated lattices of
fuzzy sets. Such structures can be understood as direct powers of complete
residuated lattices that serve as structures of truth degrees. In a more
detail, let $\mathbf{L} = \langle L,\leq,\otimes,\rightarrow,0,1\rangle$ be a
complete residuated lattice and let $Y \ne \emptyset$ be a universe
set. Then, the \emph{complete residuated lattice of $\mathbf{L}$-sets in}
(\emph{the universe}) $Y$ is a structure
$\mathbf{L}^Y = \langle
L^Y,\subseteq,\otimes_Y,\rightarrow_Y,0_Y,1_Y\rangle$, where
\begin{itemize}\parskip=0pt%
\item
  $L^Y$ is the set of all maps of the form $A\!: Y \to L$,
  each $A \in L^Y$ is called an $\mathbf{L}$-fuzzy set
  (shortly, an $\mathbf{L}$-set) in $Y$, see~\cite{Gog:LFS,Gog:Lic};
\item
  $\subseteq$ is a binary relation on $L^Y$ such that $A \subseteq B$
  whenever $A(y) \leq B(y)$ for all $y \in Y$;
\item
  $\otimes_Y$ and $\rightarrow_Y$ are defined componentwise using $\otimes$
  and $\rightarrow$, i.e., for any $A,B \in L^Y$ and $y \in Y$, we have
  \begin{align}
    (A \otimes_Y B)(y) &= A(y) \otimes B(y),
    \\
    (A \rightarrow_Y B)(y) &= A(y) \rightarrow B(y);
  \end{align}
  If there is no danger of confusion and both $\mathbf{L}$ and $Y$ are
  clear from context, we write just $\otimes$ and $\rightarrow$ to denote
  $\otimes_Y$ and $\rightarrow_Y$;
\item
  $0_Y \in L^Y$ and $1_Y \in L^Y$ so that $0_Y(y) = 0$
  and $1_Y(y) = 1$ for all $y \in Y$.
\end{itemize}
It follows from the basic properties of complete residuated lattices that
$\mathbf{L}^Y$ is a complete residuated lattice (the class of complete
lattices is closed under arbitrary direct
products~\cite{Bir:LT,Wec:UAfCS}). In examples, we are going to use the usual
notation for writing $\mathbf{L}$-sets, e.g., $\{y^a,z^b\}$ represents
$A\!: \{y,z\} \to L$ such that $A(y) = a$ and $A(z) = b$.

The relation $\subseteq$ in $\mathbf{L}^Y$ is called
a \emph{subsethood} (or \emph{inclusion relation}) of $\mathbf{L}$-sets and
$A \subseteq B$ expresses the fact that $A$ is fully contained in $B$.
If $A(y)$ and $B(y)$ are interpreted as degrees to which $y \in Y$ belongs
to $A$ and $B$ respectively, it follows that $A \subseteq B$ means that,
for each $y \in Y$, the degree to which $y$ belongs to $B$ is at least as
high as the degree to which $y$ belongs to $A$. In addition to this type
of ``full subsethood'', it is reasonable to define its graded counterpart
which expresses general degrees~\cite{Gog:LFS,Gog:Lic}
to which one $\mathbf{L}$-set is included
in another one. For $A,B \in L^Y$, we put
\begin{align}
  \SD(A,B) &= \textstyle\bigwedge\{A(y) \rightarrow B(y);\, y \in Y\}
  \label{eqn:S}
\end{align}
and call $\SD(A,B) \in L$ the \emph{subsethood degree} (of $A$ in $B$). That
is, $\SD$ defined by~\eqref{eqn:S} is a map of the form
$\SD\!: L^Y \times L^Y \to L$. It can be easily seen that $A \subseteq B$
if{}f $A(y) \rightarrow B(y) = 1$ for all $y \in Y$ which is if{}f
$\SD(A,B) = 1$.

In addition to our understanding of $\otimes$ and $\rightarrow$ as operations
on $\mathbf{L}$ and $\mathbf{L}^Y$ (which are, in fact,
$\otimes_Y$ and $\rightarrow_Y$), we also consider $\otimes$ and $\rightarrow$
as maps $\otimes\!: L \times L^Y \to L^Y$ and
$\rightarrow: L \times L^Y \to L^Y$ given by
\begin{align}
  (a \otimes A)(y) &= a \otimes A(y),
  \label{eqn:a_multiple}
  \\
  (a \rightarrow A)(y) &= a \rightarrow A(y),
  \label{eqn:a_shift}
\end{align}
for all $A \in L^Y$ and $a \in L$. For fixed $a \in L$, $a \otimes A$,
and $a \rightarrow A$ given by~\eqref{eqn:a_multiple}
and~\eqref{eqn:a_shift} are called the \emph{$a$-multiple} and
\emph{$a$-shift} of $A$, respectively. Note that $\rightarrow$ is not
commutative, i.e., one may also consider $A \rightarrow a$ but this
operation is not relevant to our investigation.

Using~\eqref{eqn:adj}, \eqref{eqn:S}, \eqref{eqn:a_multiple}, and
\eqref{eqn:a_shift}, we derive the following property which is extensively
used in our paper:
\begin{align}
  a \otimes B \subseteq C
  \text{ if{}f }
  B \subseteq a \rightarrow C
  \text{ if{}f }
  a \leq \SD(B,C)
  \label{eqn:alt_adj}
\end{align}
for any $a \in L$ and any $B,C \in L^Y$.
Indeed, $a \otimes B \subseteq C$ if{}f $a \otimes B(y) \leq C(y)$
for any $y \in Y$, i.e., by~\eqref{eqn:adj},
we get $B(y) \leq a \rightarrow C(y)$ for any $y \in Y$, meaning
$B \subseteq a \rightarrow C$. Furthermore, using the commutativity
of $\otimes$ and~\eqref{eqn:adj} twice, it follows that
$B(y) \leq a \rightarrow C(y)$ for any $y \in Y$ if{}f
$a \leq B(y) \rightarrow C(y)$ for any $y \in Y$ which is if{}f
$a \leq \SD(B,C)$ because $\SD(B,C)$ is the greatest lower bound of
all $B(y) \rightarrow C(y)$ where $y \in Y$.

\subsection{Closure structures}\label{sec:clos_old}
In this section, we recall two influential types of closure operators defined
in complete residuated lattices of $\mathbf{L}$-sets. The operators differ
in definitions of the isotony condition which is in both cases stronger
than the ordinary isotony.

The approach in~\cite{BeFuVy:Fcots} introduced $\mathbf{L}^*$-closure
operators as operators on $L^Y$ whose isotony condition is
parameterized by truth-stressing linguistic
hedges~\cite{Za:FL,Za:Afstilh}. In our setting a \emph{truth-stressing
  linguistic hedge} (shortly, a hedge) on $\mathbf{L}$ is any map
${}^*\!: L \to L$ such that
\begin{align}
  1^* &= 1, \label{ts:1}
  \\
  a^* &\leq a, \label{ts:sub}
  \\
  (a \rightarrow b)^* &\leq a^* \rightarrow b^*, \label{ts:mon*}
\end{align}
for all $a,b \in L$.

\begin{remark}\label{rem:hajek}
  The conditions~\eqref{ts:1}--\eqref{ts:mon*} are a subset of conditions of
  truth-stressing hedges as they were studied by H\'ajek in~\cite{Haj:Ovt}
  and interpreted as truth functions of logical connectives ``very true.'' As
  it is argued in~\cite{Haj:Ovt}, \eqref{ts:1}--\eqref{ts:mon*} may be
  considered natural properties of (truth functions of) logical connectives
  ``very true'': \eqref{ts:1} says that ``$1$ (degree denoting the full
  truth) is very true''; \eqref{ts:sub} reflects the fact that if a
  proposition is considered ``very true'', then it is also considered
  ``true'' (i.e., being ``very true'' is at least as strong as being
  ``true'') and~\eqref{ts:mon*} says that from a very true implication with a
  very true antecedent, one derives a very true consequent. Notice
  that~\eqref{ts:1} and~\eqref{ts:mon*} yield
  \begin{align}
    \text{if } a \leq b \text{ then } a^* \leq b^*
    \label{ts:mon}
  \end{align}
  for all $a,b \in L$. In fact, in this paper, we rely on \eqref{ts:mon},
  i.e., all the considerations can be made for hedges
  satisfying~\eqref{ts:1}, \eqref{ts:sub}, and~\eqref{ts:mon}.
\end{remark}

Most of the applications in relational data analysis as well as other
results where closure structures parameterized by hedges appear rely
on \emph{idempotent truth-stressing hedges}, i.e., ${}^*$ is in addition
required to satisfy 
\begin{align}
  a^* &= a^{**} \label{ts:idm}
\end{align}
for all $a \in L$.

\begin{remark}\label{rem:idm*}
  The idempotency, as a property of hedges, is disputable. In case of the
  truth-stressing hedges, it is widely accepted that ``very very true'' is
  strictly stronger an emphasis than ``very
  true''~\cite{Za:FL,Za:Afstilh,Za:lv1,Za:lv2,Za:lv3}. Indeed, $x^* = x^2$ is
  often taken as a truth function for a hedge if $\mathbf{L}$ is defined on
  the real unit interval using a left-continuous triangular norm acting as
  $\otimes$ which is a typical choice in applications. In contrast,
  \cite{HaHa:AhGfl} argues that in case of truth-depressing (or
  truth-intensifying) hedges~\cite{Vy:Tdhbl}, the idempotency may seem
  natural. For instance, it is not so frequent that ``more or less'' is
  chained in order to further lessen the impact of an
  utterance. Nevertheless, major results in relational data
  analysis~\cite{BeVy:Fcalh} and logics of if-then
  rules~\cite{BeVy:Addg1,BeVy:Addg2,BeVy:FHLI,BeVy:FHLII} where the hedges
  were employed use~\eqref{ts:idm} and the condition cannot be dropped
  without losing important properties of the studied concepts. We refer
  readers interested in treatment of hedges in fuzzy logics to recent
  paper~\cite{EsGoNo:Hedges} and~\cite[Section~VII--3]{CiHaNo2}.
\end{remark}

Fuzzy closure operators in $\mathbf{L}^Y$ parameterized by hedges have been
introduced in~\cite{BeFuVy:Fcots} as follows:

\begin{definition}\label{def:Cts}
  Let $\mathbf{L}$ be a complete residuated lattice, ${}^*$ be a hedge
  satisfying~\eqref{ts:1}--\eqref{ts:mon*}, $Y$ be a non-empty universe set.
  An operator $\C\!: L^Y \to L^Y$ is called
  an $\mathbf{L}^*$-closure operator~\cite{BeFuVy:Fcots} in $\mathbf{L}^Y$
  whenever
  \begin{align}
    A &\subseteq \C{A},
    \label{cl:ext} \\
    \SD(A,B)^* &\leq \SD(\C{A},\C{B}),
    \label{cl:*mon} \\
    \C{\C{A}} &\subseteq \C{A},
    \label{cl:idm}
  \end{align}
  for all $A,B \in L^Y$.
\end{definition}

Recall that $\SD$ in~\eqref{cl:*mon} is the residuum-based graded
subsethood defined by~\eqref{eqn:S}. Therefore, taking into account
the interpretation of ${}^*$ as a truth function of connective ``very
true'', \eqref{cl:*mon} can be read: ``If it is very true that $A$ is
included in $B$, then $\C{A}$ is included in $\C{B}$''. A finer
reading, which involves explicit references to degrees is: ``The
degree to which $\C{A}$ is included in $\C{B}$ is at least as high as
the degree to which it is very true that $A$ is included in $B$''.

Let us stress that $\mathbf{L}^*$-closure operators are indeed
ordinary closure operators, i.e., \eqref{cl:*mon} implies the ordinary
isotony condition. Indeed, if $A \subseteq B$, then $\SD(A,B) = 1$,
i.e., applying~\eqref{ts:1}, we get that
\begin{align}
  \text{if } A \subseteq B \text{ then } \C{A} \subseteq \C{B}.
  \label{cl:mon}
\end{align}
In general, \eqref{cl:*mon} is stronger than~\eqref{cl:mon} at it is
shown in~\cite[Remark 2.3]{BeFuVy:Fcots}. The conditions become equivalent if
${}^*$ is the so-called \emph{globalization}~\cite{TaTi:Gist}, i.e.,
if for any $a \in L$, we have
\begin{align}
  a^* &=
  \begin{cases}
    1, &\text{if } a = 1, \\
    0, &\text{otherwise.}
  \end{cases}
  \label{eqn:glob}
\end{align}
If ${}^*$ is globalization, then $A \nsubseteq B$ gives $\SD(A,B)^* =
0$ because $\SD(A,B) < 1$. Therefore, \eqref{cl:*mon} is indeed
equivalent to \eqref{cl:mon}.

The second important approach~\cite{Be:Fco} to parameterized closure
operators on complete residuated lattices of $\mathbf{L}$-sets, which
predates the approach by hedges, is based on the notion of an $\leq$-filter
which has been recalled in the beginning of Section~\ref{sec:prelim}.

\begin{definition}\label{def:LK}
  Let $\mathbf{L}$ be a complete residuated lattice, $K \subseteq L$ be an
  $\leq$-filter in~$\mathbf{L}$, and $Y$ be a non-empty universe set. An
  operator $\C\!: L^Y \to L^Y$ is called an $\mathbf{L}_K$-closure
  operator~\cite{Be:Fco} in $\mathbf{L}^Y$ whenever \eqref{cl:ext},
  \eqref{cl:idm}, and the following condition:
  \begin{align}
    \text{if } \SD(A,B) \in K \text{ then } \SD(A,B) \leq \SD(\C{A},\C{B})
    \label{cl:Kmon}
  \end{align}
  are satisfied for all $A,B \in L^Y$.
\end{definition}

Analogously as in the case of $\mathbf{L}^*$-closure operators, we can see
that condition~\eqref{cl:Kmon} implies~\eqref{cl:mon} because $1 \in K$ and
thus $A \subseteq B$ yields $\SD(A,B) = 1 \in K$ which gives
$\SD(\C{A},\C{B}) = 1$, meaning $\C{A} \subseteq \C{B}$. In addition,
\eqref{cl:Kmon} becomes~\eqref{cl:mon} for $K = \{1\}$ which is trivially an
$\leq$-filter in $\mathbf{L}$. Hence, following the notation in
Definition~\ref{def:LK}, we refer to the ordinary closure operators in
$\mathbf{L}^Y$ as to the $\mathbf{L}_{\{1\}}$-closure operators,
see~\cite{Be:Fco}.

\section{$\mathbf{S}$-closure operators}\label{sec:def}%
In this section, we introduce $S$-closure operators and related
notions. Furthermore, we focus on their relationship to the established
closure operators parameterized by hedges and $\leq$-filters which have been
recalled in Section~\ref{sec:prelim}. We show that both the parameterizations
may be seen as special cases of the parameterization by systems of isotone
Galois connections. In addition, the studied operators can be seen as
generalizations of those appearing in~\cite{TrVy:TAsisaits,Vy:Pasefai}.

\begin{definition}\label{def:S-clop}
  Let $\mathbf{L} = \langle L,\leq\rangle$ be a complete lattice. Any set $S$
  of isotone Galois connections in $\mathbf{L}$ such that
  $\langle \one,\one\rangle \in S$ is called an $L$-parameterization.  An
  operator $\C\!: \langle L,\leq\rangle \to \langle L,\leq\rangle$ is called
  an $S$-closure operator in $\langle L, \leq\rangle$ whenever
  \begin{align}
    a &\leq \C{a},
    \label{S:ext} \\
    a \leq b &\text{ implies } \C{a} \leq \C{b},
    \label{S:mon} \\
    \C{\shf{\C{a}}} &\leq \shf{\C{a}},
    \label{S:idm}
  \end{align}
  are satisfied for all $a,b \in L$ and all $\langle \mul,\shf\rangle \in S$.
  If $S$ is closed under compositions, i.e., if
  $\mathbf{S} = \langle S,\circ,\langle\one,\one\rangle\rangle$ is a monoid,
  we call it an $\mathbf{L}$-parameterization and, in addition, $\C$
  satisfying \eqref{S:ext}--\eqref{S:idm} is called an $\mathbf{S}$-closure
  operator.
\end{definition}

It is evident that for $S = \{\langle\one,\one\rangle\}$, all
$S$-closure operators in $\mathbf{L}$ are exactly the classic closure
operators in $\mathbf{L}$. Indeed, for $\langle\mul,\shf\rangle$ being
$\langle\one,\one\rangle$, \eqref{S:idm} becomes
\begin{align}
  \C{\C{a}} &\leq \C{a}
  \label{S:ord_idm}
\end{align}
which together with~\eqref{S:ext} yields the classic idempotency
condition. It is easy to see that without the requirement of
$\langle\one,\one\rangle \in S$, the operator may not be idempotent in
general.

We present examples of $\mathbf{S}$-closure operators which appear in several
fields of relational data analysis in Section~\ref{sec:examples}. In the rest
of this section, we investigate the relationship of $S$-closure operators to
the operators in residuated lattices of $\mathbf{L}$-sets summarized in
Section~\ref{sec:clos_old}.

\begin{theorem}\label{th:L*_S}
  Let $\mathbf{L}$ be a complete residuated lattice, $Y \ne \emptyset$, and
  let ${}^*$ be an idempotent truth-stressing hedge. An operator
  $\C\!: L^Y \to L^Y$ is an $\mathbf{L}^*$-closure operator if{}f it is an
  $\mathbf{L}_{\{1\}}$-closure operator and
  \begin{align}
    \C{a^* \rightarrow \C{A}} &\subseteq a^* \rightarrow \C{A}
    \label{cl:*shf}
  \end{align}
  holds for all $a \in L$ and $A \in L^Y$.
\end{theorem}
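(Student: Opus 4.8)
The plan is to treat the two implications separately, after a reduction. The common core of both notions—extensivity \eqref{cl:ext} and idempotency \eqref{cl:idm}—is shared, and ordinary monotony \eqref{cl:mon} already follows from \eqref{cl:*mon} (as recorded after Definition~\ref{def:Cts}). Since an $\mathbf{L}_{\{1\}}$-closure operator is precisely an extensive, monotone, idempotent operator, the whole statement reduces to showing that, in the presence of \eqref{cl:ext}, \eqref{cl:idm}, and \eqref{cl:mon}, condition \eqref{cl:*mon} is equivalent to \eqref{cl:*shf}. Throughout, the workhorse is the adjunction \eqref{eqn:alt_adj}, which lets me pass freely between an inclusion $B \subseteq a \rightarrow C$ and the degree inequality $a \leq \SD(B,C)$.

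For the forward direction I assume $\C$ is an $\mathbf{L}^*$-closure operator and derive \eqref{cl:*shf}. Fixing $a \in L$ and $A \in L^Y$, I abbreviate $B = a^* \rightarrow \C{A}$. By \eqref{eqn:alt_adj} the (trivial) inclusion $B \subseteq a^* \rightarrow \C{A}$ reads $a^* \leq \SD(B,\C{A})$. I then apply \eqref{cl:*mon} to the pair $(B,\C{A})$ and use idempotency of $\C$ in the form $\C{\C{A}} = \C{A}$ (a consequence of \eqref{cl:ext} and \eqref{cl:idm}) to obtain $\SD(B,\C{A})^* \leq \SD(\C{B},\C{A})$. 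Combining these with monotony \eqref{ts:mon} and idempotency \eqref{ts:idm} of the hedge yields $a^* = a^{**} \leq \SD(B,\C{A})^* \leq \SD(\C{B},\C{A})$, and a final application of \eqref{eqn:alt_adj} turns $a^* \leq \SD(\C{B},\C{A})$ back into $\C{B} \subseteq a^* \rightarrow \C{A}$, which is exactly \eqref{cl:*shf}.

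For the backward direction I assume $\C$ is an $\mathbf{L}_{\{1\}}$-closure operator satisfying \eqref{cl:*shf} and derive \eqref{cl:*mon}. The conceptual key is that \eqref{cl:*shf} together with \eqref{cl:ext} says the $\mathbf{L}$-set $a^* \rightarrow \C{B}$ is a fixed point of $\C$, i.e.\ a closed set. Given $A,B$, I put $a = \SD(A,B)$; by \eqref{eqn:alt_adj} the target \eqref{cl:*mon}, namely $a^* \leq \SD(\C{A},\C{B})$, is equivalent to $\C{A} \subseteq a^* \rightarrow \C{B}$. Since the right-hand side is closed and $\C$ is monotone, it suffices to check $A \subseteq a^* \rightarrow \C{B}$, i.e.\ $a^* \leq \SD(A,\C{B})$; and this I obtain from the chain $a^* \leq a = \SD(A,B) \leq \SD(A,\C{B})$, using \eqref{ts:sub} for the first step and the extensivity $B \subseteq \C{B}$ for the last.

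I expect the forward direction to be the delicate step: it is where both idempotencies genuinely enter, and one must apply \eqref{cl:*mon} to the carefully chosen pair $(a^* \rightarrow \C{A},\,\C{A})$ rather than to $(A,\,\cdot)$, and then re-collapse $a^{**}$ to $a^*$ at precisely the right moment. The backward direction, by contrast, hinges only on the single clean observation that \eqref{cl:*shf} manufactures closed sets, after which it is a routine fixed-point argument relying on \eqref{ts:sub} alone from the hedge axioms.
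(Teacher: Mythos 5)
Your proposal is correct and follows essentially the same route as the paper's proof: the forward direction applies \eqref{cl:*mon} to the pair $(a^* \rightarrow \C{A},\,\C{A})$ starting from the trivial inclusion and collapses $a^{**}$ to $a^*$ via \eqref{ts:idm}, and the backward direction instantiates \eqref{cl:*shf} at $a = \SD(A,B)$ and reduces to $A \subseteq \SD(A,B)^* \rightarrow \C{B}$ via \eqref{ts:sub}, extensivity, and \eqref{cl:mon}. The only cosmetic difference is that in the backward direction you stay with inclusions ($\C{A} \subseteq \C{a^* \rightarrow \C{B}} \subseteq a^* \rightarrow \C{B}$, then adjunction) where the paper passes to subsethood degrees earlier and invokes antitony of $\SD$ in its first argument—these are the same argument in different clothing.
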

\begin{proof}
  Let $\C$ be an $\mathbf{L}^*$-closure operator. Obvously,
  \eqref{cl:mon} is a particular case of~\eqref{cl:*mon},
  see~\cite[Remark 2.3]{BeFuVy:Fcots}.
 As a consequence,
  $\C$ is an $\mathbf{L}_{\{1\}}$-closure operator.
  Therefore, it remains to show
  that $\C$ satisfies~\eqref{cl:*shf}. We prove this fact using~\eqref{ts:idm},
  \eqref{cl:*mon}, and \eqref{cl:idm}. Take any $a \in L$ and $A \in L^Y$.
  Using~\eqref{eqn:alt_adj}, from
  \begin{align*}
    a^* \rightarrow \C{A} &\subseteq a^* \rightarrow \C{A}
  \end{align*}
  it follows that 
  \begin{align*}
    a^* \otimes (a^* \rightarrow \C{A}) &\subseteq \C{A}
  \end{align*}
  which, again by~\eqref{eqn:alt_adj}, gives
  \begin{align*}
    a^* \leq \SD(a^* \rightarrow \C{A}, \C{A}).
  \end{align*}
  Now, using \eqref{ts:mon}, \eqref{ts:idm}, \eqref{cl:*mon},
  and \eqref{cl:idm}, the last inequality yields
  \begin{align*}
    a^* &= a^{**}
    \\
    &\leq \SD(a^* \rightarrow \C{A}, \C{A})^*
    \\
    &\leq \SD(\C{a^* \rightarrow \C{A}}, \C{\C{A}})
    \\
    &\leq \SD(\C{a^* \rightarrow \C{A}}, \C{A}),
  \end{align*}
  i.e., $a^* \leq \SD(\C{a^* \rightarrow \C{A}}, \C{A})$.
  Now, using~\eqref{eqn:alt_adj} twice, we get
  \begin{align*}
    a^* \otimes \C{a^* \rightarrow \C{A}} \subseteq \C{A}
  \end{align*}
  and finally
  \begin{align*}
    \C{a^* \rightarrow \C{A}} \subseteq a^* \rightarrow \C{A},
  \end{align*}
  i.e., \eqref{cl:*shf} is satisfied.

  Conversely, suppose that $\C$ is an $\mathbf{L}_{\{1\}}$-closure operator
  such that \eqref{cl:*shf} holds. We show that $\C$ satisfies~\eqref{cl:*mon}.
  Take any $A,B \in L^Y$.
  Observe that using~\eqref{cl:*shf} for $a = \SD(A,B)$, we get
  \begin{align*}
    \C{\SD(A,B)^* \rightarrow \C{B}} \subseteq \SD(A,B)^* \rightarrow \C{B}
  \end{align*}
  from which, owing to~\eqref{eqn:alt_adj}, it follows that
  \begin{align*}
    \SD(A,B)^* \leq \SD(\C{\SD(A,B)^* \rightarrow \C{B}},\C{B}).
  \end{align*}
  Since the graded subsethood is antitone in the first argument,
  we obtain \eqref{cl:*mon} as a consequence of the previous inequality
  and the fact that $\C{A} \subseteq \C{\SD(A,B)^* \rightarrow \C{B}}$
  which is indeed true: Using~\eqref{ts:sub}, \eqref{cl:ext},
  and the isotony of $\SD$
  in the second argument, we get
  \begin{align*}
    \SD(A,B)^* \leq \SD(A,B) \leq \SD(A,\C{B})
  \end{align*}
  and thus $A \subseteq \SD(A,B)^* \rightarrow \C{B}$ by~\eqref{eqn:alt_adj}.
  Using~\eqref{cl:mon}, we further get
  \begin{align*}
    \C{A} \subseteq \C{\SD(A,B)^* \rightarrow \C{B}}
  \end{align*}
  As a consequence,
  \begin{align*}
    \SD(A,B)^* \leq
    \SD(\C{\SD(A,B)^* \rightarrow \C{B}},\C{B}) \leq \SD(\C{A},\C{B})
  \end{align*}
  which proves~\eqref{cl:*mon}.
\end{proof}

\begin{figure}
  \centering%
  \begin{tabular}{|r|cccc|}
    \hline
    $\otimes$ & $0$ & $a$ & $b$ & $1$ \\
    \hline
    $0$ & $0$ & $0$ & $0$ & $0$ \\
    $a$ & $0$ & $0$ & $0$ & $a$ \\
    $b$ & $0$ & $0$ & $0$ & $b$ \\
    $1$ & $0$ & $a$ & $b$ & $1$ \\
    \hline
  \end{tabular}
  \qquad
  \begin{tabular}{|r|cccc|}
    \hline
    $\rightarrow$ & $0$ & $a$ & $b$ & $1$ \\
    \hline
    $0$ & $1$ & $1$ & $1$ & $1$ \\
    $a$ & $b$ & $1$ & $1$ & $1$ \\
    $b$ & $b$ & $b$ & $1$ & $1$ \\
    $1$ & $0$ & $a$ & $b$ & $1$ \\
    \hline
  \end{tabular}
  \caption{Operations $\otimes$ and $\rightarrow$
    used in Remark~\ref{rem:L*_S}\,(a).}
  \label{fig:ts_cntr1}
\end{figure}

\begin{figure}
  \centering%
  \begin{tabular}{|r|cccc|}
    \hline
    $\otimes$ & $0$ & $a$ & $b$ & $1$ \\
    \hline
    $0$ & $0$ & $0$ & $0$ & $0$ \\
    $a$ & $0$ & $0$ & $0$ & $a$ \\
    $b$ & $0$ & $0$ & $b$ & $b$ \\
    $1$ & $0$ & $a$ & $b$ & $1$ \\
    \hline
  \end{tabular}
  \qquad
  \begin{tabular}{|r|cccc|}
    \hline
    $\rightarrow$ & $0$ & $a$ & $b$ & $1$ \\
    \hline
    $0$ & $1$ & $1$ & $1$ & $1$ \\
    $a$ & $b$ & $1$ & $1$ & $1$ \\
    $b$ & $a$ & $a$ & $1$ & $1$ \\
    $1$ & $0$ & $a$ & $b$ & $1$ \\
    \hline
  \end{tabular}
  \caption{Operations $\otimes$ and $\rightarrow$
    used in Remark~\ref{rem:L*_S}\,(b).}
  \label{fig:ts_cntr2}
\end{figure}

\begin{remark}\label{rem:L*_S}
  (a) Observe that in the only-if part of the proof of Theorem~\ref{th:L*_S},
  we have utilized the isotony~\eqref{ts:mon} of ${}^*$ instead of the
  stronger condition~\eqref{ts:mon*}. Also, in the only-if part of the proof,
  we have not utilized the subdiagonality
  condition~\eqref{ts:sub}. Neither~\eqref{ts:1} nor~\eqref{ts:mon}
  nor~\eqref{ts:idm} can be omitted because otherwise the only-if part of the
  assertion would not hold. This is obvious in the case of~\eqref{ts:1}. In
  order to see that~\eqref{ts:mon} is necessary, suppose that $\mathbf{L}$ is
  defined on a four-element linearly ordered set $L = \{0, a, b, 1\}$ so that
  $0 < a < b < 1$. It can be easily checked that $\otimes$ and $\rightarrow$
  defined by the tables in Figure~\ref{fig:ts_cntr1} satisfy the adjointness
  property. Moreover, we can consider ${}^*\!: L \to L$ such that $c^* = c$
  for $c \in \{0,a,1\}$ and $b^* = 0$. Obviously, ${}^*$
  satisfies~\eqref{ts:1}, \eqref{ts:idm}, and it does not
  satisfy~\eqref{ts:mon}. Now, take $Y = \{y\}$ and consider an operator
  $\C\!: L^{\{y\}} \to L^{\{y\}}$ such that
  \begin{align}
    \C{A} &=
    \begin{cases}
      \{y^0\}, &\text{for } A = \{y^0\}, \\
      \{y^1\}, &\text{otherwise.}
    \end{cases}
    \label{eqn:ex_C_top}
  \end{align}
  By a routine check, it follows that $\C$ satisfies~\eqref{cl:ext},
  \eqref{cl:*mon}, and \eqref{cl:idm}. On the other hand, \eqref{cl:*shf} is
  not satisfied. Indeed, for $A = \{y^0\}$, we have
  \begin{align*}
    \C{a^* \rightarrow \C{\{y^0\}}} &= 
    \C{a \rightarrow \{y^0\}} \\
    &=
    \C{\{y^{a \rightarrow 0}\}} \\
    &=
    \C{\{y^b\}} \\
    &=
    \{y^1\} \\
    &\nsubseteq
    \{y^b\}
    =
    a^* \rightarrow \C{\{y^0\}}.
  \end{align*}
  Therefore, at least~\eqref{ts:mon} is necessary in order to establish the
  assertion of Theorem~\ref{th:L*_S}.

  (b) Analogously as in (a), we may proceed for an operator ${}^*$ which is
  not idempotent. Consider $\mathbf{L}$ with $\otimes$ and $\rightarrow$
  defined as in Figure~\ref{fig:ts_cntr2}. Furthermore, consider ${}^*$ such
  that $0^* = a^* = 0$, $b^* = a$, and $1^* = 1$. Considering the same order
  of elements in $L$ as before, $\mathbf{L}$ is a complete residuated lattice
  and ${}^*$ satisfies~\eqref{ts:1}, \eqref{ts:sub},
  and~\eqref{ts:mon*}. Thus, ${}^*$ is a truth-stressing hedge which is not
  idempotent because $b^{**} \ne b^{*}$. Furthermore,
  $\C\!: L^{\{y\}} \to L^{\{y\}}$ defined by
  \begin{align*}
    \C{A} &=
    \begin{cases}
      A, &\text{if } A = \{y^0\} \text{ or } A = \{y^a\}, \\
      \{y^1\}, &\text{otherwise.}
    \end{cases}
  \end{align*}
  is an $\mathbf{L}^*$-closure operator. It is easy to see that
  \begin{align*}
    \C{b^* \rightarrow \C{\{y^0\}}}
    &=
    \C{a \rightarrow \C{\{y^0\}}} \\
    &=
    \C{a \rightarrow \{y^0\}} \\
    &=
    \C{\{y^b\}} \\
    &=
    \{y^1\} \\
    &\nsubseteq
    \{y^b\} = 
    b^* \rightarrow \C{\{y^0\}},
  \end{align*}
  i.e., \eqref{cl:*shf} does not hold.

  (c) In the if-part of Theorem~\ref{th:L*_S}, the subdiagonality
  condition~\eqref{ts:sub} is also necessary. If $\mathbf{L}$ is the
  two-valued Boolean algebra with $L = \{0,1\}$ and $0 < 1$, then for
  a map ${}^*\!: L \to L$ such that $0^* = 1^* = 1$ it follows that
  the identity operator $\one[L^{\!\{y\}}]$ satisfies~\eqref{cl:*shf}. On the
  contrary, we have
  \begin{align*}
    \SD(\{y^1\},\{y^0\})^* = 0^* = 1 \nleq 0 =
    \SD(\{y^1\},\{y^0\}) =
    \SD(\one[L^{\!\{y\}}]{\{y^1\}},\one[L^{\!\{y\}}]{\{y^0\}}),
  \end{align*}
  i.e., $\one[L^{\!\{y\}}]$ violates~\eqref{cl:*mon}.
\end{remark}

As a consequence of Theorem~\ref{th:L*_S}, we get a corollary presented below
which states that $\mathbf{L}^*$-closure operators are particular
$\mathbf{S}$-closure operators provided that ${}^*$ is idempotent. In the
corollary, we use lower and upper adjoints written as $\mul[a \otimes]$ and
$\shf[a \rightarrow]$ for some $a \in L$ which are in fact defined
analogously as~\eqref{eqn:a_multiple} and \eqref{eqn:a_shift}, respectively.
That is, by putting
\begin{align}
  \mul[a \otimes]{B} &= a \otimes B, \\
  \shf[a \rightarrow]{B} &= a \rightarrow B,
\end{align}
for all $a \in L$ and $B \in L^Y$ and considering~\eqref{eqn:alt_adj}
together with Theorem~\ref{th:L*_S}, we have the following observation:

\begin{corollary}\label{cor:L*_S}
  Let $\mathbf{L}$ be a complete residuated lattice, $Y \ne \emptyset$, and
  let ${}^*$ be an idempotent truth-stressing hedge.  An operator
  $\C\!: L^Y \to L^Y$ is an $\mathbf{L}^*$-closure operator if{}f it is an
  $\mathbf{S}$-closure operator for
  $S = \{\langle \mul[a^* \otimes], \shf[a^* \rightarrow]\rangle;\, a \in
  L\}$.  \qed
\end{corollary}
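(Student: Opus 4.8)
The plan is first to confirm that $S = \{\langle \mul[a^* \otimes], \shf[a^* \rightarrow]\rangle;\, a \in L\}$ really is an $\mathbf{L}$-parameterization of $\mathbf{L}^Y$, and then to read off the defining conditions of an $S$-closure operator and match them against Theorem~\ref{th:L*_S}. That each pair $\langle \mul[a^* \otimes], \shf[a^* \rightarrow]\rangle$ is an isotone Galois connection in $\mathbf{L}^Y$ is immediate from~\eqref{eqn:alt_adj}, which asserts exactly $a^* \otimes B \subseteq C$ if{}f $B \subseteq a^* \rightarrow C$. Since $1^* = 1$ by~\eqref{ts:1} and $1$ is neutral for $\otimes$, the instance $a = 1$ gives $\langle \mul[1 \otimes], \shf[1 \rightarrow]\rangle = \langle \one, \one\rangle$, so $\langle \one, \one\rangle \in S$ and $S$ is at least an $L$-parameterization.

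Next I would show $S$ is closed under composition, making $\mathbf{S}$ a monoid. By~\eqref{eqn:compose}, the composite of $\langle \mul[a^* \otimes], \shf[a^* \rightarrow]\rangle$ with $\langle \mul[b^* \otimes], \shf[b^* \rightarrow]\rangle$ has lower adjoint $B \mapsto (a^* \otimes b^*) \otimes B$ and upper adjoint $B \mapsto (a^* \otimes b^*) \rightarrow B$, using associativity of $\otimes$ and the identity $u \rightarrow (v \rightarrow B) = (u \otimes v) \rightarrow B$. This composite lies in $S$ exactly when $a^* \otimes b^*$ has the form $c^*$, so the crux is the equality $a^* \otimes b^* = (a^* \otimes b^*)^*$. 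One inequality is just~\eqref{ts:sub}. For the other I would first prove $a^* \otimes b^* \leq (a \otimes b)^*$ for all $a,b$: from $b \leq a \rightarrow (a \otimes b)$ one gets $b^* \leq (a \rightarrow (a \otimes b))^* \leq a^* \rightarrow (a \otimes b)^*$ by~\eqref{ts:mon} and~\eqref{ts:mon*}, and~\eqref{eqn:adj} then yields the claim. Substituting $a^*,b^*$ for $a,b$ and using idempotency~\eqref{ts:idm} turns this into $a^* \otimes b^* \leq (a^* \otimes b^*)^*$, giving the equality; taking $c = a^* \otimes b^*$ places the composite in $S$. Together with associativity of $\circ$ and the neutral $\langle \one, \one\rangle$, this makes $\mathbf{S}$ a monoid, so $S$ is an $\mathbf{L}$-parameterization.

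Finally I would unfold Definition~\ref{def:S-clop} for this $S$. Conditions~\eqref{S:ext} and~\eqref{S:mon} coincide verbatim with~\eqref{cl:ext} and~\eqref{cl:mon}, while for the connection $\langle \mul[a^* \otimes], \shf[a^* \rightarrow]\rangle$ the idempotency condition~\eqref{S:idm} reads $\C{a^* \rightarrow \C{A}} \subseteq a^* \rightarrow \C{A}$, i.e.\ exactly~\eqref{cl:*shf}; its instance $a = 1$ specializes to $\C{\C{A}} \subseteq \C{A}$, i.e.~\eqref{cl:idm}. Hence $\C$ is an $\mathbf{S}$-closure operator for this $S$ if{}f it satisfies~\eqref{cl:ext}, \eqref{cl:mon}, \eqref{cl:idm}, and~\eqref{cl:*shf}, that is, if{}f it is an $\mathbf{L}_{\{1\}}$-closure operator satisfying~\eqref{cl:*shf}; by Theorem~\ref{th:L*_S} this is equivalent to $\C$ being an $\mathbf{L}^*$-closure operator. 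I expect the closure-under-composition step to be the main obstacle, as it is the place where idempotency~\eqref{ts:idm} is decisively used to keep the composite within $S$; the remaining identifications of conditions are routine.
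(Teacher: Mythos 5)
Your proposal is correct and follows essentially the same route as the paper: the equivalence is read off from Theorem~\ref{th:L*_S} after identifying, via~\eqref{eqn:alt_adj}, the pairs $\langle \mul[a^* \otimes], \shf[a^* \rightarrow]\rangle$ as isotone Galois connections for which condition~\eqref{S:idm} is exactly~\eqref{cl:*shf} (with the instance $a=1$ giving~\eqref{cl:idm}), and the closure of $S$ under $\circ$ rests on the identity $a^* \otimes b^* = (a^* \otimes b^*)^*$ for idempotent hedges. The only difference is that you prove this identity from scratch (via $a^* \otimes b^* \leq (a \otimes b)^*$, substitution, and~\eqref{ts:idm}), whereas the paper cites it from the literature, so your argument is self-contained where the paper is not.
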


Let us note that in Corollary~\ref{cor:L*_S}, the considered operator is
indeed an $\mathbf{S}$-closure operator and not just $S$-closure operator,
i.e., $S$ is closed under the composition of isotone Galois connections.
Indeed, for any $a,b \in L$ and $c = a^* \otimes b^*$, we have
$\mul[a^* \otimes]\mul[b^* \otimes] = \mul[c^* \otimes ]$ and
$\shf[b^* \rightarrow]\shf[a^* \rightarrow] = \shf[c^* \rightarrow ]$ which
follows from the fact that $a^* \otimes b^* = (a^* \otimes b^*)^*$ provided
that the hedge ${}^*$ is idempotent, see~\cite[Lemma 2]{BeVy:MRAP},
cf. also~\cite[Example 1\,(b)]{Vy:Pasefai}

We now turn our attention to the relationship to $\mathbf{L}_K$-closure
operators. Analogously as in Theorem~\ref{th:L*_S}, we may establish the
following characterization.

\begin{theorem}\label{th:LK_S}
  An operator $\C\!: L^Y \to L^Y$ is an $\mathbf{L}_K$-closure
  operator if{}f it is an $\mathbf{L}_{\{1\}}$-closure operator and
  \begin{align}
    \C{a \rightarrow \C{A}} &\subseteq a \rightarrow \C{A}
    \label{cl:Kshf}
  \end{align}
  holds for all $a \in K$ and $A \in L^Y$.
\end{theorem}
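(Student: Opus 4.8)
The plan is to mimic the structure of the proof of Theorem~\ref{th:L*_S}, exploiting the parallel between the hedge condition~\eqref{cl:*shf} and the filter condition~\eqref{cl:Kshf}. The key simplification is that here we have no hedge ${}^*$ to account for; the role played by idempotency of ${}^*$ and by condition~\eqref{ts:idm} is replaced by the fact that $K$ is an $\leq$-filter. I would split the proof into the two implications.

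For the only-if direction, assume $\C$ is an $\mathbf{L}_K$-closure operator. That it is an $\mathbf{L}_{\{1\}}$-closure operator is immediate since~\eqref{cl:Kmon} implies~\eqref{cl:mon}, as already noted just after Definition~\ref{def:LK}. To establish~\eqref{cl:Kshf}, I would fix $a \in K$ and $A \in L^Y$ and run the same adjunction calculation as in Theorem~\ref{th:L*_S}, but without the hedge. Starting from the trivial inclusion $a \rightarrow \C{A} \subseteq a \rightarrow \C{A}$, two applications of~\eqref{eqn:alt_adj} give $a \leq \SD(a \rightarrow \C{A}, \C{A})$. Here is where the filter enters: since $a \in K$ and $a \leq \SD(a \rightarrow \C{A}, \C{A})$, the filter property forces $\SD(a \rightarrow \C{A}, \C{A}) \in K$, so~\eqref{cl:Kmon} applies and yields $\SD(a \rightarrow \C{A}, \C{A}) \leq \SD(\C{a \rightarrow \C{A}}, \C{\C{A}})$. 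Combining with~\eqref{cl:idm} (through isotony of $\SD$ in the second argument) gives $a \leq \SD(\C{a \rightarrow \C{A}}, \C{A})$, and two further applications of~\eqref{eqn:alt_adj} deliver~\eqref{cl:Kshf}.

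For the converse, assume $\C$ is an $\mathbf{L}_{\{1\}}$-closure operator satisfying~\eqref{cl:Kshf}, and take $A, B \in L^Y$ with $\SD(A,B) \in K$. I would set $a = \SD(A,B)$ and invoke~\eqref{cl:Kshf} at this $a$, which is legitimate precisely because $\SD(A,B) \in K$; this gives $\C{\SD(A,B) \rightarrow \C{B}} \subseteq \SD(A,B) \rightarrow \C{B}$, hence by~\eqref{eqn:alt_adj} the bound $\SD(A,B) \leq \SD(\C{\SD(A,B) \rightarrow \C{B}}, \C{B})$. As in Theorem~\ref{th:L*_S}, I would then show $\C{A} \subseteq \C{\SD(A,B) \rightarrow \C{B}}$: from $\SD(A,B) \leq \SD(A,\C{B})$ (isotony of $\SD$ in the second argument together with~\eqref{cl:ext}) we get $A \subseteq \SD(A,B) \rightarrow \C{B}$ by~\eqref{eqn:alt_adj}, and then~\eqref{cl:mon} gives the desired inclusion. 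Antitony of $\SD$ in the first argument then yields $\SD(A,B) \leq \SD(\C{A}, \C{B})$, which is~\eqref{cl:Kmon}.

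The computation is essentially routine once one has Theorem~\ref{th:L*_S} as a template, so the main point to get right is the \emph{placement of the filter hypothesis}. In the hedge case, idempotency and~\eqref{ts:mon} were used to transport $a^*$ through the $\SD$-inequalities; here the filter closure-upward property is the exact analogue and must be applied at the single step where we pass from $a \leq \SD(a \rightarrow \C{A}, \C{A})$ to an application of~\eqref{cl:Kmon}. I would take care to check that the element fed into~\eqref{cl:Kmon} genuinely lies in $K$ in each direction: in the only-if part this is guaranteed because $a \in K$ and $K$ is upward closed, while in the converse it is guaranteed by the standing assumption $\SD(A,B) \in K$. No condition on $\C$ beyond those stated is needed, and unlike Corollary~\ref{cor:L*_S} no idempotency of any auxiliary operator is required.
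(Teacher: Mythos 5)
Your proposal is correct and takes essentially the same route as the paper's own proof, which likewise adapts the argument of Theorem~\ref{th:L*_S} by replacing the role of hedge idempotency with the upward closure of the $\leq$-filter $K$ at exactly the step you identify (passing from $a \leq \SD(a \rightarrow \C{A},\C{A})$ to an application of~\eqref{cl:Kmon}). Your placement of the hypotheses also matches the paper's: the filter property is used only in the only-if part, while the converse needs nothing beyond the standing assumption $\SD(A,B) \in K$ to invoke~\eqref{cl:Kshf}.
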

\begin{proof}
  Most parts of the proof use similar arguments as in the proof of
  Theorem~\ref{th:L*_S}. Therefore, we only comment on the technical
  differences.  Suppose that $\C$ is an $\mathbf{L}_K$-closure
  operator. Clearly, $\C$ satisfies~\eqref{cl:mon} because $1 \in
  K$. Take any $a \in K$ and $A \in L^Y$. By~\eqref{eqn:alt_adj},
  we get
  \begin{align*}
    a \leq \SD(a \rightarrow \C{A}, \C{A})
  \end{align*}
  and thus $\SD(a \rightarrow \C{A}, \C{A}) \in K$ because $K$ is
  an $\leq$-filter. Therefore, applying~\eqref{cl:Kmon}, \eqref{cl:idm},
  and the isotony of $\SD$ in the second argument, it follows that
  \begin{align*}
    a 
    &\leq
    \SD(a \rightarrow \C{A}, \C{A}) \\
    &\leq
    \SD(\C{a \rightarrow \C{A}}, \C{\C{A}}) \\
    &\leq
    \SD(\C{a \rightarrow \C{A}}, \C{A}).
  \end{align*}
  Thus, using~\eqref{eqn:alt_adj}, we obtain~\eqref{cl:Kshf}.

  Conversely, assuming that $\C$ is an $\mathbf{L}_{\{1\}}$-closure operator
  satisfying~\eqref{cl:Kshf}, we prove that \eqref{cl:Kmon} holds.  Take
  $A,B \in L^Y$ such that $\SD(A,B) \in K$. As a particular case
  of~\eqref{cl:Kshf}, we get
  \begin{align*}
    \C{\SD(A,B) \rightarrow \C{B}} &\subseteq \SD(A,B) \rightarrow \C{B}
  \end{align*}
  and thus, using~\eqref{eqn:alt_adj}, we get
  \begin{align*}
    \SD(A,B) \leq \SD(\C{\SD(A,B) \rightarrow \C{B}},\C{B}).
  \end{align*}
  In order to finish the proof, we show that
  $\C{A} \subseteq \C{\SD(A,B) \rightarrow \C{B}}$ which results by
  applying~\eqref{cl:ext}, \eqref{cl:mon} together with the isotony of $\SD$
  in the second argument, and~\eqref{eqn:alt_adj} in much the same way as in
  the proof of Theorem~\ref{th:L*_S}.
\end{proof}

\begin{remark}
  Analogously as in the case of Theorem~\ref{th:L*_S}, we can show that the
  condition of $K$ being an $\leq$-filter in Theorem~\ref{th:LK_S} is
  essential. Indeed, take $\mathbf{L}$ with $L = \{0,a,b,1\}$ as in
  Remark~\ref{rem:L*_S} and let $\otimes$ and $\rightarrow$ be defined as in
  Figure~\ref{fig:ts_cntr1}. Furthermore, take $K = \{a,1\}$. Obviously, $K$
  is not an $\leq$-filter because $b \not\in K$. Now, an operator
  $\C: L^{\{y\}} \to L^{\{y\}}$ defined by \eqref{eqn:ex_C_top}
  satisfies~\eqref{cl:ext}, \eqref{cl:idm}, and~\eqref{cl:Kmon}. On the other
  hand, it does not satisfy~\eqref{cl:Kshf}. Indeed, observe that for
  $a \in K$ and $A = \{y^0\}$, we have
  \begin{align*}
    \C{a \rightarrow \C{\{y^0\}}}
    &=
    \C{a \rightarrow \{y^0\}} \\
    &=
    \C{\{y^b\}} \\
    &=
    \{y^1\} \\ 
    &\nsubseteq
    \{y^b\} = a \rightarrow \C{\{y^0\}},
  \end{align*}
  i.e., if $K$ is not an $\leq$-filter, then $\C$ does not
  satisfy~\eqref{cl:Kshf} in general. Note that in the if-part of
  Theorem~\ref{th:LK_S}, no assumptions on $K$ are needed.
\end{remark}

\begin{corollary}\label{cor:LK_S}
  Let $\mathbf{L}$ be a complete residuated lattice, $Y \ne \emptyset$, and
  let $K$ be an $\leq$-filter. An operator $\C\!: L^Y \to L^Y$ is an
  $\mathbf{L}_K$-closure operator if{}f it is an $S$-closure operator for
  $S = \{\langle \mul[a \otimes], \shf[a \rightarrow]\rangle;\, a \in K\}$.
  \qed
\end{corollary}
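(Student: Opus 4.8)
The plan is to deduce the statement directly from Theorem~\ref{th:LK_S}, by checking that for the particular set $S = \{\langle \mul[a \otimes], \shf[a \rightarrow]\rangle;\, a \in K\}$ being an $S$-closure operator in $\mathbf{L}^Y$ is \emph{literally} the conjunction of the conditions that Theorem~\ref{th:LK_S} uses to characterize $\mathbf{L}_K$-closure operators, namely~\eqref{cl:ext}, \eqref{cl:mon}, \eqref{cl:idm}, and~\eqref{cl:Kshf} for all $a \in K$ (recall that an $\mathbf{L}_{\{1\}}$-closure operator is exactly an operator satisfying~\eqref{cl:ext}, \eqref{cl:mon}, and~\eqref{cl:idm}). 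First I would confirm that $S$ is a legitimate $L^Y$-parameterization in the sense of Definition~\ref{def:S-clop}. Since $K$ is an $\leq$-filter we have $1 \in K$, and the pair indexed by $1$ is $\langle \mul[1 \otimes], \shf[1 \rightarrow]\rangle = \langle \one, \one\rangle$ because $1 \otimes B = B$ and $1 \rightarrow B = B$ for every $B \in L^Y$; hence $\langle \one, \one\rangle \in S$, as required.

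Next I would verify that every pair in $S$ is an isotone Galois connection in $\mathbf{L}^Y$. This is immediate from~\eqref{eqn:alt_adj}: for fixed $a \in K$ and all $B, C \in L^Y$ we have $a \otimes B \subseteq C$ if{}f $B \subseteq a \rightarrow C$, which is exactly the defining equivalence~\eqref{eqn:gal} for $\langle \mul[a \otimes], \shf[a \rightarrow]\rangle$ read in the complete lattice $\mathbf{L}^Y = \langle L^Y, \subseteq\rangle$. With this in place, the core of the argument is a translation of conditions. For this $S$, condition~\eqref{S:ext} is exactly~\eqref{cl:ext} and condition~\eqref{S:mon} is exactly~\eqref{cl:mon}. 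The idempotency condition~\eqref{S:idm} is demanded for \emph{every} $\langle \mul, \shf\rangle \in S$, and here it splits into two kinds of instances: applied to the connection indexed by a general $a \in K$ it reads $\C{\shf[a \rightarrow]{\C{A}}} \subseteq \shf[a \rightarrow]{\C{A}}$, i.e.\ $\C{a \rightarrow \C{A}} \subseteq a \rightarrow \C{A}$, which is precisely~\eqref{cl:Kshf}; and applied to the distinguished connection $\langle \one, \one\rangle$ (the instance $a = 1$) it reads $\C{\one{\C{A}}} \subseteq \one{\C{A}}$, that is $\C{\C{A}} \subseteq \C{A}$, the ordinary idempotency~\eqref{cl:idm}.

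Thus the full bundle of $S$-closure conditions is exactly \eqref{cl:ext}, \eqref{cl:mon}, \eqref{cl:idm}, together with~\eqref{cl:Kshf} for all $a \in K$, and invoking Theorem~\ref{th:LK_S}---whose characterization of $\mathbf{L}_K$-closure operators is this very bundle---finishes both implications of the equivalence simultaneously.

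The only point demanding care, and where I expect a reader to stumble, is the role of the identity connection: it is precisely the membership $\langle \one, \one\rangle \in S$ (guaranteed by $1 \in K$ since $K$ is an $\leq$-filter) that recovers the ordinary idempotency~\eqref{cl:idm}, which Theorem~\ref{th:LK_S} lists separately from~\eqref{cl:Kshf}; without it one would obtain only the shift-closedness conditions and could lose idempotency, exactly as remarked after Definition~\ref{def:S-clop}. I would also flag why the statement asserts only an $S$-closure operator rather than an $\mathbf{S}$-closure operator: the composition of the connections indexed by $a, b \in K$ is the one indexed by $a \otimes b$, since $\mul[a \otimes]\mul[b \otimes]{B} = a \otimes (b \otimes B) = (a \otimes b) \otimes B$, and a general $\leq$-filter need not be closed under $\otimes$, so $S$ need not be closed under $\circ$.
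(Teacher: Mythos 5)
Your proposal is correct and matches the paper's intended argument: the paper states Corollary~\ref{cor:LK_S} with no separate proof precisely because, as you show, \eqref{eqn:alt_adj} makes each $\langle \mul[a \otimes], \shf[a \rightarrow]\rangle$ an isotone Galois connection in $\mathbf{L}^Y$, and the $S$-closure conditions \eqref{S:ext}--\eqref{S:idm} then translate verbatim into the bundle \eqref{cl:ext}, \eqref{cl:mon}, \eqref{cl:idm}, and \eqref{cl:Kshf} characterized by Theorem~\ref{th:LK_S}, with $1 \in K$ supplying $\langle\one,\one\rangle \in S$ and hence \eqref{cl:idm}. Your closing remarks on why the identity connection matters and why $S$ fails to be closed under $\circ$ also agree with the paper's own discussion following the corollary.
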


Unlike~\ref{cor:L*_S}, $S$ in Corollary~\ref{cor:LK_S} is not closed under
compositions in general. As a consequence, there are $\mathbf{L}_K$-operators
which are not $\mathbf{S}$-closure operators for
$S = \{\langle \mul[a \otimes], \shf[a \rightarrow]\rangle;\, a \in K\}$.  For
instance, if $\mathbf{L}$ is the standard \L ukasiewicz algebra, i.e., if $L$
is the real unit interval with its natural ordering and $\otimes$ is given by
$a \otimes b = \max(a+b-1, 0)$, then $K = [0.5,1]$ is obviously an
$\leq$-filter but
$S = \{\langle \mul[a \otimes], \shf[a \rightarrow]\rangle;\, a \geq 0.5\}$ is
not closed under $\circ$ because, e.g.,
$\langle \mul[0.5 \otimes], \shf[0.5 \rightarrow]\rangle \circ \langle
\mul[0.5 \otimes], \shf[0.5 \rightarrow]\rangle = \langle \mul[0 \otimes],
\shf[0 \rightarrow]\rangle \not\in K$.

\section{Examples}\label{sec:examples}
In the following subsections, we show two important areas of reasoning
with if-then rules and extracting information from relational data where
$S$-closure operators naturally appear.

\subsection{Enriched Armstrong-style inference systems}
In this subsection, we show an important example of $\mathbf{S}$-closure
operators induced by enriched Armstrong-style inference
systems~\cite{Arm:Dsdbr} defined on algebraic lattices. We introduce general
inference systems and consider operators which map each element of a lattice
to its syntactic closure which is defined by the inference system. We prove
the operator is indeed an $\mathbf{S}$-closure operator and show that several
inference systems that appeared in the past in different contexts have
syntactic closures which may be viewed as special cases of the general one.

Recall the notions of compactness of lattice elements and (complete)
algebraic lattices from Section~\ref{sec:prelim} and assume that
$\mathbf{L} = \langle L,\leq\rangle$ is a complete algebraic lattice with
$K \subseteq L$ being the set of all its compact elements. We use this
assumption throughout the entire section and we are not going to repeat it.
At this point, we may view $\mathbf{L}$ as an abstract system of elements
which can be used to form particular formulas. Namely, any ordered pair
$\langle a,b\rangle \in K \times K$ is called a (well-formed) \emph{formula}
and for better readability we are going to denote it $a \Rightarrow b$ and
read it ``if $a$ then $b$.'' The intended meaning of $a \Rightarrow b$ is to
express that if a lattice element is at least as high as~$a$, then it is also
at least as high as~$b$. We introduce an inference system for the rules which
resembles the famous Armstrong system for reasoning with functional
dependencies~\cite{Arm:Dsdbr,Mai:TRD}. In our setting, the system is enriched
by new inference rules defined by lower adjoints coming from particular
$L$-parameterizations.

\begin{definition}\label{def:infsyst}
  Let $S$ be an $L$-parameterization. If for any $a \in K$ and any
  $\langle\mul,\shf\rangle \in S$, we have $\mul{a} \in K$, then $S$
  is called a compact $L$-parameterization. Under this assumption, an
  $S$-inference system for $\mathbf{L}$ is a set of the following
  inference rules:
  \begin{enumerate}\parskip=0pt%
  \item
    (from no assumptions) infer $a \vee b \Rightarrow b$,
  \item
    from $a \Rightarrow b$ and $b \vee c \Rightarrow d$ infer
    $a \vee c \Rightarrow d$,
  \item
    from $a \Rightarrow b$ infer $\mul{a} \Rightarrow \mul{b}$
  \end{enumerate}
  for any $a,b,c,d \in K$ and any $\langle\mul,\shf\rangle \in S$.  Let
  $\Sigma$ be a set of formulas. An $S$-proof of $a \Rightarrow b$ by
  $\Sigma$ is any finite sequence $\delta_1,\ldots,\delta_n$ of formulas
  such that $\delta_n$ is $a \Rightarrow b$ and for each $i = 1,\ldots,n$,
  we have that
  \begin{itemize}\parskip=0pt%
  \item
    $\delta_i \in \Sigma$, or
  \item
    $\delta_i$ results from some of the formulas $\delta_j$ ($j < i$) using
    a single application of one of the inference rules 1.--3.
  \end{itemize}
  We put $\Sigma \vdash a \Rightarrow b$ and say that $a \Rightarrow b$ is
  $S$-provable by $\Sigma$ whenever there is an $S$-proof of
  $a \Rightarrow b$ by $\Sigma$.
\end{definition}

\begin{remark}\label{rem:dedprop}
  (a) Note that the inference rules of an $S$-inference system produce only
  well-formed formulas. Indeed, this is a consequence of the following
  facts. First, if $a,b \in K$, then $a \vee b \in K$, i.e., the supremum of
  compact elements in $\mathbf{L}$ is a compact element in $\mathbf{L}$.
  Thus, the inference rules 1. and 2. produce well-formed formulas. Second,
  if $a \in K$, then $\mul{a} \in K$ because we assume that $S$ is a compact
  $L$-parameterization. This shows that 3. always produces a well-formed
  formula. Also note that 1. is in fact an axiom schema saying that each
  formula of the form $a \Rightarrow b$ where $b \leq a$ is $S$-provable by
  any set of formulas.

  (b) The classic system of Armstrong rules can be seen as an inference
  system which is a particular case of that in Definition~\ref{def:infsyst}:
  Take a finite $R$ which is the set of attributes of a relation scheme and
  consider the finite complete lattice
  $\mathbf{L} = \langle2^R,\subseteq\rangle$ of all subsets of $R$. Each
  element of $\mathbf{L}$ is compact because $2^R$ is finite, i.e., $K = L$.
  Then, formulas in our sense are expressions of the form $A \Rightarrow B$,
  where $A,B \in 2^R$ which agrees with the type of formulas used in the
  Armstrong system. Furthermore, the inference rules 1. and 2. become
  \begin{itemize}\parskip=0pt%
  \item
    infer $A \cup B \Rightarrow B$, and
  \item
    from $A \Rightarrow B$ and $B \cup C \Rightarrow D$
    infer $A \cup C \Rightarrow D$,
  \end{itemize}
  which are equivalent to the classic Armstrong rules and are called
  the axiom and pseudo-transitivity in~\cite{Mai:TRD}. Finally,
  for $S = \{\langle\one,\one\rangle\}$, which is trivially a compact
  $L$-parameterization, the last inference rule becomes trivial: from
  $A \Rightarrow B$ infer $A \Rightarrow B$ and can be disregarded.

  (c) As in the case of the classic Armstrong system, we can easily
  derive the following basic properties of $S$-provability~\cite{Mai:TRD}:
  \begin{itemize}\parskip=0pt%
  \item 
    weakening: if $\Sigma \vdash a \Rightarrow c$, then
    $\Sigma \vdash a \vee b \Rightarrow c$,
  \item 
    transitivity: if $\Sigma \vdash a \Rightarrow b$ and
    $\Sigma \vdash b \Rightarrow c$, then
    $\Sigma \vdash a \Rightarrow c$,
  \item addition: if $\Sigma \vdash a \Rightarrow b$ and
    $\Sigma \vdash a \Rightarrow c$, then
    $\Sigma \vdash a \Rightarrow b \vee c$,
  \end{itemize}
  for any $\Sigma$ and any $a,b,c \in K$.
\end{remark}

We now focus on particular $S$-closure operators which are induced by
$S$-inference systems and given sets of formulas and show their basic
properties.

\begin{definition}
  Let $S$ be a compact $L$-parameterization of $\mathbf{L}$.
  For any set $\Sigma$ of formulas, we define operators 
  $\D[\Sigma]\!: L \to 2^K$ and $\C[\Sigma]\!: L \to L$ by
  \begin{align}
    \D[\Sigma]{a} &=
    \{b \in K;\, \Sigma \vdash c \Rightarrow b
    \text{ for some } c \in K \text{ such that } c \leq a\},
    \label{cl:DED}
    \\
    \C[\Sigma]{a} &= \textstyle\bigvee \D[\Sigma]{a}
    \label{cl:ded}
  \end{align}
  for all $a \in L$; $\C[\Sigma]{a}$ is called the syntactic $S$-closure
  of $a \in L$ (using $\Sigma$).
\end{definition}

The following technical observation shows that~\eqref{cl:ded}
can be simplified provided that $a$ is a compact element of $\mathbf{L}$.

\begin{lemma}\label{le:C_com}
  If $a \in K$, then
  $\C[\Sigma]{a} = \bigvee\{b \in K;\, \Sigma \vdash a \Rightarrow b\}$.
\end{lemma}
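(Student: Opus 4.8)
The plan is to prove the stronger statement that the two sets whose suprema appear on each side actually coincide; once we have $\D[\Sigma]{a} = \{b \in K;\, \Sigma \vdash a \Rightarrow b\}$, the claimed identity follows immediately from the definition~\eqref{cl:ded} of $\C[\Sigma]$ by applying $\bigvee$ to both sides.

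First I would dispose of the easy inclusion. Suppose $b \in K$ and $\Sigma \vdash a \Rightarrow b$. Since $a \in K$ by hypothesis and trivially $a \leq a$, the element $a$ itself serves as the witness $c$ required in~\eqref{cl:DED}, so $b \in \D[\Sigma]{a}$. This shows $\{b \in K;\, \Sigma \vdash a \Rightarrow b\} \subseteq \D[\Sigma]{a}$.

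The reverse inclusion is where the compactness hypothesis does its work. I would take any $b \in \D[\Sigma]{a}$; by~\eqref{cl:DED} there is some $c \in K$ with $c \leq a$ and $\Sigma \vdash c \Rightarrow b$. The key step is to raise the antecedent from $c$ up to $a$ by invoking the weakening property of $S$-provability recorded in Remark~\ref{rem:dedprop}(c), which yields $\Sigma \vdash c \vee a \Rightarrow b$. Because $c \leq a$ gives $c \vee a = a$, this is precisely $\Sigma \vdash a \Rightarrow b$, whence $b$ lies in $\{b \in K;\, \Sigma \vdash a \Rightarrow b\}$. Combining the two inclusions gives the set equality, and taking suprema finishes the argument.

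I do not expect a genuine obstacle here: the only points requiring care are that the hypothesis $a \in K$ is exactly what makes $a \Rightarrow b$ a well-formed formula, so that the right-hand set is meaningful and $a$ is an admissible antecedent in the easy inclusion, and that weakening must be applied in the correct direction, enlarging the antecedent while keeping the consequent fixed. Both of these are already available from Definition~\ref{def:infsyst} and Remark~\ref{rem:dedprop}(c), so the lemma reduces to the two routine inclusions above.
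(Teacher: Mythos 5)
Your proof is correct and takes essentially the same approach as the paper: one inclusion comes from weakening (Remark~\ref{rem:dedprop}\,(c)) applied to the witness $c \leq a$, using $c \vee a = a$, and the other from taking $a$ itself as the witness in~\eqref{cl:DED}. The only cosmetic difference is that you establish the set equality $\D[\Sigma]{a} = \{b \in K;\, \Sigma \vdash a \Rightarrow b\}$ before taking suprema, whereas the paper compares the suprema directly.
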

\begin{proof}
  Let $a \in K$ and let $b \in \D[\Sigma]{a}$. That is, there is
  $c \in K$ such that $c \leq a$ and $\Sigma \vdash c \Rightarrow
  b$. Then, using the principle of weakening, see
  Remark~\ref{rem:dedprop}\,(c), we get
  $\Sigma \vdash a \Rightarrow b$ and thus
  $b \leq \bigvee\{b \in K;\, \Sigma \vdash a \Rightarrow b\}$. Since
  $b \in \D[\Sigma]{a}$ was taken arbitrarily, we get
  $\C[\Sigma]{a} \leq \bigvee\{b \in K;\, \Sigma \vdash a \Rightarrow
  b\}$.

  Conversely, take any $b \in K$ such that $\Sigma \vdash a \Rightarrow b$.
  Since $a \leq a$ and $a \in K$, it follows that $b \in \D[\Sigma]{a}$ from
  which the claim readily follows.
\end{proof}

The operator defined by~\eqref{cl:ded} can be used to check whether a
given $a \Rightarrow b$ ($a,b \in K$) is provable by a given $\Sigma$.
Indeed, the next assertion shows that the problem of deciding
$\Sigma \vdash a \Rightarrow b$ is equivalent to the problem of
determining whether $\C[\Sigma]{a}$ is at least as high as $b$ in terms
of the underlying lattice order.

\begin{theorem}\label{th:dedchar}
  Let $a,b \in K$. We have $\Sigma \vdash a \Rightarrow b$ if{}f\/
  $b \leq \C[\Sigma]{a}$.
\end{theorem}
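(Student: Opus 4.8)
The plan is to prove the two implications separately, using Lemma~\ref{le:C_com} to rewrite $\C[\Sigma]{a}$ (legitimate since $a \in K$) together with the compactness of $b$ and the basic provability properties collected in Remark~\ref{rem:dedprop}\,(c).

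For the forward implication, suppose $\Sigma \vdash a \Rightarrow b$. Since $a \in K$, Lemma~\ref{le:C_com} gives $\C[\Sigma]{a} = \bigvee\{c \in K;\, \Sigma \vdash a \Rightarrow c\}$, and $b$ itself belongs to the set on the right, so $b \leq \C[\Sigma]{a}$ is immediate. This direction requires nothing beyond the lemma.

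For the converse, suppose $b \leq \C[\Sigma]{a}$. By Lemma~\ref{le:C_com} again, writing $J = \{c \in K;\, \Sigma \vdash a \Rightarrow c\}$, we have $b \leq \bigvee J$. This is where compactness enters: since $b \in K$ and $b \leq \bigvee J$, there is a finite $J' = \{c_1,\ldots,c_n\} \subseteq J$ with $b \leq \bigvee J'$. Each $c_i$ satisfies $\Sigma \vdash a \Rightarrow c_i$, so by repeated application of the addition property we obtain $\Sigma \vdash a \Rightarrow \bigvee J'$; note that $\bigvee J' \in K$, being a finite supremum of compact elements (Remark~\ref{rem:dedprop}\,(a)), so this is a well-formed formula. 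Since $b \leq \bigvee J'$ we have $\bigvee J' \vee b = \bigvee J'$, so inference rule~1 applied with first component $\bigvee J'$ and second component $b$ yields $\Sigma \vdash \bigvee J' \Rightarrow b$. Finally, transitivity applied to $\Sigma \vdash a \Rightarrow \bigvee J'$ and $\Sigma \vdash \bigvee J' \Rightarrow b$ gives $\Sigma \vdash a \Rightarrow b$.

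The crux of the whole argument—and its only real obstacle—is the converse direction, specifically the step invoking compactness of $b$ to pass from the possibly infinitary supremum $\bigvee J$ to a finite subcollection $J'$. Without compactness one could not convert the purely lattice-theoretic inequality $b \leq \C[\Sigma]{a}$ into a finite, syntactically usable fact; once $J'$ is secured, the remainder is a routine combination of addition, the axiom schema (rule~1), and transitivity. It is worth emphasizing that the algebraicity of $\mathbf{L}$, together with the requirement that formulas be built from compact elements, is precisely what forces provability and the lattice order to coincide here.
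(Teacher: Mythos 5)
Your proof is correct and follows essentially the same route as the paper's: the forward direction is immediate from Lemma~\ref{le:C_com}, and the converse uses compactness of $b$ to extract a finite set $\{c_1,\ldots,c_n\}$ of provable consequents, then combines the addition principle, the axiom schema (rule~1), and transitivity exactly as the paper does. The only cosmetic difference is that you spell out explicitly why rule~1 yields $\bigvee J' \Rightarrow b$, which the paper leaves implicit.
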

\begin{proof}
  The only-if part follows directly by Lemma~\ref{le:C_com}.

  Conversely, suppose that $b \leq \C[\Sigma]{a}$ holds.
  Using the fact that both $a$ and $b$ are compact and
  taking into account Lemma~\ref{le:C_com}, we get that there is a finite
  $B = \{b_1,\ldots,b_n\} \subseteq K$ such that $b \leq \bigvee B \in K$
  and $\Sigma \vdash a \Rightarrow b_i$ for all $i=1,\ldots,n$.
  Now, $b \leq \bigvee B \in K$ yields
  $\Sigma \vdash b_1 \vee \cdots \vee b_n \Rightarrow b$. Furthermore,
  using the addition principle multiple times, see
  Remark~\ref{rem:dedprop}\,(c), it follows that
  $\Sigma \vdash a \Rightarrow b_1 \vee \cdots \vee b_n$. Thus, using the
  transitivity of $\vdash$, we get $\Sigma \vdash a \Rightarrow b$.
\end{proof}

Note that similar characterizations of entailment as that in
Theorem~\ref{th:dedchar} appear in various approaches which involve if-then
rules. For instance, in logic programming~\cite{Lloyd84}, an analogous role
is played by least models of definite programs. The next assertion shows that
$\C[\Sigma]$ is indeed an $S$-closure operator which is in addition
\emph{algebraic} in the sense that the closure of any element $a \in L$ can
be obtained as the supremum of closures of all compact elements which are
smaller than or equal to $a$. In addition, we show that $\C[\Sigma]$ is an
$\mathbf{S}'$-closure operator where $\mathbf{S}'$ is the uniquely given
$\mathbf{L}$-parameterization generated by $S$. That is, $S'$ is the least
subset of isotone Galois connections in $L$ such that $S \subseteq S'$ and
$S'$ is closed under $\circ$. In such a case, we say that
$\mathbf{S'} = \langle S',\circ,\langle\one,\one\rangle\rangle$ is a
\emph{monoid} (\emph{of isotone Galois connections}) \emph{generated by $S$.}

\begin{theorem}\label{th:cSigma}
  Let $\Sigma$ be a theory and $S$ be a compact $L$-parameterization. Then,
  $\C[\Sigma]$ defined by~\eqref{cl:ded} is an algebraic $\mathbf{S'}$-closure
  operator where
  $\mathbf{S'} = \langle S',\circ,\langle\one,\one\rangle\rangle$ is
  the monoid generated by $S$.
\end{theorem}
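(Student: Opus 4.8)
The plan is to verify the three defining conditions \eqref{S:ext}--\eqref{S:idm} of an $\mathbf{S'}$-closure operator together with algebraicity, leaning on the deductive principles collected in Remark~\ref{rem:dedprop} and on the characterizations in Lemma~\ref{le:C_com} and Theorem~\ref{th:dedchar}. I would first dispatch the two easy conditions. Since $\mathbf{L}$ is algebraic, $a = \bigvee\{c \in K;\, c \leq a\}$, and as the axiom scheme yields $\Sigma \vdash c \Rightarrow c$ for every compact $c$, each such $c$ lies in $\D[\Sigma]{a}$; hence $a \leq \bigvee\D[\Sigma]{a} = \C[\Sigma]{a}$, which is \eqref{S:ext}. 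Condition \eqref{S:mon} is immediate from \eqref{cl:DED}: if $a \leq b$ then every compact $c \leq a$ also satisfies $c \leq b$, so $\D[\Sigma]{a} \subseteq \D[\Sigma]{b}$ and $\C[\Sigma]{a} \leq \C[\Sigma]{b}$.

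The heart of the argument is the closedness fact: if $c \in K$ and $c \leq \C[\Sigma]{a}$, then $\C[\Sigma]{c} \leq \C[\Sigma]{a}$. I would prove it by showing $b \leq \C[\Sigma]{a}$ for every $b \in K$ with $\Sigma \vdash c \Rightarrow b$, which suffices by Lemma~\ref{le:C_com}. Compactness of $c$ and $c \leq \bigvee\D[\Sigma]{a}$ furnish finitely many $b_1,\dots,b_m \in \D[\Sigma]{a}$ with $c \leq b_1 \vee \cdots \vee b_m$, say $\Sigma \vdash c_i \Rightarrow b_i$ with $c_i \in K$ and $c_i \leq a$. Setting $p = c_1 \vee \cdots \vee c_m \in K$, weakening and addition give $\Sigma \vdash p \Rightarrow b_1 \vee \cdots \vee b_m$; the axiom gives $\Sigma \vdash b_1 \vee \cdots \vee b_m \Rightarrow c$ (since $c \leq b_1 \vee \cdots \vee b_m$); and two transitivity steps, through $c$ and then through $b$, yield $\Sigma \vdash p \Rightarrow b$. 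As $p \in K$ and $p \leq a$, this places $b \in \D[\Sigma]{a}$, so $b \leq \C[\Sigma]{a}$. Applied through the identity connection, this closedness fact already delivers ordinary idempotency \eqref{S:idm} for $\langle\one,\one\rangle$, since any $b \in \D[\Sigma]{\C[\Sigma]{a}}$ satisfies $b \leq \C[\Sigma]{c} \leq \C[\Sigma]{a}$ for a compact $c \leq \C[\Sigma]{a}$.

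Algebraicity, $\C[\Sigma]{a} = \bigvee\{\C[\Sigma]{c};\, c \in K,\, c \leq a\}$, then follows. Monotony gives $\C[\Sigma]{c} \leq \C[\Sigma]{a}$ for each compact $c \leq a$, hence ``$\geq$''; conversely each $b \in \D[\Sigma]{a}$ arises from some compact $c \leq a$ with $\Sigma \vdash c \Rightarrow b$, so $b \leq \C[\Sigma]{c}$ by Theorem~\ref{th:dedchar}, giving $\C[\Sigma]{a} = \bigvee\D[\Sigma]{a} \leq \bigvee\{\C[\Sigma]{c};\, c \in K,\, c \leq a\}$.

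It remains to establish \eqref{S:idm} for an arbitrary $\langle\mul,\shf\rangle \in S'$, which I expect to be the main obstacle. Two preliminaries are needed: first, $S'$ is again a compact $L$-parameterization, because the lower adjoint of a composition is the composition of the lower adjoints and each of these preserves $K$; second, the third inference rule lifts to $S'$, i.e. $\Sigma \vdash a \Rightarrow b$ implies $\Sigma \vdash \mul{a} \Rightarrow \mul{b}$ for $a,b \in K$, obtained by applying rule 3 once per factor of $\langle\mul,\shf\rangle$ with all intermediate elements compact. Writing $d = \C[\Sigma]{a}$ and $e = \shf{d}$, extensivity reduces the goal to $\C[\Sigma]{e} \leq e$, i.e. $b \leq e$ for every $b \in \D[\Sigma]{e}$. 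Such a $b$ has $\Sigma \vdash c \Rightarrow b$ for some compact $c \leq e = \shf{d}$, so \eqref{eqn:gal} gives $\mul{c} \leq d$; the lifted rule 3 gives $\Sigma \vdash \mul{c} \Rightarrow \mul{b}$, whence $\mul{b} \leq \C[\Sigma]{\mul{c}}$ by Theorem~\ref{th:dedchar}; and since $\mul{c} \in K$ with $\mul{c} \leq \C[\Sigma]{a}$, the closedness lemma yields $\C[\Sigma]{\mul{c}} \leq \C[\Sigma]{a} = d$. Thus $\mul{b} \leq d$, and \eqref{eqn:gal} returns $b \leq \shf{d} = e$, completing \eqref{S:idm}. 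The delicate point is precisely to interlock the Galois adjunction (to pass between $c \leq \shf{d}$ and $\mul{c} \leq d$), the extension of the third rule to the full monoid $S'$, and the closedness lemma, all while keeping the elements fed into Theorem~\ref{th:dedchar} compact.
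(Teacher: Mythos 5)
Your proof is correct and takes essentially the same route as the paper's: extensivity, monotony, and algebraicity are argued identically, and your verification of \eqref{S:idm} for $\langle\mul,\shf\rangle \in S'$ uses exactly the paper's ingredients in the same roles --- the adjunction \eqref{eqn:gal} to pass from $c \leq \shf{\C[\Sigma]{a}}$ to $\mul{c} \leq \C[\Sigma]{a}$, compactness of $\mul{c}$ (preserved under composition) to extract a finite family $b_i \in \D[\Sigma]{a}$, the derived principles of weakening, addition and transitivity to chain the deductions, and rule 3 applied once per factor to get $\Sigma \vdash \mul{c} \Rightarrow \mul{b}$. The only difference is organizational: you factor the compactness-plus-deduction chain into a standalone closedness lemma ($c \in K$ and $c \leq \C[\Sigma]{a}$ imply $\C[\Sigma]{c} \leq \C[\Sigma]{a}$) and then invoke it together with Theorem~\ref{th:dedchar}, where the paper inlines the same argument.
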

\begin{proof}
  First, we prove that~$\C[\Sigma]$ given by~\eqref{cl:ded} is extensive,
  i.e., it satisfies~\eqref{S:ext}. Since $\mathbf{L}$ is algebraic, any
  $a \in L$ can be expressed as $a = \bigvee B$ where $B \subseteq K$.  Take
  any $b \in B$. It suffices to check that $b \leq \C[\Sigma]{a}$.  Observe
  that the facts that $b \in K$, $b \leq a$, and
  $\Sigma \vdash b \Rightarrow b$ yield that $b \in \D[\Sigma]{a}$ and thus
  $b \leq \C[\Sigma]{a}$. Altogether, $\C[\Sigma]$ is extensive.

  Second, we check that $\C[\Sigma]$ satisfies~\eqref{S:mon}. This is easy to
  see. Observe that assuming $a_1 \leq a_2$, the fact $b \in \D[\Sigma]{a_1}$
  means there is $c \in K$ such that $c \leq a_1$ and
  $\Sigma \vdash c \Rightarrow b$. But in that case, $c \leq a_1 \leq a_2$
  and so $b \in \D[\Sigma]{a_2}$ from which we immediately get
  $\C[\Sigma]{a_1} \leq \C[\Sigma]{a_2}$.
  
  Third, we prove that $\C[\Sigma]$ satisfies~\eqref{S:idm}. Consider any
  $\langle\mul,\shf\rangle$ which results as a composition of arbitrary
  finitely many isotone Galois connections in $S$. We check that
  $\C[\Sigma]{\shf{\C[\Sigma]{a}}} \leq \shf{\C[\Sigma]{a}}$ for any
  $a \in L$. Take $b \in \D[\Sigma]{\shf{\C[\Sigma]{a}}}$. It suffices to
  check that $b \leq \shf{\C[\Sigma]{a}}$. Using~\eqref{cl:DED}, there is
  $c \in K$ such that $\Sigma \vdash c \Rightarrow b$ and
  $c \leq \shf{\C[\Sigma]{a}}$. Using~\eqref{eqn:gal}, it follows that
  $\mul{c} \leq \C[\Sigma]{a}$ and, in addition, $\mul{c}$ is a compact
  element because $S$ is a compact $L$-parameterization. Therefore, from
  $\mul{c} \leq \C[\Sigma]{a}$, it follows that there are
  $\{b_1,\ldots,b_n\} \subseteq \D[\Sigma]{a}$ and
  $\{c_1,\ldots,c_n\} \subseteq K$ such that
  $\mul{c} \leq b_1 \vee \cdots \vee b_n \in K$, $c_i \leq a$, and
  $\Sigma \vdash c_i \Rightarrow b_i$ for all $i=1,\ldots,n$. Hence,
  $\Sigma \vdash b_1 \vee \cdots \vee b_n \Rightarrow \mul{c}$ and
  $\Sigma \vdash c_1 \vee \cdots \vee c_n \Rightarrow b_1 \vee \cdots \vee
  b_n$ from which by the transitivity of $\vdash$ it follows that
  $\Sigma \vdash c_1 \vee \cdots \vee c_n \Rightarrow \mul{c}$. Moreover, we
  have observed that $\Sigma \vdash c \Rightarrow b$, i.e., applying the
  third inference rule multiple times (recall that $\mul$ is in fact a
  composition of finitely many lower adjoints from $S$), we get
  $\Sigma \vdash \mul{c} \Rightarrow \mul{b}$. Therefore, applying the
  transitivity of $\vdash$ once again, we get
  $\Sigma \vdash c_1 \vee \cdots \vee c_n \Rightarrow \mul{b}$. At this
  point, we have $c_1 \vee \cdots \vee c_n \in K$ such that
  $c_1 \vee \cdots \vee c_n \leq a$ and
  $\Sigma \vdash c_1 \vee \cdots \vee c_n \Rightarrow \mul{b}$ for
  $\mul{b} \in K$, i.e., using~\eqref{cl:DED}, we get
  $\mul{b} \in \D[\Sigma]{a}$. Thus, $\mul{b} \leq \C[\Sigma]{a}$ and so
  $b \leq \shf{\C[\Sigma]{a}}$.

  Finally, it remains to prove that $\C[\Sigma]$ is algebraic. Take any
  $a \in L$ and $b \in \D[\Sigma]{a} \subseteq K$, i.e., there is $c \in K$
  such that $c \leq a$ and $\Sigma \vdash c \Rightarrow b$. Hence,
  $b \in \D[\Sigma]{c}$, i.e., $b \leq \C[\Sigma]{c}$. As a consequence,
  $\C[\Sigma]{a} \leq \bigvee\{\C[\Sigma]{c};\, c \in K \text{ and } c \leq
  a\}$.
\end{proof}

In addition to the previous characterization which associates algebraic
$S$-closure operators to given theories, we show a converse characterization:
Each algebraic $S$-closure operator over a compact $L$-parameterization can
be seen as an operator associated to some theory.  This observation is
summarized in the next theorem based on the following lemma.

\begin{lemma}\label{le:syncl}
  Let $\C$ be an algebraic $S$-closure operator where $S$ is a compact
  $L$-parameterization and let
  \begin{align}
    \Sigma_{\C} &= \{a \Rightarrow b;\, a,b \in K \text{ and\/ } b \leq \C{a}\}.
    \label{eqn:SigmaC}
  \end{align}
  Then, for any $a,b \in K$, if\/ $\Sigma_{\C} \vdash a \Rightarrow b$, then
  $a \Rightarrow b \in \Sigma_{\C}$.
\end{lemma}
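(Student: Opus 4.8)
The plan is to read the statement as a \emph{soundness} property of the inference system with respect to $\C$. Since, by definition of $\Sigma_{\C}$, the assertion $a \Rightarrow b \in \Sigma_{\C}$ is literally the same as $b \leq \C{a}$ (for $a,b \in K$), it suffices to prove that $\Sigma_{\C} \vdash a \Rightarrow b$ implies $b \leq \C{a}$. I would establish this by induction on the length of an $S$-proof $\delta_1,\ldots,\delta_n$ of $a \Rightarrow b$ from $\Sigma_{\C}$, showing that \emph{every} formula $c \Rightarrow d$ occurring in the proof satisfies $d \leq \C{c}$; all these $c,d$ lie in $K$ by Remark~\ref{rem:dedprop}\,(a), so the comparison is meaningful. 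The base case $\delta_i \in \Sigma_{\C}$ holds by definition of $\Sigma_{\C}$, so the inductive step reduces to checking that each of the three inference rules preserves the property, i.e.\ is sound with respect to $\C$.

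Before treating the rules I would record that $\C$ is an ordinary closure operator: extensivity and isotony are \eqref{S:ext} and \eqref{S:mon}, while instantiating \eqref{S:idm} at $\langle\one,\one\rangle \in S$ yields $\C{\C{a}} \leq \C{a}$, hence $\C{\C{a}} = \C{a}$ by extensivity. Rule~1 is then immediate, since $b \leq a \vee b \leq \C{a \vee b}$ by extensivity. For Rule~2 I would argue from the induction hypotheses $b \leq \C{a}$ and $d \leq \C{b \vee c}$ as follows: isotony gives $b \leq \C{a} \leq \C{a \vee c}$ and extensivity gives $c \leq \C{a \vee c}$, so $b \vee c \leq \C{a \vee c}$; applying isotony and idempotency yields $\C{b \vee c} \leq \C{\C{a \vee c}} = \C{a \vee c}$, and therefore $d \leq \C{a \vee c}$, which is the desired soundness of the conclusion $a \vee c \Rightarrow d$.

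The main obstacle is the soundness of Rule~3, which is where the parameterization genuinely enters and where only the closure-operator axioms no longer suffice. Here I must show that $b \leq \C{a}$ implies $\mul{b} \leq \C{\mul{a}}$ for every $\langle\mul,\shf\rangle \in S$, and the key is to combine the adjointness \eqref{eqn:gal} with the parameterized idempotency \eqref{S:idm}. Starting from extensivity $\mul{a} \leq \C{\mul{a}}$, the equivalence \eqref{eqn:gal} gives $a \leq \shf{\C{\mul{a}}}$; applying isotony of $\C$ and then \eqref{S:idm} at $\mul{a}$ (in place of $a$) yields
\[
  \C{a} \leq \C{\shf{\C{\mul{a}}}} \leq \shf{\C{\mul{a}}}.
\]
Since $b \leq \C{a}$, this forces $b \leq \shf{\C{\mul{a}}}$, and one further use of \eqref{eqn:gal} delivers $\mul{b} \leq \C{\mul{a}}$, exactly the soundness of $\mul{a} \Rightarrow \mul{b}$. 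With all three rules shown sound, the induction gives $b \leq \C{a}$ for the conclusion $\delta_n = a \Rightarrow b$, i.e.\ $a \Rightarrow b \in \Sigma_{\C}$, as required. I would also note that the algebraicity of $\C$ plays no role in this lemma; only the three defining conditions \eqref{S:ext}--\eqref{S:idm} of an $S$-closure operator and the membership $\langle\one,\one\rangle \in S$ are used.
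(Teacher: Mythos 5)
Your proof is correct and follows essentially the same route as the paper: induction on the length of an $S$-proof, checking that each of the three inference rules preserves membership in $\Sigma_{\C}$ (equivalently, the soundness condition $b \leq \C{a}$), with the same arguments for Rules 1 and 2. The only difference is organizational: where the paper handles Rule~3 by forward-referencing the inequality $\mul{\C{a}} \leq \C{\mul{a}}$ of Theorem~\ref{th:mulC_Cmul}, you prove it inline via the same adjointness chain (extensivity, \eqref{eqn:gal}, \eqref{S:mon}, \eqref{S:idm}), which makes your argument self-contained; your closing remark that algebraicity of $\C$ is not needed in this lemma is also accurate.
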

\begin{proof}
  First, observe that if $a \Rightarrow b$ is an instance of the axiom
  schema, i.e., if{}f $b \leq a$, then~\eqref{S:ext} gives
  $b \leq a \leq \C{a}$ and thus $a \Rightarrow b \in \Sigma_{\C}$.  Hence,
  if a formula results using the first inference rule in
  Definition~\ref{def:infsyst}, then it belongs to $\Sigma_{\C}$.

  Take any $a \Rightarrow b \in \Sigma_{\C}$ and
  $b \vee c \Rightarrow d \in \Sigma_{\C}$. We show that the formula
  $a \vee c \Rightarrow d$, which results by the application of the second
  inference rule in Definition~\ref{def:infsyst}, belongs to $\Sigma_{\C}$ as
  well. Using~\eqref{eqn:SigmaC}, we get that $b \leq \C{a}$ and
  $d \leq \C{b \vee c}$. Therefore, using~\eqref{S:mon}
  and~\eqref{S:ord_idm}, it follows that
  \begin{align*}
    d \leq
    \C{b \vee c} \leq
    \C{\C{a} \vee c} \leq
    \C{\C{a \vee c}} \leq
    \C{a \vee c}.
  \end{align*}
  Hence, $a \vee c \Rightarrow d \in \Sigma_{\C}$.

  Take any $\langle\mul,\shf\rangle \in S$ and any
  $a \Rightarrow b \in \Sigma_{\C}$. We show that
  $\mul{a} \Rightarrow \mul{b}$, which results by the application of the
  third inference rule in Definition~\ref{def:infsyst}, belongs
  to~$\Sigma_{\C}$. Using~\eqref{eqn:SigmaC}, the assumption
  $a \Rightarrow b \in \Sigma_{\C}$ yields $b \leq \C{a}$ and thus, owing
  to~\eqref{eqn:mon_mul}, we get $\mul{b} \leq \mul{\C{a}}$.  Thus, it
  suffices to check that $\mul{\C{a}} \leq \C{\mul{a}}$ which is indeed the
  case. We postpone the proof of this general and important inequality to the
  next section, see Theorem~\ref{th:mulC_Cmul}.

  Now, the claim of Lemma~\ref{le:syncl} follows directly by our observations
  using induction over the length of an $S$-proof by $\Sigma_{\C}$.
\end{proof}

\begin{theorem}\label{th:Sigmac}
  Let $\C$ be an algebraic $S$-closure operator where $S$ is a compact
  $L$-parameterization. Then, for $\Sigma_{\C}$ given by~\eqref{eqn:SigmaC},
  we have $\C = \C[\Sigma_{\C}]$.
\end{theorem}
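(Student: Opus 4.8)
The plan is to reduce the equality $\C = \C[\Sigma_{\C}]$ to an equality on compact elements and then to settle that case using Lemma~\ref{le:syncl}. Both operators are algebraic: $\C$ by hypothesis and $\C[\Sigma_{\C}]$ by Theorem~\ref{th:cSigma}. Since an algebraic $S$-closure operator is determined on all of $L$ by its restriction to $K$---concretely, $\C{a} = \bigvee\{\C{c};\, c \in K \text{ and } c \leq a\}$, and likewise for $\C[\Sigma_{\C}]$---it suffices to prove $\C{a} = \C[\Sigma_{\C}]{a}$ for every $a \in K$; the general case then follows because the two defining families of compact values coincide entrywise and hence have the same join.

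First I would fix $a \in K$ and rewrite the right-hand side by Lemma~\ref{le:C_com}, obtaining $\C[\Sigma_{\C}]{a} = \bigvee\{b \in K;\, \Sigma_{\C} \vdash a \Rightarrow b\}$. The central step is then the biconditional, valid for $a,b \in K$,
\begin{align*}
  \Sigma_{\C} \vdash a \Rightarrow b \quad\text{if{}f}\quad b \leq \C{a}.
\end{align*}
The forward implication is precisely Lemma~\ref{le:syncl} combined with the definition~\eqref{eqn:SigmaC} of $\Sigma_{\C}$; the converse is immediate, since $b \leq \C{a}$ means $a \Rightarrow b \in \Sigma_{\C}$, and every formula of $\Sigma_{\C}$ is trivially $S$-provable by $\Sigma_{\C}$ via a one-line proof.

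With this biconditional in hand, the index set $\{b \in K;\, \Sigma_{\C} \vdash a \Rightarrow b\}$ coincides with $\{b \in K;\, b \leq \C{a}\}$, so $\C[\Sigma_{\C}]{a} = \bigvee\{b \in K;\, b \leq \C{a}\}$. Because $\mathbf{L}$ is algebraic, every element is the join of the compact elements below it, whence this last supremum equals $\C{a}$. This gives $\C[\Sigma_{\C}]{a} = \C{a}$ for all $a \in K$, and algebraicity of both operators then extends the identity to all $a \in L$ as described above.

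I expect no serious obstacle here, since the genuine technical content---in particular the inequality $\mul{\C{a}} \leq \C{\mul{a}}$ needed to handle the third inference rule (postponed to Theorem~\ref{th:mulC_Cmul})---has already been absorbed into Lemma~\ref{le:syncl}. The only point demanding care is the reduction to compact elements: one must invoke algebraicity on \emph{both} sides and use that joins of entrywise-equal families agree, rather than attempting to compare $\C{a}$ and $\C[\Sigma_{\C}]{a}$ directly for a non-compact $a$, where the existential witness $c \leq a$ in~\eqref{cl:DED} makes the bookkeeping less transparent.
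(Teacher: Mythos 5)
Your proof is correct and follows the paper's own strategy: both reduce the claim to compact elements using the algebraicity of $\C$ and of $\C[\Sigma_{\C}]$ (Theorem~\ref{th:cSigma}), and both make Lemma~\ref{le:syncl} carry the essential weight. The only real difference is in the bookkeeping for compact $a$: the paper routes both containments through Theorem~\ref{th:dedchar}---including its nontrivial ``if'' direction, whose proof needs compactness of $b$ and the addition/transitivity principles---whereas you work directly from Lemma~\ref{le:C_com}, identify the index set $\{b \in K;\, \Sigma_{\C} \vdash a \Rightarrow b\}$ with $\{b \in K;\, b \leq \C{a}\}$ via Lemma~\ref{le:syncl} and the trivial one-line provability of members of $\Sigma_{\C}$, and then invoke algebraicity of $\mathbf{L}$ once to evaluate the join. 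This makes your argument marginally more economical, since the harder half of Theorem~\ref{th:dedchar} is never used; the paper's version, in exchange, displays the equivalence $\Sigma_{\C} \vdash a \Rightarrow b$ if{}f $b \leq \C[\Sigma_{\C}]{a}$ explicitly, which is the formulation reused elsewhere.
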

\begin{proof}
  Owing to Theorem~\ref{th:cSigma}, $\C[\Sigma_{\C}]$ is an algebraic
  $S$-closure operator and thus it suffices to show that
  $\C{a} = \C[\Sigma_{\C}]{a}$ for each $a \in K$.

  First, take $a \in K$ and let $b \in K$ such that $b \leq \C{a}$.
  Using~\eqref{eqn:SigmaC}, we get $a \Rightarrow b \in \Sigma_{\C}$ and thus
  $\Sigma_{\C} \vdash a \Rightarrow b$. Therefore, using
  Theorem~\ref{th:dedchar}, we have $b \leq \C[\Sigma_{\C}](a)$.  Since
  $\mathbf{L}$ is an algebraic lattice and $b \in K$ was taken arbitrarily,
  we get that $\C{a} \leq \C[\Sigma_{\C}]{a}$ for $a \in K$.

  Conversely, for any $a \in K$ and any $b \in K$, we show that
  $b \leq \C{a}$ provided that $b \leq \C[\Sigma_{\C}]{a}$. Using
  Theorem~\ref{th:dedchar} and the assumption $b \leq \C[\Sigma_{\C}]{a}$, it
  follows that $\Sigma_{\C} \vdash a \Rightarrow b$. Using
  Lemma~\ref{le:syncl}, the last fact yields
  $a \Rightarrow b \in \Sigma_{\C}$. Now, directly from~\eqref{eqn:SigmaC},
  we get $b \leq \C{a}$ which finishes the proof.
\end{proof}

\begin{remark}
  Let us note that the presented inference system for if-then formulas which
  is parameterized by a system of isotone Galois connections is not the only
  one possible. Several equivalent systems can be introduced which are
  analogous to the graph-based system in~\cite{UrVy:Dddosbd} and the system
  based on the simplification equivalence~\cite{BeCoEnMoVy:Apaddg} which is
  suitable for automated provers. Also note that Theorem~\ref{th:cSigma} and
  Theorem~\ref{th:Sigmac} are limited to algebraic $S$-closure operators on
  algebraic lattices. The assertions can be extended to unrestricted
  $S$-closure operators on arbitrary complete lattices by extending the
  inference system by an infinitary deduction rule in a similar way as in the
  approaches in \cite{BeVy:Falcrl,KuVy:Flprrai}.
\end{remark}

Finally, we show the relationship of the general approach presented in this
section to the well-established approaches to reasoning with various types of
if-then dependencies.

\begin{example}\label{ex:syst}
  (a) As we have mentioned in Remark~\ref{rem:dedprop}, the classic Armstrong
  system for reasoning with functional dependencies is a particular case of
  the presented system. In addition, owing to the equivalence of the semantic
  entailment of functional dependencies and particular formulas in the
  classic propositional logic, see~\cite{DeCa,Fagin,SaDePaFa:Ebrddfpl}, the
  inference system can also be used for reasoning with attribute
  implications~\cite{GuDu} which appear as basic formulas in formal concept
  analysis~\cite{GaWi:FCA}.

  (b) Our approach generalizes the inference systems for fuzzy attribute
  implications~\cite{BeVy:ICFCA} and similarity-based functional
  dependencies~\cite{BeVy:DASFAA} parameterized by linguistic hedges, see
  also~\cite{BeVy:Addg1,BeVy:Addg2} for detailed survey. In these approaches,
  $\mathbf{L}$ is the residuated lattice of all $\mathbf{L}$-sets in a finite
  universe $Y$. In this setting, formulas are considered as expressions
  $A \Rightarrow B$ where $A,B \in L^Y$ and are called fuzzy (graded)
  attribute implications. The inference system for fuzzy attribute
  implications consists of inference rules which correspond to 1. and 2. from
  Definition~\ref{def:infsyst}. In addition, the inference system uses the
  following rule of multiplication: From $A \Rightarrow B$, infer
  $c^* {\otimes} A \Rightarrow c^* {\otimes} B$. Therefore, it can be easily
  seen that the inference system corresponds to the general system presented
  in this section for
  $S = \{\langle\mul[c^* \otimes],\shf[c^* \rightarrow]\rangle;\, c \in L\}$
  with $\mul[c^* \otimes]$ and $\shf[c^* \rightarrow]$ given as in
  \eqref{eqn:mul_L} and \eqref{eqn:shf_L}, respectively. The approach to
  parameterization by linguistic hedges was further developed
  in~\cite{Vy:Pasefai} where we have presented fuzzy if-then rules with
  semantics parameterized by systems of isotone Galois connections;
  \cite{Vy:Pasefai} is in fact the major motivation for the present paper
  which looks at the closure structures from a more general perspective
  because it is not concerned only with complete residuated lattices of fuzzy
  sets and extends the observations on the closure structures studied
  initially in~\cite{Vy:Pasefai}.

  (c) The present approach also generalizes an inference system for
  attribute implications with temporal semantics. Attribute implications
  annotated by time points have been introduced in~\cite{TrVy:TAsisaits}
  and investigated in more detail in~\cite{TrVy:Ltai}.
  From the point of view of the present formalism, the formulas are
  expressions of the form $A \Rightarrow B$ such that
  $A,B \subseteq Y \times \mathbb{Z}$, where $Y$ is a finite set of
  attributes, both $A$ and $B$ are finite, and the integers from $\mathbb{Z}$
  are used to denote relative time points. For instance,
  $A = \{\langle y,-1\rangle, \langle z,2\rangle\} \subseteq Y \times
  \mathbb{Z}$ is interpreted as expressing the fact that $y$ holds in the
  time point prior to the present one (i.e., $-1$ time units from now) and
  $z$ holds in the time point after the next one (i.e., $+2$ time units from
  now). A complete Armstrong-style inference system for such formulas is
  presented in~\cite{TrVy:Ltai}. It can be seen as a particular case of
  the general system presented in this section by considering $\mathbf{L}$
  with $L = 2^{Y \times \mathbb{Z}}$ and $\subseteq$ being the complete
  lattice order on $L$. Obviously, all finite
  $A \subseteq Y \times \mathbb{Z}$ are exactly the compact elements of
  $\mathbf{L}$ and, in addition, $\mathbf{L}$ is an algebraic
  lattice. Furthermore, put
  $\boldsymbol{s}(A) = \{\langle y,i+1\rangle;\, \langle y,i\rangle \in A\}$
  for any $A \subseteq Y \times \mathbb{Z}$ and consider its inverse
  $\boldsymbol{s}^{-1}$. In this setting, the inference system
  from~\cite{TrVy:Ltai} becomes the general inference system for
  $S = \{\langle\one,\one\rangle,
  \langle\boldsymbol{s},\boldsymbol{s}^{-1}\rangle\}$.
\end{example}

\subsection{Consequence operators induced by data}
One of the basic problems in FCA~\cite{GaWi:FCA,Wi:RLT} is discovery of
attribute dependencies described by sets of attribute implications which hold
in given data. In this subsection, we approach the problem from the
perspective of closure operators and general parameterizations. Analogously
as in the previous subsection, we show that the notion of an if-then
dependency being true in data can be defined on a general level and the basic
criterion for a dependency being true in data can be expressed using fixed
points of the parameterized closure operators studied in our paper.

Recall that in the classic FCA~\cite{GaWi:FCA,GuDu}, the input data is an
object-attribute data table, called a (dyadic) formal context, which is
formalized as a binary relation $I \subseteq X \times Y$ between the set $X$
of objects and the set $Y$ of attributes under consideration. An attribute
implication $A \Rightarrow B$ where $A,B \subseteq Y$ is considered true in
$I$ whenever $A \subseteq \{x\}^{\uparrow_I}$ implies
$B \subseteq \{x\}^{\uparrow_I}$ for all $x \in X$ where ${}^{\uparrow_I}$ is
the standard concept-forming operator, i.e.,
$\{x\}^{\uparrow_I} = \{y \in Y;\, \langle x,y\rangle \in I\}$.  It is
well-known that $A \Rightarrow B$ being true in $I$ can equivalently be
expressed by means of the inclusion of $B$ in the intent of $I$ generated by
$A$. We know show an analogous result on a more general level: First,
instead of a dyadic formal context, we take a subset $M \subseteq L$ where
$\mathbf{L} = \langle L,\leq\rangle$ is a complete lattice; $M$ can be seen
as representing ``input data.'' Attribute implications will be considered in
much the same way as in the previous subsection, and their interpretation in
$M$ is defined using an $L$-parameterization $S$:

\begin{definition}\label{def:Msat}
  Let $M \subseteq L$ and let $S$ be an $L$-parameterization.
  For any $a,b \in L$, we put $M \models a \Rightarrow b$
  and say that $a \Rightarrow b$ is true in $M$ whenever
  the following condition is satisfied:
  \begin{align}
    \text{if } \mul{a} \leq m \text{ then } \mul{b} \leq m
  \end{align}
  for all $m \in M$ and any $\langle\mul,\shf\rangle$ such that
  \begin{align}
    \langle\mul,\shf\rangle &=
    \langle\mul[1],\shf[1]\rangle \circ \cdots \circ
    \langle\mul[n],\shf[n]\rangle 
    \label{eqn:mulcirc}
  \end{align}
  for some 
  $\langle\mul[1],\shf[1]\rangle,\ldots,\langle\mul[n],\shf[n]\rangle \in S$.
\end{definition}

\begin{remark}\label{rem:Msat}
  (a) The classic notion of an attribute implication being true in a formal
  context is generalized by Definition~\ref{def:Msat} as follows:
  $\mathbf{L}$ is the complete lattice of all subsets of $Y$ ordered by the
  usual set inclusion and for each formal context $I \subseteq X \times Y$
  we take $M = \{\{x\}^{\uparrow_I};\, x \in X\} \subseteq L =
  2^Y$. In this setting, for $S = \{\langle \one,\one\rangle\}$,
  $\models$ from Definition~\ref{def:Msat} becomes the usual notion of
  satisfaction of an attribute implication in a formal context.

  (b) It is easy to see that $\models$ from Definition~\ref{def:Msat} can be
  seen as a relation of satisfaction (i.e., $M \models a \Rightarrow b$ may
  be understood so that $a \Rightarrow b$ is satisfied in~$M$) which
  generalizes properties of the classic relation of satisfaction of attribute
  implications in formal contexts. For instance, $M \models a \Rightarrow c$
  implies $M \models a \vee b \Rightarrow c$ which can be seen as a form of
  the law of weakening; $M \models a \Rightarrow b \vee c$ implies
  $M \models a \Rightarrow c$; $M \models a \Rightarrow b$ and
  $M \models b \Rightarrow c$ imply $M \models a \Rightarrow c$ which can be
  seen as a form of the transitivity of $\models$.
\end{remark}

With each $M \subseteq L$ and $a \in L$ which serves as an antecedent of a
formula we associate a set of all consequents $b \in L$ such that
$M \models a \Rightarrow b$. Sets of this form are introduced in the
following definition.

\begin{definition}
  Let $S$ be an $L$-parameterization. For any $M \subseteq L$ we define
  operators $\D[M]\!: L \to 2^L$ and $\C[M]\!: L \to L$ by
  \begin{align}
    \D[M]{a} &= \{b \in L;\, M \models a \Rightarrow b\}, \label{eqn:DM}
    \\
    \C[M]{a} &= \textstyle\bigvee\D[M]{a},  \label{eqn:CM}
  \end{align}
  for all $a \in L$; $\C[M]{a}$ is called the semantic $S$-closure of $a \in L$
  (using $M$).
\end{definition}

The following assertion shows that $\D[M]{a}$ is closed under arbitrary
suprema and thus $\C[M]{a}$ belongs to $\D[M]{a}$.

\begin{lemma}\label{le:supclose}
  Let $M \subseteq L$ and let $S$ be an $L$-parameterization.
  For any $a \in L$, we have $\C[M]{a} \in \D[M]{a}$.
\end{lemma}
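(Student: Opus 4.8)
The plan is to show that the set $\D[M]{a}$ is closed under arbitrary suprema. Once this is established, the claim follows at once: since $\mathbf{L}$ is a complete lattice the supremum $\C[M]{a} = \bigvee\D[M]{a}$ exists by~\eqref{eqn:CM}, and closure under suprema forces this supremum to be a member of $\D[M]{a}$. Concretely, I would fix an arbitrary family $\{b_i;\, i \in I\} \subseteq \D[M]{a}$, prove that $\bigvee\{b_i;\, i \in I\} \in \D[M]{a}$, and then specialize to the family consisting of all of $\D[M]{a}$.

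To verify membership of the supremum, I would unwind Definition~\ref{def:Msat}. So I fix an arbitrary $m \in M$ together with an arbitrary $\langle\mul,\shf\rangle$ obtained as a finite composition~\eqref{eqn:mulcirc} of connections from $S$, and I assume $\mul{a} \leq m$; the goal is $\mul{\bigvee\{b_i;\, i \in I\}} \leq m$. The key observation is that $\langle\mul,\shf\rangle$ is itself an isotone Galois connection, because the composition of isotone Galois connections is again one. Hence its lower adjoint $\mul$ distributes over arbitrary suprema by~\eqref{eqn:mul_distr}, giving
\begin{align*}
  \mul{\textstyle\bigvee\{b_i;\, i \in I\}} = \textstyle\bigvee\{\mul{b_i};\, i \in I\}.
\end{align*}
Now each $b_i \in \D[M]{a}$ means $M \models a \Rightarrow b_i$, and since the \emph{same} $m$ and the \emph{same} composition $\langle\mul,\shf\rangle$ already satisfy $\mul{a} \leq m$, Definition~\ref{def:Msat} yields $\mul{b_i} \leq m$ for every $i \in I$. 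Therefore $\bigvee\{\mul{b_i};\, i \in I\} \leq m$, and combining with the displayed identity gives $\mul{\bigvee\{b_i;\, i \in I\}} \leq m$, as required.

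This argument goes through for every admissible $m$ and every admissible composition, so $M \models a \Rightarrow \bigvee\{b_i;\, i \in I\}$, i.e.\ the supremum lies in $\D[M]{a}$. Taking the family to be $\D[M]{a}$ itself then gives $\C[M]{a} = \bigvee\D[M]{a} \in \D[M]{a}$. I do not expect a genuine obstacle here; the only point demanding care is that the distributivity~\eqref{eqn:mul_distr} must be applied to the \emph{composed} lower adjoint $\mul$, not merely to the generators in $S$. This is legitimate precisely because the isotone Galois connections form a monoid under $\circ$, so every finite composition is again an isotone Galois connection and its lower adjoint preserves suprema. The universal quantification over all finite compositions in Definition~\ref{def:Msat} is harmless, since the supremum is formed only after a single such composition (and a single $m$) has been fixed.
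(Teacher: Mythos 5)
Your proof is correct and essentially identical to the paper's own argument: both establish that $\D[M]{a}$ is closed under arbitrary suprema by fixing $m \in M$ and a composed connection $\langle\mul,\shf\rangle$ of the form~\eqref{eqn:mulcirc} with $\mul{a} \leq m$, deducing $\mul{b_i} \leq m$ for all $i \in I$, and then applying the distributivity~\eqref{eqn:mul_distr} of the composed lower adjoint to conclude $\mul{\bigvee\{b_i;\, i \in I\}} \leq m$, whence the supremum lies in $\D[M]{a}$ and the claim follows by specializing to the family $\D[M]{a}$ itself. Your closing remark---that \eqref{eqn:mul_distr} legitimately applies to the composed $\mul$ because compositions of isotone Galois connections are again isotone Galois connections---makes explicit a point the paper uses implicitly.
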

\begin{proof}
  Take any $\{b_i \in L;\, i \in I\} \subseteq \D[M]{a}$. We prove that the
  supremum of $\{b_i \in L;\, i \in I\}$ belongs to $\D[M]{a}$ and thus we
  obtain the proof of Lemma~\ref{le:supclose} as a particular case of this
  observation. Using~\eqref{eqn:DM}, $b_i \in \D[M]{a}$ ($i \in I$) means that
  $M \models a \Rightarrow b_i$ ($i \in I$).
  Thus, if $\mul{a} \leq m$ for some $m \in M$ and $\langle\mul,\shf\rangle$
  of the form~\eqref{eqn:mulcirc}, then $\mul{b_i} \leq m$ for all $i \in I$.
  Therefore, $\bigvee\{\mul{b_i};\, i \in I\} \leq m$. Now, we can
  apply~\eqref{eqn:mul_distr} to get 
  $\mul{\bigvee\{b_i;\, i \in I\}} \leq m$.
  Hence, $M \models a \Rightarrow \bigvee\{b_i;\, i \in I\}$ and 
  so $\bigvee\{b_i;\, i \in I\} \in \D[M]{a}$.
\end{proof}

As an immediate consequence of Lemma~\ref{le:supclose}, we get that
$\C[M]{a}$ is the greatest element of $\D[M]{a}$. Hence, using~\eqref{eqn:DM}
together with Lemma~\ref{le:supclose}, we get the following corollary which
generalizes the standard criterion for classic attribute implications being
true in dyadic formal contexts, cf. Remark~\ref{rem:Msat}\,(b).

\begin{corollary}\label{co:Mchar}
  For any $M \subseteq L$ and $a,b \in L$,
  $M \models a \Rightarrow b$ if{}f\/ $b \leq \C[M]{a}$.
  \qed
\end{corollary}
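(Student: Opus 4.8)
The plan is to read both implications off the definition $\C[M]{a} = \bigvee\D[M]{a}$ together with Lemma~\ref{le:supclose}, which guarantees that this supremum is itself a member of $\D[M]{a}$, namely that $M \models a \Rightarrow \C[M]{a}$. Almost nothing beyond these two facts should be required.

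For the only-if direction I would argue directly: if $M \models a \Rightarrow b$, then $b \in \D[M]{a}$ by~\eqref{eqn:DM}, and since $\C[M]{a}$ is by~\eqref{eqn:CM} the supremum---hence an upper bound---of $\D[M]{a}$, we get $b \leq \C[M]{a}$ at once. For the converse I would start from $b \leq \C[M]{a}$, invoke Lemma~\ref{le:supclose} to record $M \models a \Rightarrow \C[M]{a}$, and then reduce the goal to a monotonicity-in-the-consequent principle: if $M \models a \Rightarrow c$ and $b \leq c$, then $M \models a \Rightarrow b$. To prove this principle I would unfold Definition~\ref{def:Msat}, fixing $m \in M$ and a composite adjoint $\langle\mul,\shf\rangle$ of the form~\eqref{eqn:mulcirc} with $\mul{a} \leq m$; the hypothesis gives $\mul{c} \leq m$, and $b \leq c$ then yields $\mul{b} \leq \mul{c} \leq m$ by isotony of $\mul$. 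Instantiating this at $c = \C[M]{a}$ completes the argument.

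The only point that needs care is that the isotony~\eqref{eqn:mon_mul}, stated for the generating lower adjoints of $S$, must be transferred to every finite composite $\mul$ appearing in~\eqref{eqn:mulcirc}; this is immediate, since a composition of isotone maps is again isotone. Indeed the whole converse is just the downward-consequent preservation already observed in Remark~\ref{rem:Msat}\,(b), so I expect no real obstacle here.
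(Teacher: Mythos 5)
Your proposal is correct and matches the paper's intended argument: the paper derives the corollary directly from~\eqref{eqn:DM}, \eqref{eqn:CM}, and Lemma~\ref{le:supclose} (so that $\C[M]{a}$ is the greatest element of $\D[M]{a}$), with the converse direction resting on the downward preservation of consequents already noted in Remark~\ref{rem:Msat}\,(b). You have merely made explicit the monotonicity-in-the-consequent step (isotony of the composite lower adjoints) that the paper leaves implicit, which is exactly the right detail to supply.
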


In the following observations, we show that $\C[M]$ defined by~\eqref{eqn:CM}
is an $S$-closure operator. Actually, we prove more than that. The operator
is an $\mathbf{S}'$-closure operator for $\mathbf{S}'$ being the
$\mathbf{L}$-parameterization generated by $S$.

\begin{lemma}\label{le:Mmul}
  Let $M \subseteq L$, $S$ be an $L$-parameterization, and $a,b \in L$ such
  that $M \models a \Rightarrow b$. Then,
  $M \models \mul{a} \Rightarrow \mul{b}$ for all $\langle\mul,\shf\rangle$
  of the form~\eqref{eqn:mulcirc}.
\end{lemma}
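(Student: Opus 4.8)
The plan is to verify the conclusion directly against Definition~\ref{def:Msat}, reducing it to the hypothesis by the observation that the collection of all connections of the form~\eqref{eqn:mulcirc} is closed under $\circ$.

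I would fix a connection $\langle\mul,\shf\rangle$ of the form~\eqref{eqn:mulcirc} and aim to show $M \models \mul{a} \Rightarrow \mul{b}$. Unfolding Definition~\ref{def:Msat}, this amounts to taking an arbitrary $m \in M$ together with an arbitrary connection $\langle\mul[0],\shf[0]\rangle$ of the form~\eqref{eqn:mulcirc}, assuming $\mul[0]{\mul{a}} \leq m$, and deriving $\mul[0]{\mul{b}} \leq m$.

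The key observation is that, by~\eqref{eqn:compose}, the composite $\langle\mul[0],\shf[0]\rangle \circ \langle\mul,\shf\rangle$ has as its lower adjoint the ordinary composite map $\mul[0]\mul$, so that $\mul[0]\mul(a) = \mul[0]{\mul{a}}$ and $\mul[0]\mul(b) = \mul[0]{\mul{b}}$. Since $\langle\mul[0],\shf[0]\rangle$ and $\langle\mul,\shf\rangle$ are each finite compositions of members of $S$, their composite is again a finite composition of members of $S$ --- here I would invoke associativity of $\circ$, i.e., the monoid structure recalled in Section~\ref{sec:prelim} --- and is therefore itself of the form~\eqref{eqn:mulcirc}. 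Applying the hypothesis $M \models a \Rightarrow b$ to this composite connection, the assumption $\mul[0]{\mul{a}} \leq m$ forces $\mul[0]{\mul{b}} \leq m$, which is exactly the required inequality.

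The only point needing care is the bookkeeping on composition order in~\eqref{eqn:compose}: one must confirm that $\mul[0]{\mul{a}}$ genuinely is the image of $a$ under the lower adjoint of a single connection of the form~\eqref{eqn:mulcirc}. Once that is settled no lattice-theoretic computation remains; the statement is in essence the assertion that~\eqref{eqn:mulcirc} traces out the sub-monoid generated by $S$ and that truth in $M$ is preserved along every connection in it.
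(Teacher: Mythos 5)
Your proof is correct and is essentially the same argument as the paper's: both unfold Definition~\ref{def:Msat}, observe that the composite $\langle\mul[0],\shf[0]\rangle \circ \langle\mul,\shf\rangle$ is again of the form~\eqref{eqn:mulcirc} since finite compositions of members of $S$ are closed under $\circ$, and apply the hypothesis $M \models a \Rightarrow b$ to that composite. Your care about the composition order in~\eqref{eqn:compose} is well placed and matches the paper's use of $\mul[1]{\mul[2]{a}}$ exactly.
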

\begin{proof}
  Take any $\langle\mul[1],\shf[1]\rangle,\langle\mul[2],\shf[2]\rangle$
  of the form~\eqref{eqn:mulcirc} and suppose that
  $\mul[1]{\mul[2]{a}} \leq m$ for some $m \in M$.
  Since $M \models a \Rightarrow b$ and $\mul[1] \circ \mul[2]$ is also
  of the form~\eqref{eqn:mulcirc}, it readily follows that 
  $\mul[1]{\mul[2]{b}} \leq m$. Hence,
  $M \models \mul[2]{a} \Rightarrow \mul[2]{b}$ as a consequence of
  the fact that $\langle\mul[1],\shf[1]\rangle$ was taken arbitrarily.
\end{proof}

\begin{theorem}
  Let $M \subseteq L$ and let $S$ be an $L$-parameterization.
  Then, $\C[M]$ defined by~\eqref{eqn:CM} is an $\mathbf{S'}$-closure
  operator where
  $\mathbf{S'} = \langle S',\circ,\langle\one,\one\rangle\rangle$ is the
  monoid generated by $S$.
\end{theorem}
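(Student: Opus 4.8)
The plan is to check the three defining properties~\eqref{S:ext}, \eqref{S:mon}, and~\eqref{S:idm} of an $\mathbf{S'}$-closure operator directly from the definition~\eqref{eqn:CM} of $\C[M]$, using three facts already established: Corollary~\ref{co:Mchar} ($M \models a \Rightarrow b$ iff $b \leq \C[M]{a}$), Lemma~\ref{le:supclose} (equivalently, $M \models a \Rightarrow \C[M]{a}$), and the adjunction properties~\eqref{eqn:gal}, \eqref{eqn:mulshf}, \eqref{eqn:mon_mul}. A structural point I would fix first is that the satisfaction relation $\models$ of Definition~\ref{def:Msat} already quantifies over every connection of the form~\eqref{eqn:mulcirc}, i.e.\ over all of $S'$; this is precisely why $\C[M]$ will be an $\mathbf{S'}$-closure operator rather than merely an $S$-closure operator, and it is what makes the closure of $S'$ under~$\circ$ available when I compose connections.

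First I would dispose of extensivity and monotony. For~\eqref{S:ext}, the clause ``if $\mul{a} \leq m$ then $\mul{a} \leq m$'' holds for every $m$ and every connection, so $M \models a \Rightarrow a$, giving $a \in \D[M]{a}$ and hence $a \leq \bigvee\D[M]{a} = \C[M]{a}$. For~\eqref{S:mon}, assuming $a \leq b$ I would show $\D[M]{a} \subseteq \D[M]{b}$: if $M \models a \Rightarrow e$ and $\mul{b} \leq m$ for a connection of the form~\eqref{eqn:mulcirc}, then isotony~\eqref{eqn:mon_mul} gives $\mul{a} \leq \mul{b} \leq m$, so $\mul{e} \leq m$; thus $M \models b \Rightarrow e$, and passing to suprema yields $\C[M]{a} \leq \C[M]{b}$.

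The hard part will be idempotency~\eqref{S:idm}, and I would set it up by two reductions. Fix an arbitrary $\langle\mul,\shf\rangle \in S'$ (so it has the form~\eqref{eqn:mulcirc}) and write $c = \C[M]{a}$. By the adjunction~\eqref{eqn:gal}, proving $b \leq \shf{c}$ for each $b \in \D[M]{\shf{c}}$ is the same as proving $\mul{b} \leq c$, and by Corollary~\ref{co:Mchar} this in turn is equivalent to the semantic statement $M \models a \Rightarrow \mul{b}$. So everything reduces to the implication: if $M \models \shf{c} \Rightarrow b$ then $M \models a \Rightarrow \mul{b}$. To verify the conclusion I would take any $m \in M$ and any connection $\langle\mul[1],\shf[1]\rangle$ of the form~\eqref{eqn:mulcirc} with $\mul[1]{a} \leq m$; since $M \models a \Rightarrow c$ by Lemma~\ref{le:supclose}, this already forces $\mul[1]{c} \leq m$.

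The crux is then a composition argument, and this is where I expect the only genuine subtlety. I would consider the composite connection $\langle\mul[1],\shf[1]\rangle \circ \langle\mul,\shf\rangle = \langle\mul[1]\mul,\shf\shf[1]\rangle$, which is again of the form~\eqref{eqn:mulcirc} exactly because $S'$ is closed under $\circ$. Applying its lower adjoint to $\shf{c}$ and using $\mul{\shf{c}} \leq c$ from~\eqref{eqn:mulshf} together with isotony~\eqref{eqn:mon_mul} of $\mul[1]$ gives $\mul[1]{\mul{\shf{c}}} \leq \mul[1]{c} \leq m$. Feeding this into the hypothesis $M \models \shf{c} \Rightarrow b$, instantiated at the composite $\langle\mul[1]\mul,\shf\shf[1]\rangle$, yields $\mul[1]{\mul{b}} \leq m$, which, as $m$ and $\langle\mul[1],\shf[1]\rangle$ were arbitrary, is exactly $M \models a \Rightarrow \mul{b}$. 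Hence $\mul{b} \leq c$ and so $b \leq \shf{c}$; as $b \in \D[M]{\shf{c}}$ was arbitrary, $\C[M]{\shf{c}} = \bigvee\D[M]{\shf{c}} \leq \shf{c}$, establishing~\eqref{S:idm}. The one thing to be careful about is that the composed connection must still be admissible for the semantics of $\models$, which is guaranteed precisely by working in the monoid $\mathbf{S'}$ generated by $S$ rather than in $S$ itself.
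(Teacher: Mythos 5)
Your proof is correct and takes essentially the same route as the paper: the treatment of extensivity and monotony is identical, and your key step for idempotency --- transferring satisfaction along the lower adjoint $\mul$ by instantiating $\models$ at the composite connection $\langle\mul[1],\shf[1]\rangle \circ \langle\mul,\shf\rangle$, which is admissible exactly because $S'$ is closed under $\circ$ --- is precisely the content of the paper's Lemma~\ref{le:Mmul}. The only difference is organizational: the paper factors that composition argument into Lemma~\ref{le:Mmul} and then chains it with the weakening and transitivity properties of $\models$ from Remark~\ref{rem:Msat}\,(b), whereas you inline the whole computation against the definition of $\models$; the mathematical content is the same.
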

\begin{proof}
  First, observe that the fact that $\langle\mul,\shf\rangle$ is of the
  form~\eqref{eqn:mulcirc} is equivalent to stating that
  $\langle\mul,\shf\rangle$ belongs to $S'$. Using this observation
  we prove that $\C[M]$ is indeed an $\mathbf{S'}$-closure operator.
  
  For any $a \in L$, we trivially have $M \models a \Rightarrow a$ and thus
  $a \in \D[M]{a}$, meaning that $a \leq \C[M]{a}$. Thus, $\C[M]$
  satisfies~\eqref{S:ext}.

  In order to prove \eqref{S:mon}, take $a_1,a_2 \in L$ such that
  $a_1 \leq a_2$. Let $b \in \D[M]{a_1}$. By definition, we have
  $M \models a_1 \Rightarrow b$. Now, take any $m \in M$ and
  $\langle\mul,\shf\rangle \in S'$ such that $\mul{a_2} \leq m$.  Using
  $a_1 \leq a_2$ and~\eqref{eqn:mon_mul}, we get
  $\mul{a_1} \leq \mul{a_2} \leq m$ and so $\mul{b} \leq m$.  That is,
  $M \models a_2 \Rightarrow b$, i.e., $b \in \D[M]{a_2}$. Therefore,
  $\D[M]{a_1} \subseteq \D[M]{a_2}$ and thus $\C[M]{a_1} \leq \C[M]{a_2}$,
  proving~\eqref{S:mon}.

  It remains to check that~\eqref{S:idm} is satisfied. Using~\eqref{eqn:DM}
  and~\eqref{eqn:CM}, it suffices to check that $b \leq \shf{\C[M]{a}}$
  provided that $b \in \D[M]{\shf{\C[M]{a}}}$. Note that the fact that
  $b \in \D[M]{\shf{\C[M]{a}}}$ means
  $M \models \shf{\C[M]{a}} \Rightarrow b$. Using Lemma~\ref{le:Mmul}, the
  last observation gives
  $M \models \mul{\shf{\C[M]{a}}} \Rightarrow \mul{b}$.  Furthermore,
  using~\eqref{eqn:mulshf}, is follows that
  $M \models \C[M]{a} \Rightarrow \mul{b}$, see
  Remark~\ref{rem:Msat}\,(b). Applying $M \models a \Rightarrow \C[M]{a}$,
  which is a consequence of Lemma~\ref{le:supclose}, together with the
  transitivity of~$\leq$, we get $M \models a \Rightarrow \mul{b}$. Hence,
  $\mul{b} \leq \C[M]{a}$ and $b \leq \shf{\C[M]{a}}$ owing
  to~\eqref{eqn:gal}.
\end{proof}

Our last observation in this subsection shows that $\C[M]$ can also be
introduced without any explicit reference to $\models$. The next assertion
shows a way to compute $\C[M]{a}$ for any $a \in L$ using the data (i.e.,
the subset $M$) and the utilized parameterization.

\begin{theorem}\label{th:CMalt}
  Let $S$ be an $L$-parameterization and let
  $\mathbf{S'} = \langle S',\circ,\langle\one,\one\rangle\rangle$ be the
  monoid generated by $S$. Then,
  \begin{align}
    \C[M]{a} &=
    \textstyle\bigwedge\{\shf{m};\, 
    m \in M
    \text{, }
    \langle \mul,\shf\rangle \in S'
    \text{, and }
    a \leq \shf{m}\},
    \label{eqn:CMalt}
  \end{align}
  for any $M \subseteq L$ and any $a \in L$.
\end{theorem}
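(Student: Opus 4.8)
The plan is to establish the equality~\eqref{eqn:CMalt} by proving the two inequalities $\C[M]{a} \leq R$ and $R \leq \C[M]{a}$ separately, where I abbreviate the right-hand side of~\eqref{eqn:CMalt} as $R = \bigwedge\{\shf{m};\, m \in M,\, \langle\mul,\shf\rangle \in S',\, a \leq \shf{m}\}$. The single most useful fact throughout will be the adjointness~\eqref{eqn:gal}, which lets me rewrite the side condition ``$a \leq \shf{m}$'' appearing in the infimum equivalently as ``$\mul{a} \leq m$'', so that the terms of $R$ are exactly indexed by the pairs $m \in M$ and $\langle\mul,\shf\rangle \in S'$ that witness the premise of the satisfaction relation $M \models a \Rightarrow (-)$. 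I also recall that, as observed in the proof of the preceding theorem, $\langle\mul,\shf\rangle$ is of the form~\eqref{eqn:mulcirc} precisely when $\langle\mul,\shf\rangle \in S'$, so the quantifiers of Definition~\ref{def:Msat} range over $S'$.

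For $\C[M]{a} \leq R$, I would invoke Lemma~\ref{le:supclose}, which gives $\C[M]{a} \in \D[M]{a}$, i.e.\ $M \models a \Rightarrow \C[M]{a}$. Fixing any term $\shf{m}$ of $R$ (so $m \in M$, $\langle\mul,\shf\rangle \in S'$, and $a \leq \shf{m}$, hence $\mul{a} \leq m$ by~\eqref{eqn:gal}), the satisfaction relation yields $\mul{\C[M]{a}} \leq m$, and applying~\eqref{eqn:gal} once more gives $\C[M]{a} \leq \shf{m}$. As this holds for every term, $\C[M]{a}$ is a lower bound of the set whose infimum is $R$, whence $\C[M]{a} \leq R$.

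For the reverse inequality I would show $R \in \D[M]{a}$, i.e.\ $M \models a \Rightarrow R$; the identity $\C[M]{a} = \bigvee\D[M]{a}$ (equivalently Corollary~\ref{co:Mchar}) then yields $R \leq \C[M]{a}$. To verify $M \models a \Rightarrow R$, take any $m \in M$ and $\langle\mul,\shf\rangle \in S'$ with $\mul{a} \leq m$. Then $a \leq \shf{m}$ by~\eqref{eqn:gal}, so $\shf{m}$ is one of the terms defining $R$ and hence $R \leq \shf{m}$. Monotony~\eqref{eqn:mon_mul} gives $\mul{R} \leq \mul{\shf{m}}$, and property~\eqref{eqn:mulshf} gives $\mul{\shf{m}} \leq m$; chaining these yields $\mul{R} \leq m$, which is exactly what $M \models a \Rightarrow R$ requires.

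The computations are short, so I expect no deep obstacle; the care needed is in the bookkeeping. One must consistently translate between the two equivalent forms $a \leq \shf{m}$ and $\mul{a} \leq m$ of the side condition via~\eqref{eqn:gal}, and keep the quantifier ranging over all of $S'$ rather than merely $S$. The slightly more substantive of the two directions is $R \leq \C[M]{a}$, since it requires showing that $R$ is itself an admissible consequent, and this is precisely where property~\eqref{eqn:mulshf} enters. Finally, the one degenerate situation worth a remark is an empty index set for the infimum: then $R = \bigwedge\emptyset = 1$, while emptiness means no premise $\mul{a} \leq m$ is ever met, so $M \models a \Rightarrow b$ holds vacuously for every $b$, giving $\D[M]{a} = L$ and $\C[M]{a} = 1$; both arguments above go through vacuously, so no separate case analysis is strictly needed.
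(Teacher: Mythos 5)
Your proof is correct, and it reaches \eqref{eqn:CMalt} by a somewhat different route than the paper. Writing $R$ for the right-hand side of \eqref{eqn:CMalt}, the paper proves the equality by a single equational chain: it unfolds \eqref{eqn:DM} and \eqref{eqn:CM} together with Definition~\ref{def:Msat}, rewrites both conditions $\mul{a} \leq m$ and $\mul{b} \leq m$ as $a \leq \shf{m}$ and $b \leq \shf{m}$ via \eqref{eqn:gal} (the same pivot you use), and then finishes in one stroke with the lattice-theoretic fact that the supremum of the lower cone of a set equals its infimum. You instead prove two inequalities: $\C[M]{a} \leq R$ from Lemma~\ref{le:supclose}, since $M \models a \Rightarrow \C[M]{a}$ makes $\C[M]{a}$ a lower bound of all the terms $\shf{m}$, and $R \leq \C[M]{a}$ by checking that $R \in \D[M]{a}$, i.e., that the infimum is itself an admissible consequent. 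The mathematical content is the same---your verification that $M \models a \Rightarrow R$ is precisely what justifies the paper's final equality, namely that the infimum belongs to the lower cone whose supremum is being taken---but your decomposition is more modular, reusing Lemma~\ref{le:supclose} and Corollary~\ref{co:Mchar}, whereas the paper's computation is self-contained and more compact. Two minor points: in your second direction, the step from $R \leq \shf{m}$ to $\mul{R} \leq m$ is already one half of \eqref{eqn:gal}, so the detour through \eqref{eqn:mon_mul} and \eqref{eqn:mulshf} is correct but unnecessary; and your identification of the pairs of the form \eqref{eqn:mulcirc} with the members of $S'$, as well as your remark that the empty-index case needs no separate treatment, are both accurate.
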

\begin{proof}
  Using Definition~\ref{def:Msat}, \eqref{eqn:gal}, and the fact that
  infima in complete lattices can be expressed as suprema of lower cones,
  we get
  \begin{align*}
    \C[M]{a} &=
    \textstyle\bigvee\{b \in L;\, M \models a \Rightarrow b\} \\
    &=
    \textstyle\bigvee\{b \in L;\,
    \text{for all }
    m \in M, \langle\mul,\shf\rangle \in S'\!:\,
    \mul{a} \leq m \text{ implies } \mul{b} \leq m\} \\
    &=
    \textstyle\bigvee\{b \in L;\,
    \text{for all }
    m \in M,
    \langle\mul,\shf\rangle \in S',
    a \leq \shf{m}\!:\,
    b \leq \shf{m}\} \\
    &=
    \textstyle\bigwedge\{\shf{m};\, 
    m \in M
    \text{, }
    \langle \mul,\shf\rangle \in S'
    \text{, and }
    a \leq \shf{m}\},
  \end{align*}
  which proves the claim.
\end{proof}

\begin{remark}
  Returning to Remark~\ref{rem:Msat}\,(a), it can be shown
  that~\eqref{eqn:CMalt} generalizes the operator which for any
  $A \subseteq Y$ returns the intent generated by $A$. Indeed, taking into
  account the special case described in Remark~\ref{rem:Msat}\,(a),
  applying~\eqref{eqn:CMalt}, we get
  \begin{align*}
    \C[M]{A} &=
    \textstyle\bigcap\bigl\{\{x\}^{\uparrow_I};\,
    x \in X \text{ and } A \subseteq \{x\}^{\uparrow_I}\bigr\}
    \\
    &=
    A^{\downarrow_I\uparrow_I},
  \end{align*}
  where ${}^{\downarrow_I\uparrow_I}$ is the composition of the usual
  concept-forming operators~\cite{Ga:Tbaca,GaWi:FCA}. Hence,
  Corollary~\ref{co:Mchar} and Theorem~\ref{th:CMalt} yield
  $M \models A \Rightarrow B$ if{}f $B \subseteq A^{\downarrow_I\uparrow_I}$
  which is the standard criterion for $A \Rightarrow B$ being true in the
  formal context $I$, see~\cite[p.~80]{GaWi:FCA} for more details.
\end{remark}

\begin{example}
  Our general definition of if-then formulas true in given $M \subseteq L$
  under an $L$-parameterization $S$ encompasses several approaches which
  appeared earlier. We have already shown in Remark~\ref{rem:Msat} that
  this is true for the classic attribute implications. In addition, 
  the present approach generalizes the notions of formulas being true in
  data with graded attributes and temporal data.

  (a) In the approaches to fuzzy attribute implications parameterized by
  hedges and similarity-based functional dependencies with the same type of
  parameterization~\cite{BeVy:Addg2}, we can consider the same
  parameterization as in Example~\ref{ex:syst}\,(b). In that case,
  $A \Rightarrow B$ is true in $I\!: X \times Y \to L$ to degree $1$ if{}f
  $M \models A \Rightarrow B$ in the sense of Definition~\ref{def:Msat} for
  $M = \{I_x;\, x \in X\}$ where $I_x\!: Y \to L$ such that $I_x(y) = I(x,y)$
  for all $x \in X$ and all $y \in Y$, see~\cite{BeVy:ICFCA} for
  details. Analogous observation can be made for the approach to
  similarity-based functional dependencies in~\cite{BeVy:DASFAA}. Let us add
  that both~\cite{BeVy:ICFCA} and~\cite{BeVy:DASFAA} deal with graded notion
  of truth of fuzzy attribute implications, i.e., in general the papers
  consider degrees (other than $0$ and $1$) to which formulas are true in
  given data. By the formalism presented in this section, we capture only the
  concept of being true to degree~$1$ which may be seen as a limitation,
  however, as it is shown in~\cite[Theorem 22]{Vy:Pasefai}, in the presence
  of general parameterizations, the graded notion of truth of fuzzy attribute
  implications can be expressed solely using the concept of full truth (i.e.,
  truth to degree $1$).

  (b) From the point of view of attribute implications annotated by time
  points that have been introduced in~\cite{TrVy:TAsisaits}, we can consider
  the same corresponding parameterization as in Example~\ref{ex:syst}\,(c).
  According to~\cite{TrVy:TAsisaits}, the input data can be seen as a triadic
  context~\cite{LeWi:Tafca} with conditions being time points
  in~$\mathbb{Z}$. Thus, for $I \subseteq X \times Y \times \mathbb{Z}$ and
  $A,B \subseteq Y \times \mathbb{Z}$ (where both $A$ and $B$ are finite),
  $A \Rightarrow B$ is considered true in $I$, see~\cite{TrVy:TAsisaits},
  whenever for every $i \in \mathbb{Z}$ and $x \in X$, we have that
  \begin{align*}
    &\text{if for any }
    \langle y,z\rangle \in A
    \text{ we have }
    \langle x,y,z+i\rangle \in I
    \text{,}
    \\
    &\text{then for any }
    \langle y,z\rangle \in B
    \text{ we have }
    \langle x,y,z+i\rangle \in I
    \text{.}
  \end{align*}
  As in the previous cases, it can be easily seen that $A \Rightarrow B$ is
  true in $I$ in sense of~\cite{TrVy:TAsisaits} if{}f
  $M \models A \Rightarrow B$ according to Definition~\ref{def:Msat} for
  $M = \{I_x;\, x \in X\}$ where
  $I_x = \{\langle y,z\rangle;\, \langle x,y,z\rangle \in I\}$
  for all $x \in X$.
\end{example}

\section{Properties of $\mathbf{S}$-closure operators}\label{sec:props}
In this section, we present further properties of $S$-closure operators. We
present equivalent ways to define such parameterized operators and present
$S$-closure systems which are related to $S$-closure operators in a way which
is analogous to the relationship of the classic closure systems and
operators. In order to simplify notation,
$\mathbf{L} = \langle L,\leq\rangle$ always stands for a complete lattice and
$\langle S,\circ,\langle\one,\one\rangle\rangle$ always stands for an
$L$-parameterization.

\begin{theorem}\label{th:mon_idm}
  An operator $\C\!: L \to L$ is an $S$-closure operator in $\mathbf{L}$
  if{}f it satisfies \eqref{S:ext} and
  \begin{align}
    &a \leq \shf{\C{b}} \text{ implies } \C{a} \leq \shf{\C{b}},
    \label{S:mon_idm}
  \end{align}
  holds for all $a,b \in L$ and all $\langle \mul,\shf\rangle \in S$.
\end{theorem}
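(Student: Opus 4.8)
The plan is to prove the two implications separately, showing that the single condition~\eqref{S:mon_idm} simultaneously encodes ordinary monotony~\eqref{S:mon} and the parameterized idempotency~\eqref{S:idm}.

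For the forward direction I would assume $\C$ is an $S$-closure operator and derive~\eqref{S:mon_idm}. Fixing $\langle\mul,\shf\rangle \in S$ and supposing $a \leq \shf{\C{b}}$, monotony~\eqref{S:mon} gives $\C{a} \leq \C{\shf{\C{b}}}$, while the parameterized idempotency~\eqref{S:idm} gives $\C{\shf{\C{b}}} \leq \shf{\C{b}}$; chaining these two inequalities by transitivity yields $\C{a} \leq \shf{\C{b}}$, which is exactly~\eqref{S:mon_idm}. This shows that both~\eqref{S:mon} and~\eqref{S:idm} are genuinely used, so~\eqref{S:mon_idm} really does bundle both conditions together.

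For the converse I would assume~\eqref{S:ext} and~\eqref{S:mon_idm} and recover both~\eqref{S:mon} and~\eqref{S:idm}. The crucial observation is that $\langle\one,\one\rangle \in S$ by the definition of an $L$-parameterization, so instantiating~\eqref{S:mon_idm} at this particular connection collapses $\shf{\C{b}}$ to $\C{b}$ and reads: $a \leq \C{b}$ implies $\C{a} \leq \C{b}$. To obtain~\eqref{S:mon}, I would take $a \leq b$; extensivity~\eqref{S:ext} gives $b \leq \C{b}$, whence $a \leq \C{b}$, and the instantiated condition delivers $\C{a} \leq \C{b}$. To obtain~\eqref{S:idm}, I would fix an arbitrary $\langle\mul,\shf\rangle \in S$ and any $a \in L$, and apply~\eqref{S:mon_idm} with its free antecedent variable set to $\shf{\C{a}}$ while keeping the second variable equal to $a$; the hypothesis $\shf{\C{a}} \leq \shf{\C{a}}$ holds by reflexivity of $\leq$, so the conclusion $\C{\shf{\C{a}}} \leq \shf{\C{a}}$ follows immediately, which is precisely~\eqref{S:idm}.

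The computations here are all one-line, so I do not expect a serious technical obstacle; the only point demanding care is the reliance on $\langle\one,\one\rangle \in S$ in the converse, since without the identity connection belonging to the parameterization, condition~\eqref{S:mon_idm} would not force ordinary monotony. The complementary subtlety is the reflexivity trick used to extract~\eqref{S:idm}: recognizing that the idempotency half is obtained simply by feeding the trivial instance $\shf{\C{a}} \leq \shf{\C{a}}$ into~\eqref{S:mon_idm} is the conceptual heart of the equivalence.
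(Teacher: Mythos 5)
Your proof is correct and follows exactly the paper's own argument: the forward direction chains \eqref{S:mon} with \eqref{S:idm}, and the converse recovers \eqref{S:mon} by instantiating \eqref{S:mon_idm} at $\langle\one,\one\rangle$ together with extensivity, and recovers \eqref{S:idm} via the reflexive instance $\shf{\C{a}} \leq \shf{\C{a}}$. No gaps; nothing to add.
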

\begin{proof}
  First, we show that each $S$-closure operator
  satisfies~\eqref{S:mon_idm}. Suppose that for $a,b \in L$
  we have $a \leq \shf{\C{b}}$. Using~\eqref{S:mon} and~\eqref{S:idm},
  it readily follows that
  \begin{align*}
    &\C{a} \leq \C{\shf{\C{b}}} \leq \shf{\C{b}},
  \end{align*}
  showing that \eqref{S:mon_idm} holds.

  Conversely, we show that assuming~\eqref{S:ext} and~\eqref{S:mon_idm},
  it follows that the operator $\C$ satisfies~\eqref{S:mon} and~\eqref{S:idm}.
  First, assume that $a \leq b$ and using~\eqref{S:ext} and the fact that
  $\langle\one,\one\rangle \in S$, we get
  \begin{align*}
    &a \leq b \leq \C{b} = \one{\C{b}}.
  \end{align*}
  Therefore, applying~\eqref{S:mon_idm} for $\shf = \one$, it follows that
  \begin{align*}
    &\C{a} \leq \one{\C{b}} = \C{b}
  \end{align*}
  which proves that~\eqref{S:mon} is satisfied. In order to show
  that~\eqref{S:idm} holds, it suffices to consider
  $\shf{\C{a}} \leq \shf{\C{a}}$ and apply~\eqref{S:mon_idm}
  to get~\eqref{S:idm}.
\end{proof}

\begin{theorem}\label{th:mon_alt}
  An operator $\C\!: L \to L$
  is an $S$-closure operator in $\mathbf{L}$ if{}f
  it satisfies \eqref{S:ext}, \eqref{S:ord_idm}, and
  \begin{align}
    a \leq \shf{b} \text{ implies } \C{a} \leq \shf{\C{b}}
    \label{S:mon_alt}
  \end{align}
  holds for all $a,b \in L$ and all $\langle\mul,\shf\rangle \in S$.
\end{theorem}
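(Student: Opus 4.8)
The plan is to mirror the two-directional argument already used for Theorem~\ref{th:mon_idm}, exploiting that \eqref{S:mon_alt} is the same flavor of ``monotone-plus-idempotent'' statement, only phrased through $\shf{b}$ rather than through $\shf{\C{b}}$. The one structural difference is that ordinary idempotency \eqref{S:ord_idm} now has to be carried as a separate hypothesis, and I expect that to be exactly what repairs the mismatch between $\C{b}$ and $b$ in the converse direction.

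For the forward direction I would assume $\C$ is an $S$-closure operator and check the three listed conditions. Extensivity \eqref{S:ext} is shared, and \eqref{S:ord_idm} is immediate by instantiating \eqref{S:idm} at $\langle\one,\one\rangle \in S$, exactly as observed just after Definition~\ref{def:S-clop}. For \eqref{S:mon_alt}, suppose $a \leq \shf{b}$; since $b \leq \C{b}$ by \eqref{S:ext} and $\shf$ is isotone by \eqref{eqn:mon_shf}, I get $\shf{b} \leq \shf{\C{b}}$, hence $a \leq \shf{\C{b}}$. Applying \eqref{S:mon} and then \eqref{S:idm} yields $\C{a} \leq \C{\shf{\C{b}}} \leq \shf{\C{b}}$, which is precisely \eqref{S:mon_alt}.

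For the converse I would assume \eqref{S:ext}, \eqref{S:ord_idm}, and \eqref{S:mon_alt}, and recover \eqref{S:mon} and \eqref{S:idm}. Monotony is the easy half: taking $\langle\mul,\shf\rangle = \langle\one,\one\rangle$ in \eqref{S:mon_alt} turns the hypothesis $a \leq b$ directly into $\C{a} \leq \C{b}$. The crux is \eqref{S:idm}, and here I would instantiate \eqref{S:mon_alt} at $a := \shf{\C{a}}$ and $b := \C{a}$; the premise $\shf{\C{a}} \leq \shf{\C{a}}$ holds trivially, so the conclusion gives $\C{\shf{\C{a}}} \leq \shf{\C{\C{a}}}$. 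Ordinary idempotency \eqref{S:ord_idm} then gives $\C{\C{a}} \leq \C{a}$, and isotony of $\shf$ via \eqref{eqn:mon_shf} propagates this to $\shf{\C{\C{a}}} \leq \shf{\C{a}}$; chaining the two inequalities delivers $\C{\shf{\C{a}}} \leq \shf{\C{a}}$, i.e.\ \eqref{S:idm}.

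The main subtlety, and the only place where \eqref{S:ord_idm} is genuinely needed, is that last instantiation: \eqref{S:mon_alt} naturally produces $\shf{\C{\C{a}}}$ on the right rather than the desired $\shf{\C{a}}$, and without an independent idempotency assumption there is no way to collapse the inner double application of $\C$. This is why \eqref{S:ord_idm} appears as a standalone hypothesis here, in contrast to Theorem~\ref{th:mon_idm}, where the single condition \eqref{S:mon_idm} already has $\shf{\C{b}}$ on its right-hand side and so bakes the idempotency in for free.
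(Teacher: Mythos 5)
Your proof is correct and follows essentially the same route as the paper: the converse direction (instantiating \eqref{S:mon_alt} at $a := \shf{\C{a}}$, $b := \C{a}$, then collapsing $\shf{\C{\C{a}}}$ to $\shf{\C{a}}$ via \eqref{S:ord_idm} and \eqref{eqn:mon_shf}, and getting \eqref{S:mon} from $\shf = \one$) is identical to the paper's argument. The only cosmetic difference is in the forward direction, where the paper cites condition \eqref{S:mon_idm} from Theorem~\ref{th:mon_idm} while you inline the same two-step chain $\C{a} \leq \C{\shf{\C{b}}} \leq \shf{\C{b}}$ directly from \eqref{S:mon} and \eqref{S:idm}.
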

\begin{proof}
  The fact that each $S$-closure operator satisfies~\eqref{S:mon_alt} follows
  by Theorem~\ref{th:mon_idm} using the fact that~\eqref{S:mon_alt} is a
  weaker condition than~\eqref{S:mon_idm}. Indeed, if $\C$ is an $S$-closure
  operator and $a \leq \shf{b}$, then we get $a \leq \C{\shf{b}}$ owing
  to~\eqref{S:ext} and~\eqref{eqn:mon_shf}. Thus, $\C{a} \leq \shf{\C{b}}$
  is a consequence of~\eqref{S:mon_idm}. Conversely, let $\C$
  satisfy~\eqref{S:ext}, \eqref{S:ord_idm}, and~\eqref{S:mon_alt}. Observe
  that the isotony~\eqref{S:mon} follows directly by~\eqref{S:mon_alt} for
  $\shf = \one$. Thus, it suffices to check that $\C$
  satisfies~\eqref{S:idm}. Take any $a \in L$ and
  $\langle\mul,\shf\rangle \in S$. Using $\shf{\C{a}} \leq \shf{\C{a}}$ and
  \eqref{S:mon_alt}, we get
  \begin{align*}
    \C{\shf{\C{a}}} \leq \shf{\C{\C{a}}}.
  \end{align*}
  Therefore, \eqref{S:idm} is a consequence of the previous inequality,
  \eqref{eqn:mon_shf}, and~\eqref{S:ord_idm}.
\end{proof}

\begin{remark}
  The requirement of~\eqref{S:ord_idm} being satisfied in
  Theorem~\ref{th:mon_alt} cannot be dropped. For instance, if
  $S = \{\langle\one,\one\rangle\}$ and $L = \{0,a,1\}$ such that
  $0 < a < 1$, then we can consider an isotone, extensive, and non-idempotent
  operator $\C\!: L \to L$ defined by $\C{0} = a$ and $\C{a} = \C{1} = 1$.
  Obviously, such an operator satisfies both~\eqref{S:ext}
  and~\eqref{S:mon_alt} but it is not an $S$-closure operator.
\end{remark}

Note that using~\eqref{eqn:gal}, we can express~\eqref{S:mon_idm}
and~\eqref{S:mon_alt} in terms of the lower adjoints in $S$ instead of the
upper adjoints in $S$. The following assertions show further properties
of $S$-closure operators related to lower and upper adjoints.

\begin{theorem}\label{th:mulC_Cmul}
  Let $\C\!: L \to L$ be an $S$-closure operator in $\mathbf{L}$. Then,
  \begin{align}
    \mul{\C{a}} &\leq \C{\mul{a}},
    \label{eqn:mulC_leq_Cmul}
    \\
    \C{\shf{a}} &\leq \shf{\C{a}},
    \label{eqn:Cshf_leq_shfC}
    \\
    \mul{\C{\shf{a}}} &\leq \shf{\C{\mul{a}}},
    \label{eqn:mulCshf_leq_shfCmul}
  \end{align}
  hold for all $a \in L$ and any $\langle\mul,\shf\rangle \in S$.
\end{theorem}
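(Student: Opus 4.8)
The plan is to establish the three inequalities in the order \eqref{eqn:Cshf_leq_shfC}, \eqref{eqn:mulC_leq_Cmul}, \eqref{eqn:mulCshf_leq_shfCmul}, exploiting the equivalent formulations of the $S$-closure axioms already proved in Theorems~\ref{th:mon_idm} and~\ref{th:mon_alt} together with the Galois connection~\eqref{eqn:gal}. The guiding idea throughout is that each inequality should be argued on the side of whichever adjoint lets a closure axiom apply, and that~\eqref{eqn:gal} can be used to shuttle freely between a statement about $\mul$ and the corresponding statement about $\shf$.

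First I would dispatch~\eqref{eqn:Cshf_leq_shfC}. This is immediate from condition~\eqref{S:mon_alt} of Theorem~\ref{th:mon_alt}: instantiating that implication at the trivially true inequality $\shf{a} \leq \shf{a}$ (taking its antecedent element to be $\shf{a}$ and its consequent element to be $a$) yields $\C{\shf{a}} \leq \shf{\C{a}}$ directly. Next, for~\eqref{eqn:mulC_leq_Cmul}, I would pass to the upper adjoint: by~\eqref{eqn:gal} the claim $\mul{\C{a}} \leq \C{\mul{a}}$ is equivalent to $\C{a} \leq \shf{\C{\mul{a}}}$. To obtain the latter I would invoke condition~\eqref{S:mon_idm} of Theorem~\ref{th:mon_idm} with its roles played by $a$ and $\mul{a}$, which requires only verifying the hypothesis $a \leq \shf{\C{\mul{a}}}$. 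That hypothesis follows by chaining $a \leq \shf{\mul{a}}$ from~\eqref{eqn:shfmul}, the extensivity $\mul{a} \leq \C{\mul{a}}$ from~\eqref{S:ext}, and the monotony~\eqref{eqn:mon_shf} of $\shf$. Then~\eqref{S:mon_idm} delivers $\C{a} \leq \shf{\C{\mul{a}}}$, and~\eqref{eqn:gal} converts this back to~\eqref{eqn:mulC_leq_Cmul}.

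Finally, I expect~\eqref{eqn:mulCshf_leq_shfCmul} to follow by merely composing the first two results. Applying the monotone $\mul$ (via~\eqref{eqn:mon_mul}) to~\eqref{eqn:Cshf_leq_shfC} and then using $\mul{\shf{\C{a}}} \leq \C{a}$ from~\eqref{eqn:mulshf} gives $\mul{\C{\shf{a}}} \leq \C{a}$; meanwhile the adjoint form $\C{a} \leq \shf{\C{\mul{a}}}$ of~\eqref{eqn:mulC_leq_Cmul}, which is produced en route in the previous step, supplies the remaining link. Transitivity of $\leq$ then yields~\eqref{eqn:mulCshf_leq_shfCmul}.

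I do not anticipate a genuine obstacle here; the only real subtlety is bookkeeping. The crucial observations are that the first two inequalities become short once one rewrites them through~\eqref{eqn:gal} so that~\eqref{S:mon_alt} and~\eqref{S:mon_idm} become directly applicable, and that the third statement is \emph{not} independent but a consequence of the first two through~\eqref{eqn:mon_mul} and~\eqref{eqn:mulshf}. Recognizing this last point is what keeps the argument from degenerating into a separate, more delicate manipulation.
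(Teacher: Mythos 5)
Your proposal is correct and is essentially the paper's own argument: the paper proves \eqref{eqn:mulC_leq_Cmul} and \eqref{eqn:Cshf_leq_shfC} by the chains extensivity--adjunction--monotony--\eqref{S:idm}, which is precisely what your invocations of \eqref{S:mon_idm} and \eqref{S:mon_alt} unfold to once the (non-circular, already established) forward directions of Theorems~\ref{th:mon_idm} and~\ref{th:mon_alt} are expanded, and it derives \eqref{eqn:mulCshf_leq_shfCmul} exactly as you do, by converting the first two inequalities through \eqref{eqn:gal} and chaining them by transitivity. The only differences are cosmetic: you reuse the two preceding characterization theorems instead of re-deriving from \eqref{S:ext}--\eqref{S:idm}, and in places you apply the unit/counit \eqref{eqn:shfmul}/\eqref{eqn:mulshf} together with monotony where the paper applies \eqref{eqn:gal} in a single step.
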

\begin{proof}
  In order to prove~\eqref{eqn:mulC_leq_Cmul}, we apply \eqref{S:ext} to get
  $\mul{a} \leq \C{\mul{a}}$ and thus $a \leq \shf{\C{\mul{a}}}$ owing
  to~\eqref{eqn:gal}. Using~\eqref{S:mon}, we further get
  $\C{a} \leq \C{\shf{\C{\mul{a}}}}$. Now, applying~\eqref{S:idm}, it
  follows that $\C{a} \leq \shf{\C{\mul{a}}}$. Hence,
  $\mul{\C{a}} \leq \C{\mul{a}}$ using~\eqref{eqn:gal}.

  To see that~\eqref{eqn:Cshf_leq_shfC} holds, we start with $a \leq \C{a}$,
  which is an instance of~\eqref{S:ext}, and~apply \eqref{eqn:mon_shf} to get
  $\shf{a} \leq \shf{\C{a}}$. Thus, $\C{\shf{a}} \leq \C{\shf{\C{a}}}$ as a
  consequence of~\eqref{S:mon}. Hence, using~\eqref{S:idm}, we obtain
  $\C{\shf{a}} \leq \shf{\C{a}}$ which proves~\eqref{eqn:Cshf_leq_shfC}.

  Now, \eqref{eqn:mulCshf_leq_shfCmul} is a consequence
  of~\eqref{eqn:mulC_leq_Cmul} and~\eqref{eqn:Cshf_leq_shfC}.
  Indeed, \eqref{eqn:mulC_leq_Cmul} and~\eqref{eqn:gal} yield
  $\C{a} \leq \shf{\C{\mul{a}}}$ and analogously 
  \eqref{eqn:Cshf_leq_shfC} and~\eqref{eqn:gal} give
  $\mul{\C{\shf{a}}} \leq \C{a}$. Then, \eqref{eqn:mulCshf_leq_shfCmul}
  follows using the transitivity of $\leq$.
\end{proof}

Since $S$-closure operators as well as all the upper and lower adjoints
are isotone, the inequalities from the previous assertion can be generalized
to if-then conditions which can be seen as generalized isotony conditions.

\begin{corollary}
  Let $\C\!: L \to L$ be an $S$-closure operator in $\mathbf{L}$. Then,
  \begin{align}
    a \leq b &\text{ implies } \mul{\C{a}} \leq \C{\mul{b}},
    \label{eqn:im_mulC_leq_Cmul}
    \\
    a \leq b &\text{ implies } \C{\shf{a}} \leq \shf{\C{b}},
    \label{eqn:im_Cshf_leq_shfC}
    \\
    a \leq b &\text{ implies } \mul{\C{\shf{a}}} \leq \shf{\C{\mul{b}}},
    \label{eqn:im_mulCshf_leq_shfCmul}
  \end{align}
  hold for all $a,b \in L$ and any $\langle\mul,\shf\rangle \in S$.
  \qed
\end{corollary}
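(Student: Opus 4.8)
The plan is to read each of the three implications as a ``monotonized'' version of the corresponding unconditional inequality already established in Theorem~\ref{th:mulC_Cmul}: under the extra hypothesis $a \leq b$ one simply enlarges the right-hand side of the relevant inequality from \eqref{eqn:mulC_leq_Cmul}--\eqref{eqn:mulCshf_leq_shfCmul} by applying the monotony of $\mul$, $\C$, and $\shf$ in the appropriate order, and then closes up with transitivity of $\leq$. No new construction is needed; the only thing to watch is which monotony property---\eqref{eqn:mon_mul}, \eqref{S:mon}, or \eqref{eqn:mon_shf}---to invoke at each link of the chain.

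For \eqref{eqn:im_mulC_leq_Cmul}, assuming $a \leq b$ I would start from $\mul{\C{a}} \leq \C{\mul{a}}$, which is \eqref{eqn:mulC_leq_Cmul}, and then bound the right-hand term: $a \leq b$ gives $\mul{a} \leq \mul{b}$ by~\eqref{eqn:mon_mul}, whence $\C{\mul{a}} \leq \C{\mul{b}}$ by the monotony~\eqref{S:mon} of $\C$. For the dual implication \eqref{eqn:im_Cshf_leq_shfC} I would instead begin from $\C{\shf{a}} \leq \shf{\C{a}}$, which is \eqref{eqn:Cshf_leq_shfC}, and enlarge its right-hand side using $\C{a} \leq \C{b}$ from~\eqref{S:mon} followed by $\shf{\C{a}} \leq \shf{\C{b}}$ from~\eqref{eqn:mon_shf}.

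Finally, \eqref{eqn:im_mulCshf_leq_shfCmul} is handled the same way, starting from $\mul{\C{\shf{a}}} \leq \shf{\C{\mul{a}}}$, which is \eqref{eqn:mulCshf_leq_shfCmul}, and applying \eqref{eqn:mon_mul}, \eqref{S:mon}, and \eqref{eqn:mon_shf} in succession to deduce $\shf{\C{\mul{a}}} \leq \shf{\C{\mul{b}}}$; alternatively it can be assembled directly from the two implications just proved, mirroring how \eqref{eqn:mulCshf_leq_shfCmul} was obtained from \eqref{eqn:mulC_leq_Cmul} and \eqref{eqn:Cshf_leq_shfC} inside the proof of Theorem~\ref{th:mulC_Cmul}. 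Since every step is a single application of an isotone map, there is genuinely no hard part; the corollary is a routine strengthening of Theorem~\ref{th:mulC_Cmul} that merely records its if-then form.
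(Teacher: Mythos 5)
Your proposal is correct and matches the paper's intent exactly: the paper states this corollary with no written proof (it is marked as immediate), justifying it in the preceding paragraph precisely by the observation that the inequalities of Theorem~\ref{th:mulC_Cmul} combine with the isotony of $\C$, $\mul$, and $\shf$ and transitivity of $\leq$. Each of your three chains is a valid instance of that argument, so nothing further is needed.
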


\begin{remark}
  The converse inequalities to those in~\eqref{eqn:mulC_leq_Cmul}
  and~\eqref{eqn:Cshf_leq_shfC} do not hold in general. In case
  of~\eqref{eqn:mulC_leq_Cmul}, we can take $L = \{0,1\}$ with $0 < 1$ and
  $\langle\mul,\shf\rangle$ such that $\mul{0} = \mul{1} = 0$ and
  $\shf{0} = \shf{1} = 1$. In this setting, $\C: L \to L$ such that
  $\C{0} = \C{1} = 1$ is an $S$-closure operator for
  $S = \{\langle\mul,\shf\rangle,\langle\one,\one\rangle\}$. In addition, we
  have $\mul{\C{1}} = 0$ and $\C{\mul{1}} = 1$, i.e.,
  $\C{\mul{a}} \nleq \mul{\C{a}}$ for $a = 1$.

  In case of~\eqref{eqn:mulC_leq_Cmul}, consider $L = \{0,a,1\}$ such that
  $0 < a < 1$. Furthermore, a pair $\langle\mul,\shf\rangle$ of maps defined
  by $\mul{0} = \shf{0} = 0$, $\mul{a} = \mul{1} = a$, and
  $\shf{a} = \shf{1} = 1$ is an isotone Galois connection in $\mathbf{L}$.
  In addition, $\C$ defined by $\C{0} = \C{a} = a$ and $\C{1} = 1$ is an
  $S$-closure operator for
  $S = \{\langle\mul,\shf\rangle,\langle\one,\one\rangle\}$ such that
  $\shf{\C{0}} = 1 \nleq a = \C{\shf{0}}$.
\end{remark}

\begin{theorem}\label{th:S_alt_iso}
  Let $\C\!: L \to L$ be a closure operator in $\mathbf{L}$.
  Then, the following conditions are equivalent:
  \begin{enumerate}\parskip=0pt%
  \item[\itm{1}]
    $\C$ is an $S$-closure operator in $\mathbf{L}$;
  \item[\itm{2}]
    $\C$ satisfies \eqref{eqn:mulC_leq_Cmul};
  \item[\itm{3}]
    $\C$ satisfies \eqref{eqn:Cshf_leq_shfC}.
  \end{enumerate}
\end{theorem}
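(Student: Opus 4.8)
The plan is to reduce everything to the fact, already proved in Theorem~\ref{th:mulC_Cmul}, that every $S$-closure operator satisfies both \eqref{eqn:mulC_leq_Cmul} and \eqref{eqn:Cshf_leq_shfC}. This immediately gives \itm{1}$\Rightarrow$\itm{2} and \itm{1}$\Rightarrow$\itm{3}, so the real content lies in the two converse implications. Throughout I would exploit that $\C$ is assumed to be an ordinary closure operator, so that \eqref{S:ext}, \eqref{S:mon}, and ordinary idempotency $\C{\C{a}} = \C{a}$ are at hand, and that the only condition separating an $S$-closure operator from a plain closure operator is the parameterized idempotency \eqref{S:idm}.

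For \itm{2}$\Rightarrow$\itm{1}, I would fix $a \in L$ and $\langle\mul,\shf\rangle \in S$ and apply the hypothesis \eqref{eqn:mulC_leq_Cmul} at the point $\shf{\C{a}}$, obtaining $\mul{\C{\shf{\C{a}}}} \leq \C{\mul{\shf{\C{a}}}}$. The idea is then to collapse the right-hand side: by the counit inequality \eqref{eqn:mulshf} we have $\mul{\shf{\C{a}}} \leq \C{a}$, so applying isotony \eqref{S:mon} and then ordinary idempotency gives $\C{\mul{\shf{\C{a}}}} \leq \C{\C{a}} = \C{a}$. Hence $\mul{\C{\shf{\C{a}}}} \leq \C{a}$, and passing back through the adjointness \eqref{eqn:gal} yields exactly $\C{\shf{\C{a}}} \leq \shf{\C{a}}$, which is \eqref{S:idm}.

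The implication \itm{3}$\Rightarrow$\itm{1} should be essentially immediate: instantiating the hypothesis \eqref{eqn:Cshf_leq_shfC} at $\C{a}$ in place of $a$ gives $\C{\shf{\C{a}}} \leq \shf{\C{\C{a}}}$, and ordinary idempotency rewrites the right-hand side as $\shf{\C{a}}$, giving \eqref{S:idm} once more. Assembling these four implications---the two from Theorem~\ref{th:mulC_Cmul} together with the two converses---establishes the equivalence of \itm{1}--\itm{3}.

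I do not anticipate a serious obstacle, since Theorem~\ref{th:mulC_Cmul} already carries the analytic weight. The only step requiring genuine care is \itm{2}$\Rightarrow$\itm{1}, where one must invoke the hypothesis at the correct argument $\shf{\C{a}}$ and move between $\mul$ and $\shf$ twice via \eqref{eqn:gal}; a different substitution tends to produce inequalities pointing the wrong way.
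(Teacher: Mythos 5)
Your proposal is correct and follows essentially the same route as the paper's own proof: both directions \itm{2}$\Rightarrow$\itm{1} and \itm{3}$\Rightarrow$\itm{1} use exactly the same instantiations (at $\shf{\C{a}}$ and $\C{a}$, respectively), collapsing via \eqref{eqn:mulshf}, \eqref{S:mon}, ordinary idempotency, and \eqref{eqn:gal}. The only cosmetic difference is that in \itm{3}$\Rightarrow$\itm{1} you rewrite $\shf{\C{\C{a}}}$ as an equality while the paper uses \eqref{S:ord_idm} with \eqref{eqn:mon_shf}; this is immaterial since a closure operator satisfies $\C{\C{a}} = \C{a}$.
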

\begin{proof}
  Using Theorem~\ref{th:mulC_Cmul}, it follows that \itm{1} implies both
  \itm{2} and \itm{3}. Therefore, it suffices to show that either of \itm{2}
  and \itm{3} implies \eqref{S:idm}.

  Suppose that \itm{2} holds. First,
  $\mul{\C{\shf{\C{a}}}} \leq \C{\mul{\shf{\C{a}}}}$ holds true because it is
  a particular case of \eqref{eqn:mulC_leq_Cmul}. Furthermore, using
  \eqref{eqn:mulshf} and~\eqref{S:mon}, it follows that
  $\mul{\shf{\C{a}}} \leq \C{a}$ and $\C{\mul{\shf{\C{a}}}} \leq \C{\C{a}}$.
  Applying~\eqref{S:ord_idm} to the last inequality, we get
  $\C{\mul{\shf{\C{a}}}} \leq \C{a}$. Hence, the transitivity of $\leq$
  together with~\eqref{eqn:gal} give $\mul{\C{\shf{\C{a}}}} \leq \C{a}$ and
  $\C{\shf{\C{a}}} \leq \shf{\C{a}}$ which proves \eqref{S:idm}.

  In case of~\itm{3}, the argument is straightforward: We have
  $\C{\shf{\C{a}}} \leq \shf{\C{\C{a}}}$ as a particular case of
  \eqref{eqn:Cshf_leq_shfC}. Thus, using \eqref{S:ord_idm} together with
  \eqref{eqn:mon_shf}, we get $\C{\shf{\C{a}}} \leq \shf{\C{a}}$,
  i.e., \eqref{S:idm} holds.
\end{proof}

Let us recall that the $\mathbf{L}^*$-closure operators and
$\mathbf{L}_K$-closure operators, which are important examples of $S$-closure
operators on complete residuated lattices of fuzzy sets, were defined as
closure operators with stronger isotony conditions. In contrast, $S$-closure
operators have been introduced as operators with the classic isotony
condition and with stronger form of the idempotency. Based on our
observations, $S$-closure operators can also be seen as closure operators
with stronger isotony conditions. Indeed, it is easy to see that
\eqref{eqn:im_mulC_leq_Cmul} and \eqref{eqn:im_Cshf_leq_shfC} imply
\eqref{eqn:mulC_leq_Cmul} and \eqref{eqn:Cshf_leq_shfC}, respectively. In
addition, \eqref{eqn:im_mulC_leq_Cmul} and \eqref{eqn:im_Cshf_leq_shfC} imply
the ordinary isotony condition for $\mul = \one$ and $\shf = \one$,
respectively. Therefore, using Theorem~\ref{th:S_alt_iso},
we come to the following conclusion:

\begin{corollary}
  Let $\C\!: L \to L$ be an operator in $\mathbf{L}$ which
  satisfies~\eqref{S:ext} and~\eqref{S:ord_idm}.  
  Then, the following conditions are equivalent:
  \begin{enumerate}\parskip=0pt%
  \item[\itm{1}]
    $\C$ is an $S$-closure operator in $\mathbf{L}$;
  \item[\itm{2}]
    $\C$ satisfies \eqref{eqn:im_mulC_leq_Cmul};
  \item[\itm{3}]
    $\C$ satisfies \eqref{eqn:im_Cshf_leq_shfC}.
    \qed
  \end{enumerate}
\end{corollary}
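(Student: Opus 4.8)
The plan is to reduce the whole statement to Theorem~\ref{th:S_alt_iso}. That theorem already proves the equivalence of \itm{1}, \itm{2}, and \itm{3} in their plain-inequality forms \eqref{eqn:mulC_leq_Cmul} and \eqref{eqn:Cshf_leq_shfC}, but under the stronger standing hypothesis that $\C$ be a closure operator, i.e.\ that it also satisfy the ordinary isotony \eqref{S:mon}. The present corollary grants only \eqref{S:ext} and \eqref{S:ord_idm}; hence the real work is to recover \eqref{S:mon} from each of the if-then conditions \itm{2} and \itm{3}, after which everything collapses onto Theorem~\ref{th:S_alt_iso}.

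First I would record the two elementary specialisations already indicated in the paragraph preceding the corollary. Setting $a = b$ turns the if-then condition \eqref{eqn:im_mulC_leq_Cmul} into the plain inequality \eqref{eqn:mulC_leq_Cmul}, and likewise \eqref{eqn:im_Cshf_leq_shfC} into \eqref{eqn:Cshf_leq_shfC}. On the other hand, because $\langle\one,\one\rangle \in S$, specialising the quantified connection in \eqref{eqn:im_mulC_leq_Cmul} to $\langle\mul,\shf\rangle = \langle\one,\one\rangle$ yields ``$a \leq b$ implies $\C{a} \leq \C{b}$'', which is exactly \eqref{S:mon}; the same specialisation of \eqref{eqn:im_Cshf_leq_shfC} gives \eqref{S:mon} as well. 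So either of \itm{2} and \itm{3}, together with the standing \eqref{S:ext} and \eqref{S:ord_idm}, already makes $\C$ an ordinary closure operator.

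With those observations in hand the equivalence is immediate. For \itm{1}$\Rightarrow$\itm{2} and \itm{1}$\Rightarrow$\itm{3} I would invoke the (numbered but unlabelled) corollary to Theorem~\ref{th:mulC_Cmul}, which asserts precisely \eqref{eqn:im_mulC_leq_Cmul} and \eqref{eqn:im_Cshf_leq_shfC} for every $S$-closure operator. Conversely, assuming \itm{2}, the second specialisation above shows $\C$ is a closure operator and the first shows it satisfies \eqref{eqn:mulC_leq_Cmul}; thus condition \itm{2} of Theorem~\ref{th:S_alt_iso} holds and that theorem delivers \itm{1}. The passage from \itm{3} to \itm{1} runs the same way through condition \itm{3} of Theorem~\ref{th:S_alt_iso}.

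The argument poses no genuine difficulty; the one point deserving care is conceptual rather than computational. Theorem~\ref{th:S_alt_iso} is stated for a closure operator, so one must be sure that it is the \emph{if-then} form of \itm{2} (resp.\ \itm{3}), and not merely the plain inequality \eqref{eqn:mulC_leq_Cmul} (resp.\ \eqref{eqn:Cshf_leq_shfC}), that restores the missing isotony---this being supplied exactly by the neutral connection $\langle\one,\one\rangle$ guaranteed to lie in $S$.
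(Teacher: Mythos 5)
Your proposal is correct and follows essentially the same route as the paper: the paragraph preceding the corollary reduces everything to Theorem~\ref{th:S_alt_iso} by noting that \eqref{eqn:im_mulC_leq_Cmul} and \eqref{eqn:im_Cshf_leq_shfC} specialize (via $a=b$) to the plain inequalities \eqref{eqn:mulC_leq_Cmul} and \eqref{eqn:Cshf_leq_shfC}, and (via $\langle\mul,\shf\rangle = \langle\one,\one\rangle$) to the ordinary isotony \eqref{S:mon}, while the forward directions come from the corollary to Theorem~\ref{th:mulC_Cmul}. Your emphasis on the fact that it is the if-then form, combined with $\langle\one,\one\rangle \in S$, that restores the missing isotony is precisely the point the paper relies on.
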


We now turn our attention to closure systems related to $S$-closure
operators. The systems can be seen as systems closed under applications of
upper adjoints of all isotone Galois connections in $S$.

\begin{definition}\label{def:S-clos}
  Let $\mathbf{L} = \langle L,\leq\rangle$ be a complete lattice and let $S$
  be an $L$-parameterization. A~system $\mathcal{S} \subseteq L$ is called an
  $S$-closure system in $\langle L, \leq\rangle$ whenever it is a closure
  system in $\langle L,\leq\rangle$ such that $\shf{a} \in \mathcal{S}$ for
  any $\langle\mul,\shf\rangle \in S$ and any $a \in L$. In addition,
  if $\mathbf{S} = \langle S,\circ,\langle\one,\one\rangle\rangle$ is
  an $\mathbf{L}$-parameterization, then $\mathcal{S}$ is called
  an $\mathbf{S}$-closure system.
\end{definition}

The requirements of being closed under arbitrary infima and being closed
under applications of upper adjoints can be replaced by a single condition as
it is shown in the following assertion.

\begin{theorem}
  A system $\mathcal{S} \subseteq L$ is an $S$-closure system in
  $\mathbf{L}$ if{}f
  \begin{align}
    \textstyle\bigwedge\{\shf{b};\,
    a \leq \shf{b}
    \text{, }
    b \in \mathcal{S}
    \text{, and }
    \langle\mul,\shf\rangle \in S\} \in \mathcal{S}
    \label{S:syst}
  \end{align}
  for any $a \in L$.
\end{theorem}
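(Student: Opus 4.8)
The plan is to study the map $u\colon L \to L$ given by $u(a) = \bigwedge\{\shf{b};\, a \leq \shf{b},\ b \in \mathcal{S},\ \langle\mul,\shf\rangle \in S\}$, i.e.\ the left-hand side of~\eqref{S:syst}, so that the assertion becomes: $\mathcal{S}$ is an $S$-closure system if and only if $u(a) \in \mathcal{S}$ for every $a \in L$. The only structural fact I would record at the outset is that $a \leq u(a)$ for all $a$, since every member of the family defining $u(a)$ dominates $a$ by the side condition $a \leq \shf{b}$, while the empty family yields $\bigwedge\emptyset = 1 \geq a$. With this in hand, both implications follow from elementary order-theoretic bookkeeping.

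For the only-if direction I would assume that $\mathcal{S}$ is an $S$-closure system. By Definition~\ref{def:S-clos} we then have $\shf{b} \in \mathcal{S}$ whenever $b \in \mathcal{S}$ and $\langle\mul,\shf\rangle \in S$, so every element of the family defining $u(a)$ already lies in $\mathcal{S}$. Since a closure system is closed under arbitrary infima---including the empty infimum, which returns the top element---$u(a)$ is an infimum of elements of $\mathcal{S}$ and therefore belongs to $\mathcal{S}$, giving~\eqref{S:syst}.

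For the converse I would assume~\eqref{S:syst}, i.e.\ $u(a) \in \mathcal{S}$ for all $a$, and verify the two defining properties of an $S$-closure system directly. Closure under upper adjoints is immediate: for $b \in \mathcal{S}$ and $\langle\mul,\shf\rangle \in S$ the element $\shf{b}$ itself occurs in the family defining $u(\shf{b})$ (take the same connection and the same $b$, noting $\shf{b} \leq \shf{b}$), so $u(\shf{b}) \leq \shf{b}$; combined with $\shf{b} \leq u(\shf{b})$ this forces $u(\shf{b}) = \shf{b}$, whence $\shf{b} = u(\shf{b}) \in \mathcal{S}$. For closure under arbitrary infima, let $\{c_i;\, i \in I\} \subseteq \mathcal{S}$ and put $c = \bigwedge_{i} c_i$; since $c \leq c_i = \one{c_i}$ and $c_i \in \mathcal{S}$, each $c_i$ occurs in the family defining $u(c)$, so $u(c) \leq c_i$ for every $i$ and hence $u(c) \leq c \leq u(c)$. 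Thus $c = u(c) \in \mathcal{S}$, and the case $I = \emptyset$ simultaneously shows $1 \in \mathcal{S}$.

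The one point that genuinely needs care---and is really the heart of the matter---is that the family in~\eqref{S:syst} ranges over the restricted collection $\{\shf{b};\, b \in \mathcal{S}\}$ rather than over all of $\{s \in \mathcal{S};\, a \leq s\}$, the family defining the ordinary closure operator attached to a closure system. The hypothesis $\langle\one,\one\rangle \in S$ carried by every $L$-parameterization (Definition~\ref{def:S-clop}) is exactly what closes this gap: choosing $\shf = \one$ exhibits each $s \in \mathcal{S}$ as $\one{s}$, so the restricted infimum agrees with the ordinary one as soon as $\mathcal{S}$ is closed under the upper adjoints of $S$. Once this is isolated, both directions collapse to the two elementary facts that $a \leq u(a)$ and that the defining family of $u(c)$ contains $c$ itself whenever $c \in \mathcal{S}$, after which no computation remains.
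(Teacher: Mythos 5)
Your proof is correct and follows essentially the same route as the paper's: the only-if part via closedness under infima and upper adjoints, and the converse via the two sandwich arguments (taking $a = \shf{c}$ for closure under upper adjoints, and invoking $\langle\one,\one\rangle \in S$ to squeeze $\bigwedge\mathcal{R}$ for closure under infima). Your packaging of the infimum as an operator $u$ with $a \leq u(a)$ proved once up front is a minor notational streamlining of the same argument, and your closing remark correctly isolates the role of the identity connection exactly where the paper uses it.
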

\begin{proof}
  In order to show the only-if part, observe that for each $b \in \mathcal{S}$,
  we have that $\shf{b} \in \mathcal{S}$, i.e., \eqref{S:syst} is a consequence
  of the fact that $\mathcal{S}$ is closed under $\bigwedge$ and applications of
  $\shf$ of any $\langle\mul,\shf\rangle \in S$.

  Conversely, let~$\mathcal{S} \subseteq L$ satisfy \eqref{S:syst} for any
  $a \in L$. Take any $c \in \mathcal{S}$ and $\langle\mul,\shf\rangle \in S$.
  Putting $a = \shf{c}$, \eqref{S:syst} yields
  \begin{align*}
    \shf{c}
    &\leq
    \textstyle\bigwedge\{\shf{b};\,
    \shf{c} \leq \shf{b}
    \text{, }
    b \in \mathcal{S}
    \text{, and }
    \langle\mul,\shf\rangle \in S\} \in \mathcal{S},
  \end{align*}
  because $\shf{c} \leq \shf{b}$ for all $b \in \mathcal{S}$ and
  $\langle\mul,\shf\rangle \in S$ and thus the infimum in the previous
  inequality is an infimum of elements which are greater than or equal
  to $\shf{c}$. Therefore, in order to prove that $\shf{c} \in \mathcal{S}$,
  it suffices to check the converse inequality but this is easy to see since
  $c \in \mathcal{S}$. Indeed, the fact that $c \in \mathcal{S}$ yields
  \begin{align*}
    \shf{c} &\in
    \{\shf{c};\,
    \langle\mul,\shf\rangle \in S\}
    \\
    &=
    \{\shf{c};\,
    \shf{c} \leq \shf{c}
    \text{ and }
    \langle\mul,\shf\rangle \in S\}
    \\
    &\subseteq
    \{\shf{b};\,
    \shf{c} \leq \shf{b}
    \text{, }
    b \in \mathcal{S}
    \text{, and }
    \langle\mul,\shf\rangle \in S\}.
  \end{align*}
  As a consequence, we have
  \begin{align*}
    \shf{c}
    &=
    \textstyle\bigwedge\{\shf{b};\,
    \shf{c} \leq \shf{b}
    \text{, }
    b \in \mathcal{S}
    \text{, and }
    \langle\mul,\shf\rangle \in S\} \in \mathcal{S},
  \end{align*}
  which shows that $\mathcal{S}$ is closed under applications of all $\shf$'s.
  The fact that $\mathcal{S}$ is closed under arbitrary infima can be proved by
  similar arguments: Take $\mathcal{R} \subseteq \mathcal{S}$ and observe that
  \begin{align*}
    \textstyle\bigwedge\mathcal{R} \leq
    \textstyle\bigwedge\{\shf{b};\,
    \textstyle\bigwedge\mathcal{R} \leq \shf{b}
    \text{, }
    b \in \mathcal{S}
    \text{, and }
    \langle\mul,\shf\rangle \in S\} \in \mathcal{S}.
  \end{align*}
  Moreover, for any $c \in \mathcal{R} \subseteq \mathcal{S}$, we have that
  \begin{align*}
    c &\in \{\one{b};\,
    \textstyle\bigwedge\mathcal{R} \leq \one{b},
    \text{ and }
    b \in \mathcal{S}
    \} \\
    &\subseteq
    \{\shf{b};\,
    \textstyle\bigwedge\mathcal{R} \leq \shf{b}
    \text{, }
    b \in \mathcal{S}
    \text{, and }
    \langle\mul,\shf\rangle \in S\}.
  \end{align*}
  As a consequence, for any $c \in \mathcal{R} \subseteq \mathcal{S}$,
  it follows that
  \begin{align*}
    \textstyle\bigwedge\{\shf{b};\,
    \textstyle\bigwedge\mathcal{R} \leq \shf{b}
    \text{, }
    b \in \mathcal{S}
    \text{, and }
    \langle\mul,\shf\rangle \in S\} \leq c
  \end{align*}
  Therefore, we have
  \begin{align*}
    \textstyle\bigwedge\{\shf{b};\,
    \textstyle\bigwedge\mathcal{R} \leq \shf{b}
    \text{, }
    b \in \mathcal{S}
    \text{, and }
    \langle\mul,\shf\rangle \in S\} \leq
    \textstyle\bigwedge\mathcal{R},
  \end{align*}
  Now, we use~\eqref{S:syst} for $a = \textstyle\bigwedge\mathcal{R}$
  to conclude that $\textstyle\bigwedge\mathcal{R} \in \mathcal{S}$.
\end{proof}

\begin{remark}
  Let us note that $S$-closure operators and $S$-closure systems in
  $\mathbf{L}$ are related in a way which is fully analogous to the
  relationship of the classic closure operators and closure systems. That is,
  given an $S$-closure operator $\C$, the set
  $\mathcal{S}_{\C} = \{a \in L;\, \C{a} = a\} = \{\C{a};\, a \in L\}$ of all
  fixed points of $\C$ is an $S$-closure system; given an $S$-closure system
  $\mathcal{S}$, an operator $\C[\mathcal{S}]\!: L \to L$ defined by
  $\C[\mathcal{S}](a) = \bigwedge \{b \in \mathcal{S};\, a \leq b\}$ for any
  $a \in L$ is an $S$-closure operator. Furthermore, we have that
  $\mathcal{S} = \mathcal{S}_{\C[\mathcal{S}]}$ and
  $\C = \C[\mathcal{S}_{\C}]$. These claims are routine to check and use
  similar arguments as in the classic setting.
\end{remark}

We conclude this section with notes on a partial order which can be
introduced on the set of all isotone Galois connections in $\mathbf{L}$ and
its relationship to $\mathbf{L}$-parameterizations. For any isotone Galois
connections $\langle\mul[1],\shf[1]\rangle$ and
$\langle\mul[2],\shf[2]\rangle$ in $\mathbf{L}$, we put
$\langle\mul[1],\shf[1]\rangle \leqslant \langle\mul[2],\shf[2]\rangle$
whenever
\begin{align}
  \mul[1]{a} \leq \mul[2]{a}
  \label{eqn:leq}
\end{align}
for all $a \in L$. Obviously, $\leqslant$ defined by~\eqref{eqn:leq} is a
partial order on the set of all isotone Galois connections in
$\mathbf{L}$. Indeed, its reflexivity and transitivity follow directly from
the properties of the complete lattice order $\leq$ in $\mathbf{L}$. In
addition, the antisymmetry of $\leqslant$ follows by applying the
antisymmetry of $\leq$ together with the fact that each isotone Galois
connection is uniquely given by its lower adjoint. It is easy to see that
$\leqslant$ can be defined in an equivalent way in terms of the upper
adjoints in $S$ instead of the lower ones:

\begin{lemma}\label{le:leq_alt}
  For $\leqslant$ defined as in~\eqref{eqn:leq}, we have
  $\langle\mul[1],\shf[1]\rangle \leqslant \langle\mul[2],\shf[2]\rangle$
  if{}f
  \begin{align}
    \shf[2]{a} \leq \shf[1]{a}
    \label{eqn:alt_leq}
  \end{align}
  for all $a \in L$.
\end{lemma}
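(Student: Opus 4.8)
The plan is to read off each direction of the equivalence directly from the adjointness property~\eqref{eqn:gal} together with the unit and counit inequalities~\eqref{eqn:shfmul} and~\eqref{eqn:mulshf}. Since the statement merely compares two adjunctions pointwise, no lattice machinery beyond these facts is needed, and the two implications are perfect mirror images of one another obtained by transposing an inequality across the first Galois connection and then applying the appropriate unit/counit bound for the second.

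First I would assume $\mul[1]{a} \leq \mul[2]{a}$ for all $a \in L$ and establish $\shf[2]{b} \leq \shf[1]{b}$ for all $b \in L$. Fixing $b$, the key move is that by~\eqref{eqn:gal} applied to $\langle\mul[1],\shf[1]\rangle$ (with the point $a = \shf[2]{b}$), the target inequality $\shf[2]{b} \leq \shf[1]{b}$ is equivalent to $\mul[1]{\shf[2]{b}} \leq b$. I would then invoke the hypothesis at $\shf[2]{b}$ to get $\mul[1]{\shf[2]{b}} \leq \mul[2]{\shf[2]{b}}$, and bound the right-hand side using the counit inequality~\eqref{eqn:mulshf} for $\langle\mul[2],\shf[2]\rangle$, namely $\mul[2]{\shf[2]{b}} \leq b$. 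Chaining these two inequalities by transitivity of $\leq$ yields $\mul[1]{\shf[2]{b}} \leq b$, which is exactly the transposed form of the claim.

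For the converse I would run the symmetric argument: assume $\shf[2]{b} \leq \shf[1]{b}$ for all $b$, fix $a$, and observe that by~\eqref{eqn:gal} for $\langle\mul[1],\shf[1]\rangle$ (with the point $b = \mul[2]{a}$) the target $\mul[1]{a} \leq \mul[2]{a}$ is equivalent to $a \leq \shf[1]{\mul[2]{a}}$. Here the unit inequality~\eqref{eqn:shfmul} for $\langle\mul[2],\shf[2]\rangle$ gives $a \leq \shf[2]{\mul[2]{a}}$, while the hypothesis instantiated at $\mul[2]{a}$ gives $\shf[2]{\mul[2]{a}} \leq \shf[1]{\mul[2]{a}}$; transitivity then produces $a \leq \shf[1]{\mul[2]{a}}$ and hence the claim.

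There is no real obstacle in this lemma; the only point requiring a little care is bookkeeping---choosing which of the two Galois connections to transpose across in each direction, and pairing it with the \emph{correct} unit or counit inequality (the counit~\eqref{eqn:mulshf} in the first direction, the unit~\eqref{eqn:shfmul} in the second). Everything else is a two-step application of~\eqref{eqn:gal} and transitivity.
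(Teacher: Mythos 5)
Your proposal is correct and follows essentially the same route as the paper: in the forward direction you instantiate the hypothesis at $\shf[2]{b}$, apply the counit inequality~\eqref{eqn:mulshf}, and transpose across $\langle\mul[1],\shf[1]\rangle$ via~\eqref{eqn:gal}, which is exactly the paper's argument. The converse, which the paper dismisses as ``analogous,'' is spelled out by you as the precise mirror image using the unit~\eqref{eqn:shfmul}, so there is nothing to add.
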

\begin{proof}
  Let $\mul[1]{a} \leq \mul[2]{a}$ for all $a \in L$. In particular,
  for $a = \shf[2]{b}$ where $b \in L$, using~\eqref{eqn:mon_mul}
  and~\eqref{eqn:mulshf}, it follows that
  \begin{align*}
    \mul[1]{\shf[2]{b}} \leq \mul[2]{\shf[2]{b}} \leq b.
  \end{align*}
  Therefore, by adjointness, we get $\shf[2]{b} \leq \shf[1]{b}$. Since
  $b \in L$ was taken arbitrarily, this proves that~\eqref{eqn:alt_leq} holds.
  Analogously, we can check that assuming~\eqref{eqn:alt_leq}, we get
  that~\eqref{eqn:leq} holds.
\end{proof}

The following assertion shows that $\leqslant$ defined by~\eqref{eqn:leq} is
compatible with composition of isotone Galois connections.

\begin{lemma}\label{le:pomonoid}
  Let $\leqslant$ be defined as in~\eqref{eqn:leq}. If
  $\langle\mul[1],\shf[1]\rangle \leqslant \langle\mul[2],\shf[2]\rangle$ and
  $\langle\mul[3],\shf[3]\rangle \leqslant \langle\mul[4],\shf[4]\rangle$,
  then
  \begin{align*}
    \langle\mul[1],\shf[1]\rangle \circ \langle\mul[3],\shf[3]\rangle
    \leqslant
    \langle\mul[2],\shf[2]\rangle \circ \langle\mul[4],\shf[4]\rangle.
  \end{align*}
\end{lemma}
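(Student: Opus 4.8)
The plan is to unwind both sides of the claimed inequality using the definition of composition in~\eqref{eqn:compose} and then to verify the defining inequality of $\leqslant$ from~\eqref{eqn:leq} directly on the lower adjoints. By~\eqref{eqn:compose}, the left-hand side $\langle\mul[1],\shf[1]\rangle \circ \langle\mul[3],\shf[3]\rangle$ equals $\langle\mul[1]\mul[3],\shf[3]\shf[1]\rangle$ and the right-hand side $\langle\mul[2],\shf[2]\rangle \circ \langle\mul[4],\shf[4]\rangle$ equals $\langle\mul[2]\mul[4],\shf[4]\shf[2]\rangle$. Hence, by the definition~\eqref{eqn:leq} of $\leqslant$, it suffices to show that $\mul[1]{\mul[3]{a}} \leq \mul[2]{\mul[4]{a}}$ holds for every $a \in L$.

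To establish this inequality I would fix an arbitrary $a \in L$ and chain two applications of the hypotheses. First, the assumption $\langle\mul[3],\shf[3]\rangle \leqslant \langle\mul[4],\shf[4]\rangle$ gives $\mul[3]{a} \leq \mul[4]{a}$ by~\eqref{eqn:leq}; applying the monotony~\eqref{eqn:mon_mul} of the lower adjoint $\mul[1]$ to this inequality yields $\mul[1]{\mul[3]{a}} \leq \mul[1]{\mul[4]{a}}$. Next I would invoke the assumption $\langle\mul[1],\shf[1]\rangle \leqslant \langle\mul[2],\shf[2]\rangle$, but evaluated at the element $\mul[4]{a}$ rather than at $a$, which gives $\mul[1]{\mul[4]{a}} \leq \mul[2]{\mul[4]{a}}$. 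Combining the two inequalities by transitivity of $\leq$ produces $\mul[1]{\mul[3]{a}} \leq \mul[2]{\mul[4]{a}}$. Since $a$ was arbitrary, this is exactly the relation $\langle\mul[1]\mul[3],\shf[3]\shf[1]\rangle \leqslant \langle\mul[2]\mul[4],\shf[4]\shf[2]\rangle$ we sought.

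There is no genuine obstacle here: the only point that demands slight care is that the two hypotheses must be applied at \emph{different} arguments---the second at $a$ and the first at the shifted element $\mul[4]{a}$---and that the monotony~\eqref{eqn:mon_mul} of $\mul[1]$ is precisely what bridges the two steps. An equivalent route would phrase everything through the upper adjoints using Lemma~\ref{le:leq_alt} together with the order-reversal of upper adjoints recorded in~\eqref{eqn:compose}, but the lower-adjoint formulation above is the most direct and sidesteps that reversal.
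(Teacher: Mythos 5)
Your proof is correct and follows essentially the same route as the paper's: unwind the composition via~\eqref{eqn:compose}, fix an arbitrary $a \in L$, and chain two inequalities using one hypothesis pointwise and the monotony~\eqref{eqn:mon_mul} of a lower adjoint. The only (immaterial) difference is the intermediate term: you interpolate through $\mul[1]{\mul[4]{a}}$, whereas the paper interpolates through $\mul[2]{\mul[3]{a}}$, a mirror-image of the same two-step argument.
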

\begin{proof}
  Take any $a \in L$. Since $\langle\mul[3],\shf[3]\rangle \leqslant
  \langle\mul[4],\shf[4]\rangle$, we have $\mul[3]{a} \leq
  \mul[4]{a}$. Furthermore, $\langle\mul[1],\shf[1]\rangle \leqslant
  \langle\mul[2],\shf[2]\rangle$ applied together with~\eqref{eqn:mon_mul}
  yields
  \begin{align*}
    \mul[1]{\mul[3]{a}} \leq
    \mul[2]{\mul[3]{a}} \leq
    \mul[2]{\mul[4]{a}},
  \end{align*}
  The rest follows from the fact that
  $a \in L$ was chosen arbitrarily.
\end{proof}

As a consequence of the previous lemma, in case of
$\mathbf{L}$-parameterizations, i.e., $L$-parameterizations which are closed
under $\circ$, we in fact deal with partially-ordered monoids, i.e.,
partially-ordered algebras whose operations are compatible with the
underlying partial order:

\begin{corollary}\label{cor:pomonoid}
  Let $\mathbf{S} = \langle S,\circ,\langle\one,\one\rangle\rangle$ be an
  $\mathbf{L}$-parameterization and denote by $\leqslant$ the relation on $S$
  defined as in~\eqref{eqn:leq}. Then, 
  $\mathbf{S} = \langle S,\leqslant,\circ,\langle\one,\one\rangle\rangle$
  is a partially-ordered monoid.
  \qed
\end{corollary}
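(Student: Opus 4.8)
The plan is to verify, one by one, the three defining ingredients of a partially-ordered monoid, namely that we have a monoid, that $\leqslant$ is a partial order, and that the monoid operation is compatible with that order. Almost all of the work has already been done in the preceding material, so the corollary amounts to assembling it.

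First, the monoid structure requires no argument: by the definition of an $\mathbf{L}$-parameterization, $S$ is closed under $\circ$ and $\langle S,\circ,\langle\one,\one\rangle\rangle$ is by hypothesis a monoid. Second, I would recall the observation made just before the corollary that $\leqslant$ as defined in~\eqref{eqn:leq} is a partial order on the set of \emph{all} isotone Galois connections in $\mathbf{L}$: its reflexivity and transitivity descend from those of the lattice order $\leq$, and its antisymmetry follows by combining the antisymmetry of $\leq$ with the fact that an isotone Galois connection is uniquely determined by its lower adjoint. Since $S$ is a subset of the set of all such connections, the restriction of $\leqslant$ to $S$ is again a partial order; the only point worth noting here is the trivial one that restricting a partial order to a subset preserves all three order axioms, so no closure condition on $S$ is needed.

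The sole substantive ingredient, compatibility of $\circ$ with $\leqslant$, is exactly Lemma~\ref{le:pomonoid}: it shows that $\langle\mul[1],\shf[1]\rangle \leqslant \langle\mul[2],\shf[2]\rangle$ and $\langle\mul[3],\shf[3]\rangle \leqslant \langle\mul[4],\shf[4]\rangle$ together imply $\langle\mul[1],\shf[1]\rangle \circ \langle\mul[3],\shf[3]\rangle \leqslant \langle\mul[2],\shf[2]\rangle \circ \langle\mul[4],\shf[4]\rangle$, which is precisely the joint monotonicity demanded of a partially-ordered monoid. Collecting these three facts yields the claim, so I expect no genuine obstacle: the real content lies in Lemma~\ref{le:pomonoid}, and the corollary is effectively a bookkeeping step that repackages it together with the monoid property and the partial-order property already in hand.
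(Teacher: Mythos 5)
Your proposal is correct and takes essentially the same route as the paper: the corollary is stated there with an immediate \qed precisely because it is the assembly of the monoid hypothesis on $\mathbf{S}$, the observation preceding the corollary that $\leqslant$ defined by~\eqref{eqn:leq} is a partial order (reflexivity and transitivity from $\leq$, antisymmetry from that of $\leq$ plus the fact that a connection is determined by its lower adjoint), and the compatibility supplied by Lemma~\ref{le:pomonoid}. Your additional remark that restricting a partial order to the subset $S$ costs nothing is exactly the bookkeeping the paper leaves implicit.
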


\begin{remark}\label{rem:extremal}
  Depending on particular families of $\mathbf{L}$-parameterizations, we can
  get even stronger properties of
  $\mathbf{S} = \langle S,\leqslant,\circ,\langle\one,\one\rangle\rangle$.
  In this remark, we comment on properties related to the existence of
  extremal elements with respect to $\leqslant$.

  (a) Put $\mul[\bot](a) = 0$ and $\shf[\bot](a) = 1$ for all $a \in L$.
  Clearly, $\langle\mul[\bot],\shf[\bot]\rangle$ is an isotone Galois
  connection in $\mathbf{L}$. If $\langle\mul[\bot],\shf[\bot]\rangle \in S$,
  then it is the least element of $S$ with respect to $\leqslant$. Thus, each
  $L$-parameterization can be extended to an
  $L$-parameterization with a least element.

  (b)
  Consider $\mul[\top]$ and $\shf[\top]$ defined by
  \begin{align}
    \mul[\top]{a} &=
      \begin{cases}
        0, &\text{if } a = 0, \\
        1, &\text{otherwise,}
      \end{cases}
    &\shf[\top]{a} &=
      \begin{cases}
        1, &\text{if } a = 1, \\
        0, &\text{otherwise.}
      \end{cases}
  \end{align}
  for all $a \in L$. Recall that in the context of fuzzy sets, $\shf[\top]$
  corresponds to the globalization~\cite{TaTi:Gist}, cf.~\eqref{eqn:glob} and
  $\mul[\top]$ is a dual hedge~\cite{Vy:Tdhbl} which was also investigated on
  linear G\"odel chains in~\cite{Baaz}.  By moment's reflection, we get that
  $\langle\mul[\top],\shf[\top]\rangle$ is an isotone Galois connection in
  $\mathbf{L}$. In addition, for any $a \in L$ and any isotone Galois
  connection $\langle\mul,\shf\rangle$ in~$\mathbf{L}$, we have
  $\mul{a} \leq \mul[\top]{a}$ (observe that \eqref{eqn:gal} and
  $0 \leq \shf{0}$ give $\mul{0} = 0$). Thus, if
  $\langle\mul[\top],\shf[\top]\rangle \in S$, then it is the greatest
  element of $S$ with respect to $\leqslant$. Recall that
  $\langle\one,\one\rangle$ is neutral with respect to $\circ$ but as we have
  just shown, it may not be the greatest element of $S$ with respect to
  $\leqslant$. As a consequence, partially-ordered monoids of
  $\mathbf{L}$-parameterizations described in Corollary~\ref{cor:pomonoid}
  are not integral~\cite{GaJiKoOn:RL} in general. Note that
  $\langle\one,\one\rangle$ is the greatest element of $S$ if{}f
  $\mul{a} \leq a$ for all $a \in L$ and all $\langle\mul,\shf\rangle \in S$,
  i.e., if{}f each lower adjoint in $S$ is subdiagonal and, equivalently,
  each upper adjoint in $S$ is superdiagonal.
\end{remark}

In case of $\mathbf{S}$-closure operators which are $\mathbf{L}^*$-closure
operators, the underlying $\mathbf{L}$-parameterizations are in fact complete
residuated lattices as we show in the next assertion. In other words, if we
consider any $\mathbf{L}^*$-closure operator $\C: L^Y \to L^Y$ (a~fuzzy
closure operator with the monotony condition parameterized by~${}^*$) and
view it as a special case of an $\mathbf{S}$-closure operator with
$\mathbf{S}$ being an $\mathbf{L}$-parameterization as in
Corollary~\ref{cor:L*_S}, then $\mathbf{S}$ is not only a partially-ordered
monoid but in addition $\leqslant$ is a complete lattice order on $S$ and the
composition $\circ$ of isotone Galois connections restricted to $S$ has its
residuum satisfying the adjointness property~\eqref{eqn:adj}:

\begin{theorem}\label{th:L*_S_reslat}
  Let $\mathbf{L}$ be a complete residuated lattice,
  ${}^*$ be an idempotent truth-stressing hedge, and let 
  $S = \{\langle \mul[a^* \otimes], \shf[a^* \rightarrow]\rangle;\, a \in L\}$.
  Then,
  \begin{align}
    \mathbf{S} = \langle S,\leqslant,\circ,\rightsquigarrow,
    \langle\mul[\bot],\shf[\bot]\rangle,\langle\one,\one\rangle\rangle,
    \label{eqn:L*_S_reslat}
  \end{align}
  where $\rightsquigarrow$ is a binary operation on $S$ defined by
  \begin{align}
    \langle\mul[a^*\otimes],\shf[a^*\rightarrow]\rangle
    \!\rightsquigarrow\!
    \langle\mul[b^*\otimes],\shf[b^*\rightarrow]\rangle =
    \textstyle\bigvee\{
    \langle\mul[c^*\otimes],\shf[c^*\rightarrow]\rangle;\,
    a^* \otimes c^* \leq b^* \text{, } c \in L\},
    \label{eqn:L*_S_residuum}
  \end{align}
  for all $a,b \in L$,
  is a complete (commutative integral) residuated lattice.
\end{theorem}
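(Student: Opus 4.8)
The plan is to reduce the statement to a purely lattice-theoretic claim about the image $L^{*}=\{a^{*};\, a\in L\}$ of the hedge. The key device is the map $\Phi\colon L^{*}\to S$ sending $d\mapsto\langle\mul[d\otimes],\shf[d\rightarrow]\rangle$. Since the connection $\langle\mul[a^{*}\otimes],\shf[a^{*}\rightarrow]\rangle$ depends on $a$ only through $a^{*}$, and since $d$ is recovered from its lower adjoint as $\mul[d\otimes]{1_Y}=d\otimes 1_Y$ (the constant $\mathbf{L}$-set with value $d$), the map $\Phi$ is a bijection. Moreover $\Phi(d)\leqslant\Phi(e)$ means $d\otimes B\subseteq e\otimes B$ for all $B\in L^{Y}$, which by taking $B=1_Y$ is equivalent to $d\leq e$; hence $\Phi$ is an order isomorphism of $\langle L^{*},\leq\rangle$ onto $\langle S,\leqslant\rangle$. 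First I would verify that $\Phi$ turns $\circ$ into multiplication: by~\eqref{eqn:compose} and associativity of $\otimes$ one has $\mul[d\otimes]\mul[e\otimes]{B}=(d\otimes e)\otimes B$, and idempotency of ${}^{*}$ gives $d\otimes e=(d\otimes e)^{*}\in L^{*}$ (see~\cite[Lemma~2]{BeVy:MRAP}), so $\Phi(d)\circ\Phi(e)=\Phi(d\otimes e)$. Since associativity, commutativity and neutrality of $1=1^{*}$ are inherited from $\mathbf{L}$, it suffices to prove that $L^{*}$ with the restriction of $\otimes$ and a suitable residuum is a complete commutative integral residuated lattice.

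Second I would establish that $\langle L^{*},\leq\rangle$ is a complete lattice. An idempotent truth-stressing hedge satisfies $a^{*}\leq a$, isotony~\eqref{ts:mon}, and $a^{**}=a^{*}$, so it is an interior operator; hence $L^{*}$ is closed under arbitrary suprema computed in $\mathbf{L}$ (if every $a_i$ is a fixed point, then $\bigl(\bigvee_i a_i\bigr)^{*}\geq a_j^{*}=a_j$ for each $j$, forcing $\bigl(\bigvee_i a_i\bigr)^{*}=\bigvee_i a_i$), while arbitrary infima in $L^{*}$ are given by $\bigl(\bigwedge_i a_i\bigr)^{*}$. As $0^{*}=0$ and $1^{*}=1$, the least and greatest elements of $L^{*}$ are $0$ and $1$. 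Transporting along $\Phi$, the ordered set $\langle S,\leqslant\rangle$ is a complete lattice whose least element is $\Phi(0)=\langle\mul[\bot],\shf[\bot]\rangle$ (because $\mul[0\otimes]{B}=0_Y=\mul[\bot]{B}$) and whose greatest element is $\Phi(1)=\langle\one,\one\rangle$ (because $\mul[1\otimes]{B}=B$); in particular the neutral element of $\circ$ coincides with the top, which yields integrality.

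Third I would identify the residuum. For $d,e\in L^{*}$ the connections occurring in~\eqref{eqn:L*_S_residuum} have parameters $\{c^{*};\, d\otimes c^{*}\leq e\}$, which by the adjointness~\eqref{eqn:adj} equals $\{c^{*}\in L^{*};\, c^{*}\leq d\rightarrow e\}$; since the supremum of all fixed points below an element $x$ is $x^{*}$, the join of this set in $\langle S,\leqslant\rangle$ lies in $S$ and equals $\Phi\bigl((d\rightarrow e)^{*}\bigr)$. Thus $\rightsquigarrow$ transports to the operation $d\rightsquigarrow e=(d\rightarrow e)^{*}$ on $L^{*}$. The adjointness $d\otimes e\leq f$ iff $e\leq(d\rightarrow f)^{*}$ is then immediate: the forward direction uses~\eqref{eqn:adj} and then isotony of ${}^{*}$ with $e=e^{*}$, and the backward direction uses $(d\rightarrow f)^{*}\leq d\rightarrow f$ from subdiagonality~\eqref{ts:sub} followed by~\eqref{eqn:adj}. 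Combined with the order isomorphism $\Phi$ and the translation of $\circ$ into $\otimes$, this shows that $\rightsquigarrow$ residuates $\circ$ on $S$, and the theorem follows.

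The main obstacle is the bookkeeping in the third step: one must confirm that the supremum in~\eqref{eqn:L*_S_residuum} is computed by the lattice join on $\langle S,\leqslant\rangle$ (so that it remains inside $S$) and evaluates to the parameter $(d\rightarrow e)^{*}$, and that this operation genuinely residuates the restricted multiplication. Everything else is transport of structure along $\Phi$, with completeness resting on the single observation that an idempotent truth-stressing hedge is an interior operator whose fixed points form a complete lattice.
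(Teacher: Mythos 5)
Your proposal is correct, and it reaches the theorem by a noticeably different route than the paper for the residuation part. The paper works directly with $S$: it establishes the order correspondence $\langle\mul[a^*\otimes],\shf[a^*\rightarrow]\rangle \leqslant \langle\mul[b^*\otimes],\shf[b^*\rightarrow]\rangle$ if{}f $a^* \leq b^*$, shows $S$ is closed under arbitrary suprema (same interior-operator fact you use), then proves that $\circ$ distributes over arbitrary $\bigvee$, invokes the general fact that join-distributivity of a monoid operation on a complete lattice guarantees an adjoint residuum $\twoheadrightarrow$, and finally computes that this abstract residuum coincides with~\eqref{eqn:L*_S_residuum}. You instead package the order correspondence as an explicit isomorphism $\Phi$ between $\langle S,\leqslant\rangle$ and the fixed-point set $L^*$ of the hedge, transport $\circ$ to $\otimes$ restricted to $L^*$ (both arguments rest on the same facts: $a^*\otimes b^*=(a^*\otimes b^*)^*$ from the cited lemma, and closure of $L^*$ under suprema), and then verify the adjointness property \emph{directly} for the explicit candidate $d\rightsquigarrow e = (d\rightarrow e)^*$, using only \eqref{eqn:adj}, \eqref{ts:sub}, and \eqref{ts:mon}; your observation that the supremum in~\eqref{eqn:L*_S_residuum} is actually a \emph{maximum} (the set of parameters is the set of fixed points below $a^*\rightarrow b^*$, whose greatest element is $(a^*\rightarrow b^*)^*$) neatly disposes of the question of where that supremum is computed. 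What your route buys: it avoids any appeal to the general existence theorem for residua, and it exhibits $\mathbf{S}$ as isomorphic, qua residuated lattice, to the skeleton algebra $\langle L^*,\leq,\otimes,(\cdot\rightarrow\cdot)^*,0,1\rangle$ of the hedge, which is a sharper structural statement than the theorem itself. What the paper's route buys: the distributivity law $\langle\mul[a^*\otimes],\shf[a^*\rightarrow]\rangle \circ \bigvee_i \langle\mul[b_i^*\otimes],\shf[b_i^*\rightarrow]\rangle = \bigvee_i \bigl(\langle\mul[a^*\otimes],\shf[a^*\rightarrow]\rangle \circ \langle\mul[b_i^*\otimes],\shf[b_i^*\rightarrow]\rangle\bigr)$ is established explicitly (a fact of independent interest), and no isomorphism bookkeeping is needed since all objects stay inside $S$.
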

\begin{proof}
  First, observe that $\circ$ given by~\eqref{eqn:compose} and restricted to
  $S$ is obviously commutative which follows directly using the commutativity
  of~$\otimes$ in $\mathbf{L}$. Furthermore,
  $\langle\mul[\bot],\shf[\bot]\rangle = \langle \mul[0^* \otimes], \shf[0^*
  \rightarrow]\rangle$ and
  $\langle\one,\one\rangle = \langle \mul[1^* \otimes], \shf[1^*
  \rightarrow]\rangle$ are the least and the greatest elements of $S$,
  respectively, cf. Remark~\ref{rem:extremal}\,(b). Therefore, $\mathbf{S}$
  is a bounded commutative integral partially-ordered monoid. Furthermore, it
  is also easy to see that for any $a,b \in L$, we have
  \begin{align}
    \langle\mul[a^*\otimes],\shf[a^*\rightarrow]\rangle
    \leqslant
    \langle\mul[b^*\otimes],\shf[b^*\rightarrow]\rangle
    \text{ if{}f }
    a^* \leq b^*.
    \label{eqn:leq_iff}
  \end{align}
  Moreover, $S$ is closed under arbitrary suprema---this follows from the fact
  that a truth-stressing hedge on $\mathbf{L}$
  satisfying~\eqref{ts:1}--\eqref{ts:idm} is an interior operator,
  i.e., $\bigvee\{a^*_i;\, i \in I\}$ is
  a fixed point of ${}^*$. Hence,
  \begin{align*}
    \textstyle\bigvee\bigl\{
    \bigl\langle\mul[a^*_i\otimes],\shf[a^*_i\rightarrow]\bigr\rangle;\,
    i \in I\bigr\} =
    \bigl\langle\mul[\bigvee\{a^*_i;\, i \in I\}\otimes],
    \shf[\bigvee\{a^*_i;\, i \in I\}\rightarrow]\bigr\rangle \in S.
  \end{align*}
  In order to show that $\circ$ and $\rightsquigarrow$ defined
  by~\eqref{eqn:L*_S_residuum} satisfy the adjointness property, we first
  show that $\circ$ is distributive over $\bigvee$ which is a condition
  equivalent to stating that $\circ$ has a residuum satisfying the
  adjointness property. For any $a \in L$ and any $b_i \in L$ ($i \in I$),
  our previous observations together with~\eqref{eqn:compose} and the fact
  that $\otimes$ is distributive over $\bigvee$ in $\mathbf{L}$ yield
  \begin{align*}
    &\langle\mul[a^*\otimes],\shf[a^*\rightarrow]\rangle
    \circ
    \textstyle\bigvee\bigl\{
    \bigl\langle\mul[b^*_i\otimes],\shf[b^*_i\rightarrow]\bigr\rangle;\,
    i \in I\bigr\} = \\
    &\langle\mul[a^*\otimes],\shf[a^*\rightarrow]\rangle
    \circ
    \bigl\langle\mul[\bigvee\{b^*_i;\, i \in I\}\otimes],
    \shf[\bigvee\{b^*_i;\, i \in I\}\rightarrow]\bigr\rangle = \\
    &
    \bigl\langle\mul[a^*\otimes\bigvee\{b^*_i;\, i \in I\}\otimes],
    \shf[(a^*\otimes\bigvee\{b^*_i;\, i \in I\})\rightarrow]\bigr\rangle = \\
    &
    \bigl\langle\mul[\bigvee\{a^*\otimes b^*_i;\, i \in I\}\otimes],
    \shf[\bigvee\{a^*\otimes b^*_i;\, i \in I\}\rightarrow]\bigr\rangle = \\
    &\textstyle\bigvee\bigl\{
    \langle\mul[a^*\otimes b^*_i \otimes],
      \shf[(a^*\otimes b^*_i)\rightarrow]\rangle;\,
    i \in I\bigr\} = \\
    &\textstyle\bigvee\bigl\{
    \langle\mul[a^*\otimes],\shf[a^*\rightarrow]\rangle
    \circ
    \bigl\langle\mul[b^*_i\otimes],\shf[b^*_i\rightarrow]\bigr\rangle;\,
    i \in I\bigr\}.
  \end{align*}
  Hence, there is uniquely given $\twoheadrightarrow$ such that $\circ$ and
  $\twoheadrightarrow$ satisfy~\eqref{eqn:adj}.
  Using standard properties of complete residuated lattices~\cite{GaJiKoOn:RL},
  we have
  \begin{align*}
    &\langle\mul[a^*\otimes],\shf[a^*\rightarrow]\rangle
      \!\twoheadrightarrow\!
    \langle\mul[b^*\otimes],\shf[b^*\rightarrow]\rangle =
    \\
    &\textstyle\bigvee\{
    \langle\mul[c^*\otimes],\shf[c^*\rightarrow]\rangle \in S;\,
    \langle\mul[a^*\otimes],\shf[a^*\rightarrow]\rangle
    \circ
    \langle\mul[c^*\otimes],\shf[c^*\rightarrow]\rangle
    \leqslant
    \langle\mul[b^*\otimes],\shf[b^*\rightarrow]\rangle\}.
  \end{align*}
  Therefore, it remains to show that exactly $\rightsquigarrow$
  defined by~\eqref{eqn:L*_S_residuum} is the
  residuum of $\circ$ in $\mathbf{L}$, i.e., $\rightsquigarrow$ coincides
  with $\twoheadrightarrow$. Using~\eqref{eqn:leq_iff} together with the
  definition of $\leqslant$ and the latter observation, we get
  \begin{align*}
    &\langle\mul[a^*\otimes],\shf[a^*\rightarrow]\rangle
    \!\twoheadrightarrow\!
    \langle\mul[b^*\otimes],\shf[b^*\rightarrow]\rangle =
    \\
    &\textstyle\bigvee\{
    \langle\mul[c^*\otimes],\shf[c^*\rightarrow]\rangle \in S;\,
    \langle\mul[a^*\otimes],\shf[a^*\rightarrow]\rangle
    \circ
    \langle\mul[c^*\otimes],\shf[c^*\rightarrow]\rangle
    \leqslant
    \langle\mul[b^*\otimes],\shf[b^*\rightarrow]\rangle\} =
    \\
    &\textstyle\bigvee\{
    \langle\mul[c^*\otimes],\shf[c^*\rightarrow]\rangle \in S;\,
    \langle\mul[a^*\otimes c^* \otimes],
    \shf[(a^* \otimes c^*)\rightarrow]\rangle
    \leqslant
    \langle\mul[b^*\otimes],\shf[b^*\rightarrow]\rangle\} =
    \\
    &\textstyle\bigvee\{
    \langle\mul[c^*\otimes],\shf[c^*\rightarrow]\rangle \in S;\,
    a^* \otimes c^* \leq b^*\} =
    \\
    &\textstyle\bigvee\{
    \langle\mul[c^*\otimes],\shf[c^*\rightarrow]\rangle;\,
    a^* \otimes c^* \leq b^* \text{ and } c \in L\}
  \end{align*}
  which shows that $\rightsquigarrow$ defined by~\eqref{eqn:L*_S_residuum}
  coincides with $\twoheadrightarrow$.
\end{proof}

\section{Conclusion}\label{sec:conclusion}
In this paper, we have extended and developed the theory of closure operators
and corresponding closure systems parameterized by systems of isotone Galois
connections. We have shown that the parameterizations may be viewed from two
basic standpoints: First, as requirements on stronger isotony conditions of
closure operators. Second, as requirements on stronger idempotency conditions
of closure operators. From the point of view of the corresponding closure
systems, the parameterizations represent additional requirements on
relationship between elements in a closure system: The presence of an element
in a parameterized closure system implies the presence of other elements
obtained by applying upper adjoints of the utilized parameterization. In
addition to the investigation of properties of the closure structures, we
have presented two extended examples of areas where such operators naturally
appear. As we have shown, the operators appear as operators of syntactic
consequence in various types of non-classical logics of if-then rules which,
as special cases, include logics of fuzzy and temporal if-then rules. We have
also shown that such operators naturally emerge in the analysis of
dependencies in data and, again, as special cases, several approaches in the
analysis of graded and temporal data can be seen as special cases.

Future research in several directions is promising. First, it may be
interesting to further investigate properties of structures of
$\mathbf{L}$-parameterizations based on ways in which the parameterizations
can be induced. The first step in this direction can be found in
Section~\ref{sec:props}, most notably in Theorem~\ref{th:L*_S_reslat} saying
that for parameterizations induced by hedges the structure is a complete
residuated lattice. Second, a thorough analysis of properties which are
necessary and sufficient in order to establish analogs of results known in
further applications would be desirable. Third, the class of
parameterizations studied in this paper is much more general than those
studied earlier. Examples and applications in other areas than fuzzy and
temporal logics may be anticipated.

\subsubsection*{Acknowledgment}
Supported by grant no. \verb|P202/14-11585S| of the Czech Science Foundation.


\footnotesize
\bibliographystyle{amsplain}
\bibliography{cspsigc}

\end{document}